\documentclass{article}
\usepackage{graphicx} 
\usepackage{amsthm}
\usepackage{amsfonts}
\usepackage{amsmath}
\usepackage{amssymb}
\usepackage[margin=1in]{geometry}
\usepackage{todonotes}
\usepackage{cite}

\newtheorem{theorem}[]{Theorem}
\newtheorem{lemma}[]{Lemma}
\newtheorem{prop}[]{Proposition}
\newtheorem{definition}[]{Definition}
\newtheorem{corollary}{Corollary}
\newtheorem{oq}{Open Question}

\newcommand{\dist}{\text{dist}}
\newcommand{\Oish}{\widetilde{O}}
\newcommand{\Omegaish}{\widetilde{\Omega}}
\newcommand{\hopdist}{\texttt{hopdist}}
\newcommand{\rr}{\mathbb{R}}
\newcommand{\zz}{\mathbb{Z}}

\newcommand{\eps}{\varepsilon}

\usepackage{todonotes}
\usepackage{xcolor}

\title{Folklore Sampling is Optimal for Exact Hopsets: Confirming the $\sqrt{n}$ Barrier\footnote{This work was supported by NSF:AF 2153680.}}
\author{
\begin{tabular}{c c}
   Greg Bodwin  & Gary Hoppenworth  \\
   University of Michigan & University of Michigan
   \\
     \texttt{bodwin@umich.edu}  &  \texttt{garytho@umich.edu} \\
\end{tabular}
}
\date{}

\begin{document}

\maketitle





\begin{abstract}
For a graph $G$, a \emph{$D$-diameter-reducing exact hopset} is a small set of additional edges $H$ that, when added to $G$, maintains its graph  metric but guarantees that all node pairs have a shortest path in $G \cup H$ using at most $D$ edges.
A \emph{shortcut set} is the analogous concept for reachability rather than distances.
These objects have been studied since the early '90s, due to applications in parallel, distributed, dynamic, and streaming graph algorithms.

For most of their history, the state-of-the-art construction for either object was a simple folklore algorithm, based on randomly sampling nodes to hit long paths in the graph.
However, recent breakthroughs of Kogan and Parter [SODA '22] and Bernstein and Wein [SODA '23] have finally improved over the folklore algorithm for shortcut sets and for $(1+\eps)$-approximate hopsets.
For either object, it is now known that one can use $O(n)$ hop-edges to reduce diameter to $\Oish(n^{1/3})$, improving over the folklore diameter bound of $\Oish(n^{1/2})$.
The only setting in which folklore sampling remains unimproved is for exact hopsets.
Can these improvements be continued?

We settle this question negatively by constructing graphs on which any exact hopset of $O(n)$ edges has diameter $\Omegaish(n^{1/2})$.
This improves on the previous lower bound of $\Omega(n^{1/3})$ by Kogan and Parter [FOCS '22].
Using similar ideas, we also polynomially improve the current lower bounds for shortcut sets, constructing graphs on which any shortcut set of $O(n)$ edges reduces diameter to $\Omegaish(n^{1/4})$.
This improves on the previous lower bound of $\Omega(n^{1/6})$ by Huang and Pettie [SIAM J.\ Disc.\ Math.\ '18].
We also extend our constructions to provide lower bounds against $O(p)$-size exact hopsets and shortcut sets for other values of $p$; in particular, we show that folklore sampling is near-optimal for exact hopsets in the entire range of parameters $p \in [1, n^2]$.
\end{abstract}

\thispagestyle{empty}
\clearpage
\setcounter{page}{1}

\section{Introduction}

In graph algorithms, many basic problems ask to compute information about the shortest path distances or reachability relation among node pairs in an input graph.
In parallel, distributed, dynamic, or streaming settings, algorithm complexity often scales with the \emph{diameter} of the graph, i.e., the smallest integer $d$ such that every connected node pair has a path of at most $d$ edges.
Therefore, a popular strategy to optimize these algorithms is to add a few edges to the input graph in preprocessing, with the goal to reduce diameter while leaving the relevant distance or reachability information unchanged.
In the context of reachability, this set of additional edges is called a \emph{shortcut set}.

\begin{definition} [Shortcut Sets]
For a directed graph $G$, a $D$-diameter reducing shortcut set is a set of additional edges $H$ such that:
\begin{itemize}
\item Every edge $(u, v) \in H$ is in the transitive closure of $G$; that is, there exists a $u \leadsto v$ path in $G$.

\item For every pair of nodes $(s, t)$ in the transitive closure of $G$, there exists an $s \leadsto t$ path in $G \cup H$ using at most $D$ edges.
\end{itemize}
\end{definition}

Shortcut sets were introduced by Thorup \cite{Thorup92}, after they were used implicitly in prior work.
Many algorithmic applications of shortcut sets and their relatives were discovered in the following years \cite{UY91, KS97, HKN14a, HKN14b, HKN15, FN18, JLS19, Fineman19, GW20, BGW20, CFR20, KS21, ASZ20}, but actual \emph{constructions} of shortcut sets were elusive.
For most of their history, essentially the only known construction was the following simple algorithm: randomly sample a set $S$ of $|S|=n^{1/2}$ vertices, and add a shortcut edge between each pair of sampled nodes that lie in the transitive closure of the input graph.
To argue correctness: for any nodes $s,t$ in the graph where the shortest path $\pi(s, t)$ has length $\gg \Omegaish(n^{1/2})$, with high probability we sample nodes $u, v$ in $S$ that respectively hit a prefix and suffix of $\pi(s, t)$ of length $\Oish(n^{1/2})$.
Using the added shortcut edge $(u, v)$, we obtain an $s \leadsto t$ path of length $\Oish(n^{1/2})$.
This analysis gives:
\begin{theorem} [Folklore, \cite{UY91}]
Every $n$-node graph has a $\Oish(n^{1/2})$-diameter-reducing shortcut set on $O(n)$ edges.
\end{theorem}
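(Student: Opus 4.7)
The plan is to follow the random sampling approach sketched in the paragraph preceding the theorem. The construction is: sample a set $S \subseteq V(G)$ of size $|S| = \lceil n^{1/2} \rceil$ uniformly at random, and let $H := \{(u,v) \in S \times S : v \text{ is reachable from } u \text{ in } G\}$. The edge count bound $|H| \leq |S|^2 = O(n)$ is then immediate, so the nontrivial task is to bound the diameter of $G \cup H$ with high probability.

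For correctness, I would fix a pair $(s,t)$ in the transitive closure of $G$ and a shortest $s \leadsto t$ path $\pi$ in $G$, then set $D := c \cdot n^{1/2} \log n$ for a sufficiently large constant $c$. If $|\pi| \leq D$ then $\pi$ itself already certifies the desired bound, so assume $|\pi| > D$ and let $P, Q$ be the prefix and suffix of $\pi$ of $D$ nodes each. Because $|P|, |Q| \geq n^{1/2} \log n$ and $|S| \geq n^{1/2}$, a routine calculation gives $\Pr[S \cap P = \emptyset] \leq (1 - D/n)^{|S|} \leq n^{-\Omega(c)}$, and likewise for $Q$. Picking $u \in S \cap P$ and $v \in S \cap Q$, the node $v$ is reachable from $u$ in $G$ via the sub-path of $\pi$ between them, so the edge $(u,v)$ lies in $H$. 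Concatenating the $s \leadsto u$ prefix of $\pi$ (at most $D$ edges), the shortcut edge $(u,v)$, and the $v \leadsto t$ suffix of $\pi$ (at most $D$ edges) then yields an $s \leadsto t$ walk of length $2D + 1 = \Oish(n^{1/2})$ in $G \cup H$. A union bound over all $O(n^2)$ reachable pairs, combined with an appropriately large choice of $c$, shows that this diameter bound holds for every pair simultaneously with high probability.

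The main \emph{obstacle}, such as it is, is really just a careful tuning of parameters: the edge budget forces $|S| \leq O(n^{1/2})$, while the high-probability window-hitting argument across $O(n^2)$ pairs forces $D \geq \Omegaish(n^{1/2})$. These two bounds meet exactly at the folklore setting, and it is precisely the apparent tightness of this tradeoff for exact hopsets that motivates the lower bounds proved in the body of this paper. Notably, the construction is oblivious to $G$ beyond vertex sampling and transitive-closure queries, so no structural insight into $G$ is exploited in the upper bound.
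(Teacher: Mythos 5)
Your proposal is correct and follows essentially the same route as the paper's own justification: sample $\Theta(n^{1/2})$ vertices, connect reachable sampled pairs, and argue that with high probability a sampled node hits both a prefix and a suffix of length $\Oish(n^{1/2})$ of any long path, with a union bound over all pairs. The parameter tuning and hitting-set calculation you spell out are exactly the details the paper leaves implicit in its sketch preceding the theorem.
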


Remarkably, despite its simplicity, the diameter bound of $\Oish(n^{1/2})$ achieved by the folklore sampling algorithm remained nearly unimproved for 30 years (log factors were removed in \cite{BRR10}, improving the diameter bound to $O(n^{1/2})$).
This led researchers to wonder if the bound could be improved in the exponent at all.
This was finally answered affirmatively in a recent breakthrough of Kogan and Parter \cite{KP22a}:
\begin{theorem} [~\cite{KP22a} ]
The folklore algorithm is \textbf{polynomially suboptimal} for shortcut sets.
In particular, every $n$-node graph has a $\Oish(n^{1/3})$-diameter-reducing shortcut set on $O(n)$ edges.
\end{theorem}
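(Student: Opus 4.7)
The plan is to reduce to the DAG setting via condensation (within a strongly connected component, short hops are easy to install) and then target a hop-diameter of $d := n^{1/3}$ with an $O(n)$-edge shortcut set. The folklore algorithm samples $\Oish(\sqrt{n})$ nodes and installs all transitively-reachable pairs among them, yielding $O(n)$ edges and diameter $\Oish(\sqrt{n})$. To break this exponent, one naturally wants to sample at the denser rate $|S| = \Oish(n^{2/3})$, so that w.h.p.\ every path of length $\geq n^{1/3}$ contains a sampled vertex; but installing all $|S|^2$ pairwise shortcuts among $S$ would cost $\Oish(n^{4/3})$ edges, far over budget. The core technical task thus becomes: given a sample $S$ of size $\Oish(n^{2/3})$, install an $O(n)$-edge shortcut set that reduces the hop-diameter of the transitive closure \emph{restricted to} $S$ down to $\Oish(n^{1/3})$.

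A natural first attempt is to nest the sampling: draw $T \subset S$ with $|T| = \Oish(n^{1/3})$ and install all pairwise transitive-closure shortcuts among $T$, costing only $|T|^2 = \Oish(n^{2/3})$ edges. One then hopes that on any long $s \rightsquigarrow t$ path in $G$, roughly $\Oish(n^{1/3})$ hops suffice to reach some vertex $u \in S$ along the path, a single $T$--$T$ shortcut jumps across to a vertex $v \in S$ close to $t$, and another $\Oish(n^{1/3})$ hops finish the traversal. If one could always route through $T$ at both ends, this would give the bound.

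The main obstacle, and the heart of the difficulty, is that this nesting does not directly provide the jump from $u \in S$ to $v \in S$: $u$ and $v$ need not themselves lie in $T$, and even if they do, the hop-structure of $G \cup H$ (not $G$) is what governs traversal, so hitting probabilities along the original $G$-path do not straightforwardly translate into short hop-paths in the shortcutted graph. I would try to close this gap using a deterministic structural ingredient in addition to the random sample---most naturally a \emph{minimum path cover} of the DAG (via Dilworth duality): install dedicated intra-chain shortcuts along each chain in the cover, using $\Oish(n/d)$ edges per chain, and then handle inter-chain navigation via a sample whose size is tuned to the number of chains. If the cover has $k$ chains then by Dilworth the maximum antichain also has size $k$, and one gains leverage from a case analysis on small vs.\ large $k$, with the two regimes plausibly balancing at exactly $d = n^{1/3}$. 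The delicate step I expect to dominate the technical work is controlling the inter-chain shortcuts within the $O(n)$ edge budget: this likely requires showing that an appropriately biased subsample hits most long chains near their endpoints while also covering the antichain structure densely enough that any two chains can be bridged in $\Oish(n^{1/3})$ hops.
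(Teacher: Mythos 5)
There is a genuine gap, and it is worth noting first that the paper you were given does not prove this statement at all: it is quoted from Kogan and Parter [SODA '22], and the paper only remarks that their construction works by sampling vertices \emph{and} sampling from a set of carefully-chosen paths. So the relevant comparison is with the Kogan--Parter construction itself, and your proposal stops short of it. You correctly identify the right parameter regime (a vertex sample $S$ of size $\Oish(n^{2/3})$ so that every path of length $n^{1/3}$ is hit, with the obstruction that all-pairs shortcuts on $S$ cost $\Oish(n^{4/3})$ edges), and you correctly reject the nested sample $T\subset S$. But the step that carries all the content --- how two sampled vertices $u,v\in S$ that are far apart get connected by $\Oish(1)$ or $\Oish(n^{1/3})$ hops using only $O(n)$ added edges --- is exactly the step you leave as ``I would try,'' and the concrete mechanism you sketch does not work. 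A minimum path cover of a DAG can have $\Theta(n)$ chains (an antichain is the extreme case), so ``dedicated intra-chain shortcuts using $\Oish(n/d)$ edges per chain'' can cost up to $\Oish(n^2/d)$ edges; and Dilworth's dual gives you a large antichain in that regime, which does not obviously help, since incomparable pairs never needed shortcuts and the long comparable pairs are still unserved. The case analysis you hope ``plausibly balances at $d=n^{1/3}$'' is precisely the missing proof.

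For contrast, the mechanism in Kogan--Parter is different in a way your budget accounting cannot reproduce: the shortcut edges are not intra-chain or $S\times S$ edges, but go \emph{between} the vertex sample and a sampled collection of paths --- for each sampled vertex $v$ and each sampled path $P$, one adds an edge from $v$ to the earliest vertex of $P$ reachable from $v$, and from the latest vertex of $P$ that reaches $v$ back to $v$. The edge budget is then (size of vertex sample) $\times$ (number of sampled paths) $= \Oish(n^{2/3})\cdot\Oish(n^{1/3}) = \Oish(n)$, and the diameter analysis shows that any long path in $G$ has a prefix and suffix hit by sampled vertices and a middle portion that shares a sampled path with them, so three or so shortcut hops plus $\Oish(n^{1/3})$ original hops suffice. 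Your write-up never arrives at this (or any other) bridging device, so as it stands the proposal is a statement of the difficulty rather than a proof.
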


Kogan and Parter proved this theorem via an elegant construction based on sampling  vertices \emph{and} sampling from a set of carefully-chosen paths from the input graph.
Following this, there are two clear avenues for further progress.
First, the new diameter bound of $\Oish(n^{1/3})$ is still not necessarily tight.
It was still conceivable to improve diameter as far as $O(n^{1/6})$, at which point we encounter a lower bound construction of Huang and Pettie \cite{HP18} (improving on a classic construction of Hesse \cite{Hesse03}).
Second, many algorithms aim to compute exact or approximate \emph{shortest paths} of an input graph, rather than \emph{any path} as in the case of shortcut sets/reachability.
These algorithms benefit from shortcut-set-like structures that more strongly reduce the number of edges along (near-)shortest paths in the input graph.
Such a structure is called a \emph{hopset}:

\begin{definition} [Hopsets]
For a graph $G$ and $\eps \ge 0$, a $D$-diameter reducing $(1+\eps)$ hopset is a set of additional edges $H$ such that:
\begin{itemize}
\item Every edge $(u, v) \in H$ has weight $w(u, v) := \dist_G(u, v)$.

\item For every pair of nodes $(s, t)$ in the transitive closure of $G$, there exists an $s \leadsto t$ path $\pi(s, t)$ in $G \cup H$ that uses at most $D$ edges, and which satisfies $w(\pi(s, t)) \le (1+\eps) \cdot \dist_G(s, t)$.
\end{itemize}
When $\eps=0$, the path $\pi(s, t)$ is required to be an \emph{exact} shortest path in $G \cup H$, so we call $H$ an exact hopset.
\end{definition}

A nice feature of the folklore sampling algorithm is that it extends immediately to hopsets with no real changes.
This yields:
\begin{theorem} [Folklore]
Every $n$-node graph has a $\Oish(n^{1/2})$-diameter-reducing (exact or $(1+\eps)$) hopset on $O(n)$ edges.
\end{theorem}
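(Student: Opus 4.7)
The plan is to use the direct analogue of the folklore shortcut-set sampling argument, taking care that the added edges preserve \emph{exact} distances rather than only reachability. Let $G$ be an $n$-node (possibly weighted, possibly directed) graph, and for each pair $(s,t)$ in the transitive closure of $G$ fix an arbitrary shortest path $\pi(s,t)$ in $G$. The construction: sample a set $S \subseteq V(G)$ uniformly at random of size $\Theta(n^{1/2} \log n)$, and define
\[
H \;:=\; \bigl\{\, (u,v) \;:\; u,v \in S,\; v \text{ reachable from } u \text{ in } G \,\bigr\},
\]
where each edge $(u,v) \in H$ is assigned weight $w(u,v) := \dist_G(u,v)$. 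Then $|H| \le |S|^2 = \Oish(n)$, matching the stated bound up to log factors. (To hit $O(n)$ on the nose one reduces $|S|$ to $\Theta(n^{1/2})$ and tolerates the resulting slightly larger polylog in the diameter bound; the analysis is identical.)

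The core step is a hitting-set argument. Call a subpath of some $\pi(s,t)$ \emph{long} if it contains at least $c \cdot n^{1/2} \log n$ vertices, for a sufficiently large constant $c$. For any fixed long subpath, the probability that $S$ misses it is at most $(1 - |S|/n)^{c n^{1/2} \log n} \le n^{-\Omega(c)}$. Since there are at most $n^2$ pairs $(s,t)$ and at most $n$ starting positions on each $\pi(s,t)$, a union bound over all $O(n^3)$ long subpaths shows that with high probability every long subpath of every $\pi(s,t)$ contains a vertex of $S$.

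Condition on this event. Fix any pair $(s,t)$ and consider $\pi(s,t)$. If $|\pi(s,t)| \le c \cdot n^{1/2} \log n$ the path already has $\Oish(n^{1/2})$ hops in $G \subseteq G \cup H$. Otherwise, let $u_1, u_2, \ldots, u_k$ be the sampled vertices appearing along $\pi(s,t)$ in order. By the hitting property, consecutive $u_i$'s are at most $c \cdot n^{1/2} \log n$ hops apart along $\pi(s,t)$, so in particular $k = \Oish(n^{1/2})$. Now consider the $s \leadsto t$ path in $G \cup H$ that walks along $\pi(s,t)$ from $s$ to $u_1$, then uses the hopset edges $(u_1,u_2), (u_2,u_3), \ldots, (u_{k-1}, u_k)$, then walks along $\pi(s,t)$ from $u_k$ to $t$. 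The total number of edges is $\Oish(n^{1/2})$, as desired. Crucially, because each subpath of a shortest path is itself shortest, $w(u_i, u_{i+1}) = \dist_G(u_i,u_{i+1})$ is exactly the length of the $u_i \leadsto u_{i+1}$ segment of $\pi(s,t)$; hence the total weight of the constructed path equals $w(\pi(s,t)) = \dist_G(s,t)$. This gives an exact hopset, which in particular is also a $(1+\eps)$ hopset for every $\eps \ge 0$.

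The only step requiring any care is the union bound, where one must be mindful that the probabilities are over the random choice of $S$ while the collection of paths $\{\pi(s,t)\}$ is fixed in advance; the argument above makes this explicit. No further obstacle arises, which is in fact the conceptual point of the paper: for three decades this one-paragraph argument has been the best known construction for exact hopsets.
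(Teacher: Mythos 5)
Your proposal is correct and is essentially the paper's own argument: the paper only sketches the folklore construction (sample $\Theta(n^{1/2})$ vertices, connect sampled pairs, and use the hitting property on long prefixes/suffixes of fixed shortest paths), noting that it "extends immediately to hopsets," and your write-up just fills in that sketch, including the observation that subpaths of shortest paths are shortest so the hop-edges preserve exact distances. No substantive difference in approach.
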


Thus, the hunt is back on for a hopset construction algorithm that beats folklore sampling.
Kogan and Parter partially achieved this goal: they extended their shortcut set construction to also show a new diameter bound of $\Oish(n^{2/5})$ for $(1+\eps)$ hopsets \cite{KP22a}.
Bernstein and Wein \cite{BW23} then developed a clever extension of the Kogan-Parter construction, further improving the bound for $(1+\eps)$ hopsets to match the one achieved for shortcut sets:
\begin{theorem}[\cite{KP22a, BW23} ] 
The folklore algorithm is \textbf{polynomially suboptimal} for $(1+\eps)$ hopsets.
In particular, for all fixed $\eps > 0$, every (possibly directed and weighted) $n$-node graph has a $\Oish(n^{1/3})$-diameter-reducing-shortcut set on $O(n)$ edges.
\end{theorem}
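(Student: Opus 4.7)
The plan is to construct the hopset by following the two-tier sampling scheme of Kogan--Parter, then to lift their unweighted shortcut-set analysis to the weighted $(1+\eps)$-hopset setting using the approximate-alignment idea of Bernstein--Wein. Set the target hop-diameter at $h = \widetilde{\Theta}(n^{1/3})$ and fix, for each pair $(s,t)$, a canonical shortest path $\pi(s,t)$. Our goal is to produce an edge set of size $O(n)$ after which every such $\pi(s,t)$ has a $(1+\eps)$-approximate replacement using $\widetilde{O}(h)$ hops.

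The first tier is the standard folklore step, but with a smaller sample: choose a random vertex set $R$ with $|R| = \widetilde{O}(n/h) = \widetilde{O}(n^{2/3})$ and add, for each $r \in R$, shortcut edges from $r$ to a carefully chosen set of vertices reachable within hop-distance $h$ in $G$. Rather than taking all pairs of sampled vertices (which would blow past $O(n)$), one restricts to a sparse spanner-like substructure, which standard arguments show suffices for any $\pi(s,t)$ that contains $\Omega(h)$ samples. The second tier handles paths that, against the folklore bound, do not accumulate enough sampled vertices. Following Kogan--Parter, iteratively extract a collection $\mathcal{P}$ of at most $\widetilde{O}(n^{1/3})$ ``heavy'' shortest paths, each of length comparable to $h$, using a greedy hitting-set / potential argument so that any shortest path not yet handled by $R$ must approximately overlap with some $P \in \mathcal{P}$. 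Subdividing each $P$ into $\widetilde{O}(h^{1/2})$ evenly spaced waypoints and adding a shortcut between every pair of waypoints on the same $P$ costs a total of $\widetilde{O}(n^{1/3} \cdot h) = \widetilde{O}(n^{2/3})$ edges, safely within budget.

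The analysis fixes an arbitrary pair $(s,t)$ and argues by cases: either $\pi(s,t)$ contains $\Omega(h)$ vertices from $R$, in which case the first-tier shortcuts splice it into $\widetilde{O}(h)$ hops, or $\pi(s,t)$ fails to be hit and so, by the construction of $\mathcal{P}$, a long prefix (or middle segment, or suffix) of $\pi(s,t)$ aligns with some $P \in \mathcal{P}$; the waypoint shortcuts on $P$ then carry us across that segment in few hops, and the remaining pieces are handled recursively. The main obstacle, and the technical heart of the argument, is making this ``alignment'' step work in the weighted, approximate setting: in Kogan--Parter's reachability world two paths either coincide on a vertex or they don't, but in a weighted graph $\pi(s,t)$ and $P$ may share a long shared region while using slightly different intermediate edges of the same total weight. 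The Bernstein--Wein contribution is to formalize an approximate-intersection lemma and show that deviating from $\pi(s,t)$ onto $P$ for a stretch costs at most a $(1+\eps)$ multiplicative factor in total weight; combining this with a careful union bound over the $\widetilde{O}(h)$ ``switching'' events along $\pi(s,t)$ closes the analysis and yields the claimed $\widetilde{O}(n^{1/3})$ hop-diameter at an $O(n)$ edge budget.
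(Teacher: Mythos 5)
First, note that the paper you are reading does not prove this statement at all: it is quoted as background and attributed to \cite{KP22a, BW23}, so there is no in-paper proof to compare against. Your proposal is therefore an attempted reconstruction of those works' argument, and as written it has genuine gaps rather than just missing polish.

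The most concrete problem is your first tier. With $|R| = \Oish(n^{2/3})$ sampled vertices you cannot afford edges between all pairs (that is $\Theta(n^{4/3})$), and your fix --- ``restrict to a sparse spanner-like substructure, which standard arguments show suffices'' --- is exactly the step that fails in the directed weighted setting: general directed (or even undirected) weighted graphs do not admit $(1+\eps)$-approximate distance-preserving subgraphs on $\Oish(n^{2/3})$ terminals with only $O(n)$ edges, and no ``standard argument'' supplies one. The actual Kogan--Parter/Bernstein--Wein constructions avoid all-pairs (or spanner) connections among samples entirely: they sample roughly $\Oish(n^{2/3})$ vertices \emph{and} roughly $\Oish(n^{1/3})$ paths from a suitable path collection, and add only one or two hop-edges per (sampled vertex, sampled path) pair --- e.g.\ from a sampled vertex to the first/last reachable (or distance-appropriate) point on each sampled path --- which is where the $n^{2/3} \cdot n^{1/3} = n$ edge budget comes from. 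Your second tier has the same character: the claim that a greedy hitting-set/potential argument extracts only $\Oish(n^{1/3})$ ``heavy'' paths such that every unhit shortest path approximately aligns with one of them is asserted, not argued, and the ``approximate-intersection lemma'' you attribute to Bernstein--Wein --- which is the technical heart of lifting the reachability argument to weighted $(1+\eps)$ hopsets (in their paper it involves working with shortest-path distances to/from the sampled paths and charging detours carefully, not just a union bound over switching events) --- is named but never established. As it stands the proposal is a plausible outline of the cited results, but the steps that carry the actual difficulty are either missing or, in the case of the spanner step, would not go through.
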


Still, both of these improvements required $\eps > 0$, and so neither extended to \emph{exact} hopsets, which still remained as the last holdout where the folklore algorithm had not been improved.
The only progress for exact hopsets came on the lower bounds side, where a separate work of Kogan and Parter \cite{KP22a} showed a diameter lower bound of $\Omega(n^{1/3})$ (see also \cite{BHT22}).
Was it possible to translate the recent progress on shortcut sets and $(1+\eps)$ hopsets to exact hopsets, and finally move past folklore?

\subsection{Our Results}

\begin{figure} [h]
\begin{center}
\begin{tikzpicture}
  \draw[thick,->] (0,0) -- (7,0) node[below] {};
  \draw[thick,->] (0,0) -- (0,7) node[left] {};

  \coordinate (A) at (0,6) ;
  \fill (A) circle (2pt) node [above right = -0.1] {\small \color{gray} folklore};
  \coordinate (B) at (3,4);
  \fill (B) circle (2pt) node [above] {\small \color{gray} \cite{BW23}};
  \coordinate (C) at (6,4);
  \fill (C) circle (2pt) node [above] {\small \color{gray} \cite{KP22a}};

  \coordinate (D) at (0,6);
  \fill (D) circle (2pt);
  \coordinate (E) at (3,3);
  \fill (E) circle (2pt);
  \coordinate (F) at (6,3);
  \fill (F) circle (2pt);

  \coordinate (G) at (0,4);
  \fill (G) circle (2pt) node [above right = -0.1] {\small \color{gray} \cite{KP22}};
  \coordinate (H) at (3,2);
  \fill (H) circle (2pt) node [above] {\small \color{gray} \cite{HP18}};
  \coordinate (I) at (6,2);
  \fill (I) circle (2pt) node [above] {\small \color{gray} \cite{HP18}};

  \draw[blue] (A) -- (B) -- (C) node [below] {upper bounds};
  \draw[red] (D) -- (E) -- (F) node [below] {new lower bounds};
  \draw[red!50!gray, dashed] (G) -- (H) -- (I) node[below] {old lower bounds};

  \node at (0, -0.5) {Exact Hopsets};
  \node at (3, -0.5) {$(1+\eps)$ Hopsets};
  \node at (6, -0.5) {Shortcut Sets};
  \node at (3, -1.5) {\bf $O(n)$-size Edge Set vs.\ Weighted Input Graph};

  \node at (-2, 3) {\bf Diameter};
  \node at (-0.5, 6) {$n^{1/2}$};
  \node at (-0.5, 4) {$n^{1/3}$};
  \node at (-0.5, 2) {$n^{1/6}$};
\end{tikzpicture}
\end{center}
\caption{\label{fig:nbounds} State-of-the-art bounds for $O(n)$-size hopsets and shortcut sets, before and after this paper
}
\end{figure}
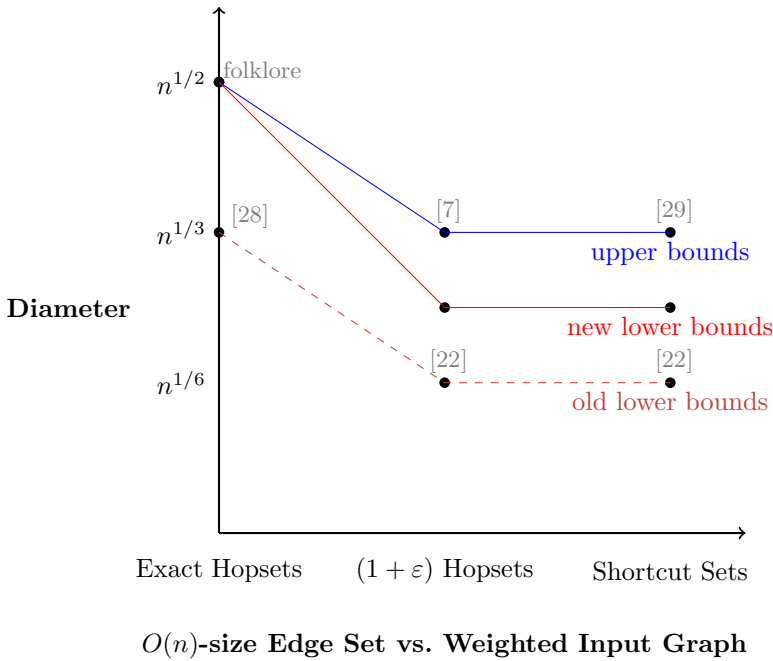

Our main results are polynomial improvements on the lower bounds for $O(n)$-size shortcut sets and hopsets, which we depict in Figure \ref{fig:nbounds}.
These lower bounds \emph{confirm} that the folklore algorithm for exact hopsets is essentially the right one, showing that its diameter bound is optimal up to $\log n$ factors:
\begin{theorem} [New] \label{thm:introeh}
The folklore algorithm is \textbf{near-optimal} for exact hopsets.
In particular, there are $n$-node graphs on which any exact hopset on $O(n)$ edges reduces diameter to $\Omegaish(n^{1/2})$.
\end{theorem}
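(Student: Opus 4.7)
The plan is to exhibit a weighted $n$-vertex graph $G$ along with a large collection of ``critical'' unique shortest paths in $G$, such that any exact hopset for $G$ on $O(n)$ edges fails to cover them efficiently enough to drive the hop-diameter below $\widetilde\Omega(n^{1/2})$.  I will follow the broad ``path-system'' strategy used by Kogan and Parter for their $\Omega(n^{1/3})$ lower bound, but I need to push both the construction and the counting argument considerably further.

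\textbf{Construction.}  I aim to build $G$ on $n$ vertices with a family $\mathcal{P} = \{\pi_1, \ldots, \pi_N\}$ of $N = \widetilde\Omega(n)$ paths, each of length $\Theta(n^{1/2})$ and each the unique shortest path between its endpoints.  Uniqueness will be enforced with a convex-weighting gadget---for example, layered edges with weights $f(\Delta) = \Delta^2$ in the inter-layer displacement $\Delta$---which forces every shortest path to follow a ``straight line'' of integer slope.  The key combinatorial property the construction must achieve is \emph{low pair-multiplicity}: for every ordered pair of vertices $(u, v)$, the number of $\pi_i \in \mathcal{P}$ containing $u$ and $v$ in order (with the matching unique shortest $u$-$v$ subpath) is $\widetilde O(1)$.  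A plain layered grid only supports $\Theta(n^{1/2})$ full-length lines on $n$ vertices, which is exactly the bottleneck behind the Kogan--Parter $\Omega(n^{1/3})$ barrier; to push past it I expect to embed a richer combinatorial incidence structure---for example, multiple families of integer slopes stacked across layers, or a geometry inspired by point-line designs---inside a carefully weighted graph.

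\textbf{Exactness and counting.}  For any exact hopset $H$, each edge $(u, v) \in H$ has weight $\dist_G(u, v)$, so whenever $(u, v)$ participates in a realized shortest $s$-$t$ path, both $u$ and $v$ must lie on $\pi(s, t)$ and the $u$-$v$ portion of $\pi(s, t)$ must coincide with the unique shortest $u$-$v$ path.  Low pair-multiplicity then bounds the number of $\pi_i \in \mathcal{P}$ that any single hop-edge can cover by $\widetilde O(1)$.  Now fix a $D$-hop realization of each $\pi_i$: since each contains at most $D$ unit-weight $G$-edges, its hop-edges carry total weight at least $|\pi_i| - D = \Theta(n^{1/2}) - D$.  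Summing over $\mathcal{P}$ and reorganizing,
\[
    N \cdot \bigl(\Theta(n^{1/2}) - D\bigr)
    \;\leq\;
    \sum_{e \in H} w(e)\cdot \bigl|\{i : e \in \mathrm{cover}(\pi_i)\}\bigr|
    \;\leq\;
    \widetilde O(|H|)\cdot n^{1/2},
\]
using $w(e) \leq \mathrm{diam}(G) = O(n^{1/2})$ and the per-edge multiplicity bound.  Rearranging gives $|H| \geq \widetilde\Omega\bigl(N(1 - D/n^{1/2})\bigr)$, and with $N = \widetilde\Omega(n)$ built into the construction with enough polylog slack, the constraint $|H| \leq O(n)$ forces $D = \widetilde\Omega(n^{1/2})$.

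\textbf{Main obstacle.}  The hard part is the construction: producing an $n$-vertex weighted graph that simultaneously supports $\widetilde\Omega(n)$ unique shortest paths of length $\Theta(n^{1/2})$ with $\widetilde O(1)$ pair-multiplicity.  This is precisely the parameter regime in which earlier constructions---grid-based or single-scale---get stuck at $\Omega(n^{1/3})$, and I expect the novelty of the paper's argument to reside essentially here: new combinatorial ingredients beyond a single convex-slope grid will be needed to populate $\mathcal{P}$ densely enough for the above counting to bite against $O(n)$-sized hopsets.
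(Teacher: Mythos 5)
Your overall skeleton (a system of long unique shortest paths plus a charging argument against hop edges) is the right genre, but the specific requirement you build everything on --- that \emph{every} vertex pair lies on $\widetilde{O}(1)$ critical paths --- is both unattainable in your parameter regime and, even if granted, too weak for your counting step. With $N$ paths of $\Theta(n^{1/2})$ vertices each on $n$ vertices, the average pair multiplicity is $\Theta(N\cdot n/n^2)=\Theta(N/n)$, so any uniform multiplicity bound must scale with whatever slack you put into $N$; you cannot make $N$ exceed the hopset budget by a larger factor while keeping multiplicity $\widetilde{O}(1)$. This breaks the arithmetic: in your inequality a single hop edge joining a high-multiplicity pair at hop distance $\Theta(n^{1/2})$ can absorb up to $\mu_{\max}\cdot\Theta(n^{1/2})$ units of the left-hand side, so a hopset of size $Cn$ defeats the bound once the constant $C$ exceeds a fixed threshold, and enlarging $N$ does not help because $\mu_{\max}\gtrsim N/n$ grows in lockstep. (There is also a smaller currency mismatch: with squared displacement weights the edge weights are not hop counts, so ``total weight at least $|\pi_i|-D$'' and ``$w(e)\le O(n^{1/2})$'' should be stated in terms of $\hopdist_G$ rather than $w$; exactness and uniqueness do let you do this, but your display mixes the two.)

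The paper's proof succeeds precisely by abandoning uniform low multiplicity. Its construction (Lemma \ref{lem:construction_lemma}) only requires the distance-weighted bound that at most $\ell/\hopdist_G(u,v)$ critical paths contain both $u$ and $v$ --- a single edge may lie on as many as $\ell\approx n^{1/2}$ paths --- and this is exactly what guarantees that each hop edge reduces the potential $\sum_{(s,t)}\hopdist_{G\cup H}(s,t)$ by at most $\ell$, uniformly over short and long hops (Lemma \ref{lem:no_good_hops}). Achieving that property is where the new ideas live: a small set of edge vectors $C$ decoupled from a much larger set of direction vectors $D$, greedy layer-by-layer path generation, random $\eps$-shifts of the layers for symmetry breaking (Lemma \ref{lem:node_split}: paths whose direction vectors differ by $\delta$ split at each shared node with probability about $\delta$, hence intersect on $O(\log n/\delta)$ nodes), and squared edge weights so that the greedily generated paths are unique shortest paths (Proposition \ref{prop:squared_prop}). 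Your proposal defers exactly this to an unspecified ``richer incidence structure'' aimed at uniform $\widetilde{O}(1)$ multiplicity, which is the wrong target; so there is a genuine gap both in the missing construction and in the counting framework you would feed it into.
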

This provides a strong separation between exact and $(1+\eps)$ hopsets.
Our lower bound holds for both directed and undirected input graphs, but it critically uses edge weights and thus does not extend to unweighted graphs as well.
Our lower bound in the body of the paper (Theorem \ref{thm:eh}) is actually a bit more general than the one stated in Theorem \ref{thm:introeh} above.
Although it is popular to focus on hopsets of size $O(n)$, one can also ask about hopsets of size $O(p)$ for any parameter $p \in [1, n^2]$.
The folklore sampling algorithm extends to any $p$, by adjusting the size of the sampled vertex set to $p^{1/2}$.
Our generalized lower bound establishes that the diameter bound from folklore sampling is near-optimal in the entire range of parameters.

We now turn to shortcut sets.
Our new lower bound is the following:
\begin{theorem} [New]
There are $n$-node directed input graphs on which any shortcut set on $O(n)$ edges reduces diameter to $\Omegaish(n^{1/4})$.
\end{theorem}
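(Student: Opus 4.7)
The plan is to adapt the exact hopset lower bound (Theorem \ref{thm:introeh}) to the shortcut-set setting. The key difference between the two problems is rigidity: in the exact hopset case, every hopset edge $(u,v)$ is pinned to weight $\dist_G(u,v)$, so it can help only those critical pairs whose shortest paths traverse it at a specific offset; in the shortcut-set case, the only constraint on a candidate edge is that $u \leadsto v$, so rigidity must instead be manufactured out of the reachability relation itself. I expect the construction to proceed in three stages.

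Stage 1 (setup). Start from the graph family underlying Theorem \ref{thm:introeh}, which contains $\Omegaish(n)$ critical node pairs $(s_i, t_i)$ whose shortest paths have length $\Omegaish(n^{1/2})$ and are arranged so that, in the weighted setting, any single hopset edge can assist only one critical pair. Stage 2 (reachability-based rigidity). Replace the weighted rigidity with a purely combinatorial one in which the transitive closure is highly restricted between distinct critical paths. A natural route is to embed each critical path into a layered Hesse-/Huang--Pettie-style gadget so that reachability between vertices of different critical paths requires their "coordinates" to align in a sparse combinatorial pattern. This forces any legally addable shortcut $(u,v)$ to be compatible with only a bounded number of critical pairs. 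Stage 3 (counting). Apply an averaging argument in the spirit of the exact hopset proof: if every critical pair is to have hop-distance at most $D$, then summing the required shortcutting effort over the $\Omegaish(n)$ pairs, and dividing by the per-edge coverage bound established in Stage 2, yields a lower bound on the number of shortcut edges needed. Tuning so the vertex budget remains $n$ and the per-edge coverage is only polynomially weaker than in the hopset case yields $D \ge \Omegaish(n^{1/4})$.

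The main obstacle will be Stage 2: designing a gadget whose transitive closure is simultaneously \emph{sparse enough} to bound the per-edge coverage, yet \emph{rich enough} that the long critical shortest paths survive without auxiliary short paths appearing as "free" shortcuts. Quantitatively, converting weight-rigidity into reachability-rigidity is exactly the step that degrades the final exponent from $1/2$ to $1/4$, and controlling this loss to a single square-root — rather than all the way down to $n^{1/6}$ as in \cite{HP18} — is where the technical effort will concentrate.
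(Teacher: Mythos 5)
Your Stage 3 matches the paper's final counting step (a potential function summing hop-distances over critical pairs, with a per-edge bound of roughly $\ell$ on the total drop), but the heart of the proof --- the actual unweighted construction --- is exactly the part you leave open in Stage 2, and the route you sketch for it is not one that the paper (or, as far as I can see, any concrete argument) makes work. The paper does \emph{not} obtain the shortcut bound by retrofitting the weighted hopset graph with reachability gadgets; it explicitly abandons the $\eps$-shifting mechanism, since that mechanism only perturbs edge \emph{weights} and is useless once weights are gone. Instead it builds a fresh layered graph whose vertices form $\ell$ copies of a two-dimensional grid in $\zz^2$, takes a strongly convex vector set $W$ of size $q=\Theta(r^2)$ (Lemma \ref{lem:convex_set}), and breaks symmetry by \emph{edge-vector subsampling}: between layers $i$ and $i+1$ only two vectors are available, the zero vector and the random difference $\vec{w}_{\lambda_i+1}-\vec{w}_{\lambda_i}$ of an adjacent pair of convex vectors. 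Critical paths are generated greedily against direction vectors $\vec{d}\in D=W$; strong convexity makes each of them the \emph{unique} path between its endpoints (Lemma \ref{lem:ss_unique_paths}), and two paths sharing a node split there with probability $|j-k|/q$ (Lemma \ref{lem:node_split_ss}), which caps the number of critical paths through any pair $(x,y)$ by roughly $q\log n/\hopdist(x,y)$. The quantitative gain over \cite{HP18} --- the step you flag as "controlling the loss to a single square root" --- comes precisely from the adjacency of the sampled vectors: the per-layer displacement is $\Theta(r)$ rather than $\Theta(r^3)$, so with $r\approx n^{1/8}$ one can fit $\ell\approx n^{1/4}$ layers while keeping all paths inside the grid. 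Your proposal contains no mechanism playing this role, and "embed each critical path into a Hesse-/Huang--Pettie-style gadget so that reachability between different paths is sparse" is a restatement of the goal rather than a construction; indeed the paper's construction does the opposite of what you suggest, allowing pairs of paths to overlap on polynomially many edges and controlling only the overlap length, not the transitive closure between distinct paths.

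A smaller but real inaccuracy: in the exact hopset construction of Theorem \ref{thm:introeh} it is \emph{not} true that a single hopset edge can assist only one critical pair. Critical paths there overlap heavily by design; the argument only bounds the total decrease in the potential caused by one added edge (at most $\ell$, via the trade-off between the number of paths through a pair $(x,y)$ and $\hopdist_G(x,y)$). So even your Stage 1 premise misdescribes the rigidity you would need to transfer, which makes the unspecified Stage 2 conversion from "weight-rigidity" to "reachability-rigidity" rest on a property the source construction does not have. As it stands, the proposal identifies the correct counting framework but is missing the construction that the theorem is actually about.
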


This improves over the previous lower bound of $\Omega(n^{1/6})$ by Huang and Pettie \cite{HP18}, but a polynomial gap to the upper bound of $\Oish(n^{1/3})$ still remains \cite{KP22a}.
It is an interesting open question is to narrow this gap further.
We note that every $(1+\eps)$ hopset is also a shortcut set, and so this lower bound extends to $(1+\eps)$ hopsets as well.\footnote{Note that, since the shortcut set lower bound is only for directed graphs, the lower bound only extends to $(1+\eps)$ hopsets for \emph{directed} input graphs.  For $(1+\eps)$ hopsets in undirected (but possibly weighted) input graphs, Elkin, Gitlitz, and Neiman proved that much better diameter can be achieved \cite{EGN19}.}
In the body of the paper (Theorem \ref{thm:ss}) we prove a more general theorem giving improved lower bounds against $O(p)$ size hopsets, but this time the parameter range in which our extended theorem is nontrivial is only $p \in [1, n^{5/4}]$.

\subsection{Other Related Work and Open Questions}

This work is only concerned with the \emph{existential} bounds that can be achieved for shortcut sets and hopsets.
Some prior work in the area has also focused on constructions that are efficient in the appropriate computational model.
This was perhaps most famously achieved by Fineman \cite{Fineman19}, whose breakthrough algorithm for parallel reachability was centered around a new shortcut set construction.
His construction reduced diameter to $\Oish(n^{2/3})$ using $O(n)$ edges; this is a worse diameter bound than the one achieved by folklore sampling, but crucially its work-efficiency was much better than folklore.
This was later improved by Jambulapati, Liu, and Sidford \cite{JLS19}, who achieved roughly the diameter bound from folklore sampling with work-efficiency comparable to Fineman \cite{Fineman19}.
Relatedly, another work of Kogan and Parter \cite{KP22b} gave a construction improving the (centralized) construction time of their shortcut set algorithm.

This work focuses on hopsets for \emph{weighted} graphs, but hopsets can be studied for unweighted graphs as well.
Specifically for $(1+\eps)$ hopsets in undirected unweighted graphs, it is known that far better diameter bounds are achievable.
In particular, following preliminary constructions in \cite{KS97, SS99, Cohen00}, constructions of Huang and Pettie \cite{HP19} and Elkin and Neiman \cite{EN19} showed that one can reduce diameter to $n^{o(1)}$ using $O(n)$ hop edges.
These papers actually provide a more fine-grained tradeoff between hopset size, $\eps$, and diameter bound, which is shown to be essentially tight in \cite{ABP17}.
Hopsets for undirected unweighted graphs with larger stretch values were studied in \cite{BP20}, and a unification of the various hopset constructions in this setting was developed in \cite{NS22}.

This work focuses on shortcut/hopset bounds in the setting where the edge budget is $O(n)$, or more generally some function of $n$.
These objects are also sometimes studied in a related setting where the edge budget is $O(m)$, where $m$ is the number of edges in the input graph.
Lower bounds in this setting were achieved in \cite{Hesse03, HP18, LVWX22}; most recently, Lu, Williams, Wein, and Xu showed graphs where any $O(m)$-size shortcut set reduces diameter to $\Omega(n^{1/8})$.
On the upper bounds side, it is easy to get upper bounds as functions of both $m$ and $n$ -- for example, folklore sampling with $|S|=m^{1/2}$ sampled nodes yields a diameter bound of $\Oish(n/m^{1/2})$, and the construction by Kogan and Parter \cite{KP22a} implies a bound of $\Oish(n^{2/3} / m^{1/3})$ for $O(m)$-size shortcut sets.
However, it is an interesting open problem to construct $O(m)$-size shortcut/hopsets with nontrivial diameter upper bounds that depend only on $n$.
By ``nontrivial,'' we mean that one can always assume without loss of generality that the input graph is connected, and thus $m = \Omega(n)$, and so a construction of $O(n)$-edge shortcut/hopsets is always valid in the $O(m)$-setting.
A nontrivial construction is one that beats that bound.
\begin{oq}
Prove that, for any $m$, every $n$-node, $m$-edge directed graph has an $O(m)$-edge shortcut/hopset that reduces diameter to $O(n^c)$, where $c$ is a constant strictly less than the one that can currently be achieved for $O(n)$-edge shortcut/hopsets.
\end{oq}

Finally, we remark that the setting of exact hopsets in unweighted graphs seems to be unexplored.
Although unpublished to our knowledge, one can obtain a lower bound by applying a standard analysis in \cite{HP18, KP22, BHT22} to the $n$-node undirected unweighted distance preserver lower bound graphs constructed by Coppersmith and Elkin \cite{CE06}.
This would imply that any $O(n)$-size exact hopset on the Coppersmith-Elkin graphs would reduce diameter to $\Omega(n^{1/5})$.
Our new hopset lower bound can also be interpreted as an improved lower bound for this setting:\footnote{This corollary is not immediate from the discussion so far: since our shortcut set lower bound is directed, it is not clear that it would imply a lower bound against \emph{undirected} unweighted exact hopsets.  This holds specifically because our shortcut set lower bound construction happens to be layered.}
\begin{corollary}
There are $n$-node undirected unweighted input graphs on which any exact hopset on $O(n)$ edges reduces diameter to $\Omegaish(n^{1/4})$.
\end{corollary}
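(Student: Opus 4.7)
The plan is to reduce this corollary directly to our shortcut set lower bound (Theorem \ref{thm:ss}) by exploiting the fact, highlighted in the footnote, that the construction is \emph{layered}. Let $G$ be the directed graph from the shortcut set lower bound, with vertex partition $L_0 \sqcup L_1 \sqcup \cdots \sqcup L_k$ and every directed edge going from some $L_i$ to $L_{i+1}$; let $G'$ denote the undirected unweighted graph obtained by forgetting orientations. The structural fact I would use is that, since each edge of $G'$ changes layer by exactly one, for any $u \in L_i$ and $v \in L_j$ with $j \ge i$ we have $\dist_{G'}(u,v) \ge j-i$, with equality iff $G$ contains a directed $u \leadsto v$ path of length $j-i$.

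Next, given an exact hopset $H'$ for $G'$ of size $O(n)$ achieving hop-diameter $D$, I would convert $H'$ into a directed shortcut set $H$ for $G$ of the same size. For each $\{u,v\} \in H'$ with $u \in L_i$, $v \in L_j$, WLOG $j > i$, the hopset weight $w(u,v) = \dist_{G'}(u,v) = j-i$ combined with the structural fact above certifies a directed $u \leadsto v$ path in $G$, so the directed edge $(u,v)$ is a valid shortcut to place in $H$.

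The main step is to argue the hop bound transfers. Take any reachable pair $s \in L_i$, $t \in L_j$ in $G$, so $\dist_{G'}(s,t) = j-i$, and let $\pi$ be the promised exact $s$–$t$ path in $G' \cup H'$ with at most $D$ edges. Every edge of $\pi$, whether from $G'$ or $H'$, has weight equal to the \emph{absolute} layer difference of its endpoints: graph edges have weight $1$ = layer gap, and hopset edges have weight $\dist_{G'}(\cdot,\cdot)$ = layer gap by the structural fact. Thus the weighted length of $\pi$ is exactly the sum of absolute layer changes, while the net signed layer change along $\pi$ is $j-i$. The triangle inequality $|\sum x_e| \le \sum |x_e|$ with equality only when all $x_e$ share sign then forces every edge of $\pi$ to be traversed in the layer-increasing direction. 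Consequently $\pi$ lifts edge-for-edge to a directed $s \leadsto t$ walk in $G \cup H$ of at most $D$ edges, showing $H$ is a $D$-diameter reducing shortcut set for $G$. Applying Theorem \ref{thm:ss} yields $D \ge \Omegaish(n^{1/4})$.

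The hard part is the monotonicity argument in the third paragraph, which is precisely where the layered hypothesis is indispensable: in a non-layered graph, a hopset edge $\{u,v\}$ could have weight strictly smaller than the layer gap between $u$ and $v$ (because the true shortest $u$-to-$v$ path could cut across layers via a backward detour), which would break the equality condition and allow genuine backward movement along exact shortest paths in $G'$. Verifying that the specific construction underlying Theorem \ref{thm:ss} is layered in the strong sense needed here—edges strictly between consecutive layers—is a prerequisite for the reduction and must be checked against the construction itself.
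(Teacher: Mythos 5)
Your core mechanism is the right one, and it is exactly what the paper's footnote is pointing at: since every edge of the shortcut-set construction joins consecutive layers, an exact shortest path in the undirected unweighted version between a forward-reachable pair has weighted length equal to its net layer change, and the equality case of the triangle inequality forces the path to be layer-monotone, so it lifts edge-for-edge to a directed path. The paper never writes this proof out (the corollary is justified only by the remark that the construction is layered); the most natural direct route would be to keep the layered graph itself, check that its critical paths remain unique shortest paths and that the overlap bound of Lemma \ref{lem:ss_construction_lemma} survives undirecting, and rerun the potential argument of Section \ref{subsec:eh_thm}, whereas you instead convert an arbitrary exact hopset into a shortcut set and quote the shortcut-set bound black-box. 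Both hinge on the same monotonicity fact, and your version is more modular.

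Two points need repair. First, it is not true that every hopset edge $\{u,v\}$ has weight equal to its layer gap: in a layered graph one only has $\dist_{G'}(u,v) \geq |\text{layer}(v)-\text{layer}(u)|$, with equality iff one endpoint is reachable from the other in the directed graph; a hopset edge inside a single layer, or between directionally incomparable nodes, violates equality, so your paragraph two (and the first sentence of paragraph three) overclaims, and such edges cannot be added to $H$ at all. The fix is your own inequality argument run honestly: since the total weight of the exact path equals the net layer change while each edge contributes at least its absolute layer change, equality holds termwise, so every edge actually used satisfies weight $=$ layer gap $>0$ and is traversed forward; the nonconforming hopset edges never appear and are simply discarded when forming $H$. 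Second, and more substantively, the black-box appeal to Theorem \ref{thm:ss} at $p = O(n)$ fails your own stated prerequisite: for $p \geq n/(\alpha\log n)$ the theorem's hard instance is produced by the subsampling reduction of Lemma \ref{lem:path_subsample}, whose edges join layers up to $\Theta(x\log n)$ apart, so that witness is not layered. You should run the reduction against the layered construction of Lemma \ref{lem:ss_construction_lemma} directly, which only certifies hardness for budgets up to $\Theta(n/\log n)$; pushing the budget to a genuine $O(n)$ (as the corollary states) needs an additional amplification or parameter-retuning step that the paper also leaves implicit, but your write-up should not paper over it by citing Theorem \ref{thm:ss} in a regime where its witness is a different graph.
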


But on the upper bounds side, folklore sampling remains the best known algorithm, and it only reduces diameter to $\Oish(n^{1/2})$.
It would be interesting to narrow this gap, and in particular to confirm or refute whether folklore sampling is near-optimal. 
\begin{oq}
Is the folklore algorithm near-optimal for exact hopsets in unweighted graphs?
Or, alternately, does every $n$-node unweighted graph have an $O(n)$-size exact hopset that reduces diameter to $O(n^{1/2 - c})$, for some $c>0$?
\end{oq}

\section{Technical Overview}
\label{sec:prob_approach}
\subsection{Recap of Prior Work}

In order to explain the strategy used for our improved lower bounds, it will be helpful to first recall the construction of Huang and Pettie \cite{HP18} for shortcut set lower bounds, and the construction of Kogan and Parter \cite{KP22} for hopset lower bounds.
The Huang-Pettie shortcut set lower bound is a construction of a directed graph $G$ and a set of paths $\Pi$ with the following properties:
\begin{enumerate}
\item Each path in $\Pi$ is the unique path in $G$ between its endpoints
\item \label{cnd:ed} The paths in $\Pi$ are pairwise edge-disjoint
\item There are $|\Pi| = cn$ paths, where $c$ is a constant that can be selected as large as we want
\item Subject to the above constraints, we want to make the paths in $\Pi$ as long as possible.
In particular, in \cite{HP18}, every path in $\Pi$ has the same length $\ell = \Theta(n^{1/6})$.
\end{enumerate}

Let us see why these properties imply a lower bound.
Each time we add an edge $(u, v)$ to our hopset $H$, by uniqueness and edge-disjointness of paths in $\Pi$, there can be at most one path $\pi \in \Pi$ where the distance between its endpoints decreases due to $(u, v)$.
Thus, if we build a shortcut set of size $|H| = |\Pi|-1$, then for at least one path $\pi \in \Pi$ the distance between its endpoints is the same in $G$ as in $G \cup H$.
Thus, the final diameter of the graph is at least $\ell = \Theta(n^{1/6})$, giving the lower bound.

The Kogan-Parter exact hopset lower bound is similar, except that each path $\pi \in \Pi$ is only required to be a unique \emph{shortest} path between its endpoints in the weighted graph $G$.
This is a more relaxed constraint than requiring $\pi$ to be the unique path of any kind, and this additional freedom in the construction lets us improve the path lengths to $\ell = \Theta(n^{1/3})$.
Besides that, the argument is identical.





\subsection{Allowing Paths to Overlap}

The change in our construction is a relaxation of item (\ref{cnd:ed}); that is, the paths in our constructions are not pairwise edge-disjoint.
This has appeared in prior work only in a rather weak form: all of the lower bounds against $O(m)$-size shortcut sets use paths that may intersect pairwise on a single edge \cite{Hesse03, HP18, LVWX22}.
These constructions begin with a system of paths as above, and then apply a tool called the \emph{alternation product} which introduces path overlap.
However, the alternation product is not a useful tool for our purposes, and it does not appear in this paper at all.
The alternation product \emph{harms} the path lengths of the construction (relative to its number of nodes), in exchange for also reducing the number of edges $m$ relative to the number of paths in the construction. This is useful for constructing lower bounds against $O(m)$-size shortcut/hopsets, but is not helpful for our goal of constructing lower bounds against $O(n)$-size shortcut/hopsets.


In our construction, paths that may intersect pairwise on \emph{polynomially} many edges.
This property arises from an entirely different technique, and for an entirely different reason: our goal is to use this overlap to get \emph{improved} path lengths.
Let us first observe why we can tolerate some path overlap while maintaining correctness of the lower bound.
Suppose our shortcut set has a budget of $cn$ edges, and we construct a graph $G$ and a set of $|\Pi| = 2cn$ paths, where each path has length $\ell$ and each path is the unique path between its endpoints.
Let $P$ be the set of node pairs that are the endpoints of paths in $\Pi$, and consider the following potential function over shortcut sets $H$, which simply sums distances over critical pairs:
$$\phi\left( H \right) := \sum \limits_{(s, t) \in P} \dist_{G \cup H}(s, t)$$
Initially, we have $\phi(\emptyset) = |\Pi| \cdot \ell = 2cn\ell$.
Then we add edges to our shortcut set one at a time, gradually reducing $\phi$.
How much could any given shortcut edge $(u, v) \in H$ reduce $\phi$?
Clearly it could reduce by $\ell-1$, in the case where $(u, v)$ are a pair in $P$, since this reduces $\dist(u, v)$ from $\ell$ to $1$.
This is acceptable: if \emph{all} edges reduce $\phi$ by at most $\ell-1$, then the final potential would be
$$\phi(H) \ge \phi(0) - (\ell-1)cn = 2\ell cn - (\ell-1)cn = (\ell+1)cn.$$
Thus, over the $|P| = 2cn$ critical pairs, the average distance in $G \cup H$ is $\Theta(\ell)$, and so the lower bound still holds.

So the only overlap constraint we need to enforce in our lower bound is that no shortcut edge can reduce the potential $\phi$ by more than $\ell$.
This is a much more forgiving constraint than edge-disjoint paths.
For example, for two internal nodes $u, v$ with $\dist_G(u, v) = \ell/2$, we could allow two different paths to coincide on a $u \leadsto v$ subpath: adding $(u, v)$ to the shortcut set would then reduce $\phi$ by only $2 \cdot (\ell/2-1) = \ell-2$.
In general, for two nodes at distance $\ell/x$, we can safely allow $x$ paths to coincide on the subpath between these nodes while maintaining correctness of the lower bound.





  

\subsection{Constructing Overlapping Paths}

The previous part explains why we are \emph{allowed} overlapping paths, but it is still not clear how to leverage this freedom into an improved lower bound construction.
This is where our technical contribution lies.
It is again a bit easier to explain the new idea in the context of shortcut sets, but the intuition is essentially the same in the context of hopsets.

Let us return to the previous lower bound constructions.
For the shortcut lower bounds of \cite{HP18}, one constructs an $(\ell+1)$-layered directed graph where the nodes in each layer are a copy of a grid within $\zz^2$.
The next step is to construct a set of convex vectors $C$.
A key perspective shift in this paper is that we think of the vectors in $C$ as playing two \emph{independent} roles in this previous construction:
\begin{itemize}
\item They play the role of \emph{edge vectors}: we include an edge from a node $u$ in layer $i$ to a node $v$ in layer $i+1$ iff the difference between the grid points $u, v$ is a vector $v-u=\vec{c} \in C$.

\item They also play the role of \emph{(objective) direction vectors}.
The paths $\pi \in \Pi$ are indexed by a node $u$ in the first layer and a vector $\vec{c} \in C$, and we generate $\pi_{u, \vec{c}}$ by using $u$ as its first node, and then iteratively selecting its node in the next layer by adding $\vec{c}$ to the node in the previous layer.
(A technical detail is that only choices of $(u, \vec{c})$ are allowed that reach the last layer without the path falling off the side of the grid.)
Notice the argument for path uniqueness: using $\vec{c}$ as an objective direction, due to the convexity of the vectors in $C$, the edge vector $\vec{c}$ itself is the one that maximizes progress in the direction $\vec{c}$.
Thus, no alternate path beginning at $u$ can reach the grid point $u + \ell \vec{c}$ within $\ell$ steps, since it necessarily makes less progress in the direction of $\vec{c}$, implying path uniqueness.
\end{itemize}

Our constructions disentangle these two uses of the vector set $C$: we depart from prior work by explicitly using a separate edge vector set $C$ and direction vector set $D$.
These vector sets crucially do \emph{not} have the same size: instead we will have $|D| \gg |C|$, and this difference allows for a technical optimization of parameters leading to improved lower bounds.
Roughly, we can choose $D$ large enough to have $|\Pi| = \Theta(n)$, while also allowing $|C| \ll n$.
This smaller size $|C|$ can be achieved using shorter convex vectors, which in turn lets us pack more layers into the construction without worrying about paths falling off the sides of the grid before reaching the final layer.

We use the following generalized process for iteratively generating critical paths.
Each path $\pi \in \Pi$ is indexed by a node $u$ in the first layer and a direction vector $\vec{d} \in D$, and at each layer, we greedily select the edge vector $\vec{c} \in C$ that maximizes progress in the objective direction $\vec{d}$.
Since $|D| \gg |C|$, by necessity many different objective directions will all select the same edge vector at each layer.
This leads to overlapping paths  discussed above, but more technical ingredients are still needed to ensure that paths don't overlap too much.
We explain these next.

\subsection{Symmetry Breaking}

There is an important problem with the construction sketched so far.
Consider two nearby direction vectors $\vec{d_1}, \vec{d_2} \in D$, which have the same optimizing edge vector $\vec{c}$.
Then for any start node $u$, the paths $\pi_{u, \vec{d_1}}, \pi_{u, \vec{d_2}}$ will simply select the same edge vector $\vec{c}$ at each layer, and these two paths will entirely coincide.
In other words, the ``extra paths'' bought by using $|D| \gg |C|$ are actually just identical copies of a much smaller set of paths, which is not interesting or useful.

We therefore need to somehow break the symmetry between paths that use nearby direction vectors, getting them to eventually choose different edge vectors at some layer to split apart.
This is where our lower bound constructions diverge; we will need to use two different symmetry-breaking strategies for shortcut and hopset lower bounds.

\paragraph{Hopset Lower Bounds and $\eps$-Shifting.}

In our hopset lower bound construction $G$, like \cite{KP22}, our vertices can be interpreted as points in $\mathbb{R}^2$.
More specifically, they initially form a square grid within the integer lattice $\mathbb{Z}^2$, and the columns of this grid act as layers of $G$.
Our edges initially have the form $e = (v, v + \Vec{c})$ for edge vectors $\Vec{c} \in C$; edge vectors always have first coordinate $1$, so that they go from one layer to the next.
Initially, the weight of an edge is the Euclidean distance between its endpoints.

Our symmetry-breaking step is a random operation where for each layer $i$ we choose a random variable $\eps_i$ sampled uniformly from the interval $(0, 1)$, and we shift the $i^{th}$ layer upwards so that its nodes are offset $\eps_i$ higher than the nodes in the previous layer.
The shifts therefore compound across the layers.
See Figure \ref{fig:vertices} for a picture.

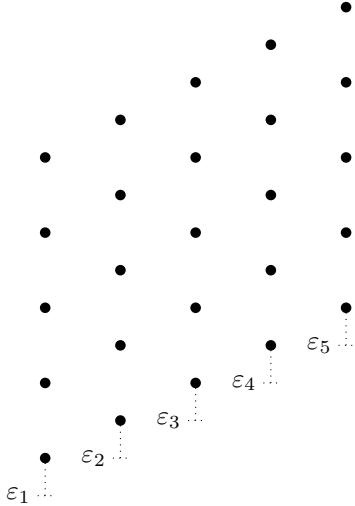
\begin{figure} [h]
\begin{center}
\begin{tikzpicture}
    \pgfmathsetmacro\rows{5} 
    \pgfmathsetmacro\cols{5} 
    \pgfmathsetmacro\dist{1} 
    \pgfmathsetmacro\shift{0.5} 

    \foreach \i in {1,...,\cols} {
        \foreach \j in {1,...,\rows} {
            \fill ({\i*\dist}, {\j*\dist + (\i-1)*\shift}) circle (2pt);
        }
    }

    
    \foreach \i [evaluate=\i as \prevI using int(\i-1)] in {1,...,\cols} {
        \draw [dotted] ({\i*\dist}, {\i*\shift}) -- ({\i*\dist}, {(\i+1)*\shift});
        \draw [dotted] ({\i*\dist - 0.1}, {\i*\shift}) -- ({\i*\dist + 0.1}, {\i*\shift});
        
        \path ({\i*\dist}, {\i*\shift}) -- node[left, midway, xshift=-2pt] {$\varepsilon_{\i}$} ({\i*\dist}, {\i*\shift});
    }
\end{tikzpicture}
\end{center}
\caption{\label{fig:vertices} Vertex set of the graph $G$ used for our lower bounds against exact hopsets.  Each parameter $\eps_i$ is the amount the $i^{th}$ column is shifted upwards in the plane, relative to the previous column; the $\{\eps_i\}$ values are chosen uniformly and independently from the interval $(0, 1)$.}
\end{figure}


Our $\eps$-shifting strategy does not affect the edge set of $G$, nor does it affect the set of direction vectors in any way, but it does affect the Euclidean distance between nodes in adjacent layers, and hence it changes the edge weights.
It achieves symmetry-breaking for roughly the following reason.
In our greedy generation of paths, a path with direction vector $\vec{d}$ will use the closest edge vector $\vec{c}$ at each level.
If two paths $\pi_1$ and $\pi_2$ with direction vectors $\Vec{d}_j, \Vec{d}_k \in D$ intersect at a node $v$ in the $i$th layer of $G$, then there will be an interval $(a, b) \subseteq (0, 1)$ such that if $\varepsilon_{i+1}$ lands in $(a, b)$, then $\pi_1, \pi_2$ have different closest edge vectors after shifting.
Thus, in this event, the paths $\pi_1$ and $\pi_2$ split apart at $v$ and never reconverge (this is formalized in Lemma \ref{lem:node_split}).
The size of the interval $(a, b)$, and hence the probability that it gets hit by $\eps_{i+1}$, is proportional to the distance between $\vec{d}_j$ and $\vec{d}_k$.
The effect is that paths generated by nearby direction vectors tend to intersect on long subpaths, while paths generated by far apart direction vectors intersect on shorter subpaths or perhaps just a single node, but with high probability all pairs of paths split apart eventually.
Paths generated by \emph{the same} direction vector remain parallel, and do not intersect at all.

There is a technical detail remaining: we still need to prove that each critical path is a unique shortest path between its endpoints.
In \cite{KP22}, the critical paths correspond to lines in Euclidean space, and since edge weights correspond to Euclidean distances, the analogous unique shortest paths property follows instantly from the geometry of $\rr^2$.
Since our critical paths are generated by a more involved process, it is much more technical to prove that they are unique shortest paths.
Proposition \ref{prop:squared_prop} contains the optimization lemma that needs to hold for our process to generate unique shortest paths, and to push it through, it turns out that we essentially need the \emph{derivative} of edge weights to be proportional to Euclidean distances.
We therefore differ again from \cite{KP22} by squaring all of our edge weights, meaning that our graph metric is ultimately quite different from $\rr^2$.
See Section \ref{subsec:uniqueshortest} for additional details.




\paragraph{Shortcut Set Lower Bounds and Edge Vector Subsampling.}

Shortcut set lower bounds are unweighted, and this makes the technique of $\eps$-sampling essentially useless in this setting, since it only affects edge weights in the construction and it does not change the edge \emph{set}.
For shortcut sets, we need an entirely different symmetry-breaking strategy that actually changes the edge set from layer to layer.

Our starting graph is similar to the one used by Huang and Pettie \cite{HP18}, mentioned earlier.
Each layer of the graph is an independent copy of a square grid in $\zz^2$.
We generate a large convex set $W \subseteq \rr^2$; initially, $W$ plays the role of both edge vectors $C$ and direction vectors $D$.
However, for the sake of symmetry-breaking, we do not put edges between \emph{all} nodes in adjacent layers whose difference is a vector in $W$.
Instead, at each layer $i$ we randomly sample exactly two adjacent vectors $\Vec{c}_{\lambda_i}, \vec{c}_{\lambda_i + 1} \in W$, and we use only these two edge vectors to generate edges to the next layer.
This is depicted in Figure \ref{fig:shortcutsb}.

\usetikzlibrary{3d}
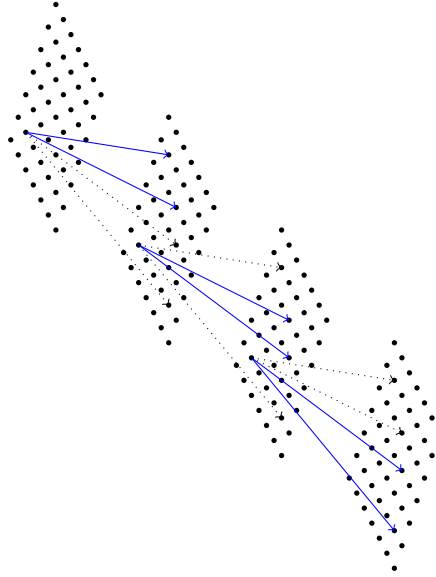
\begin{figure}[h]
\begin{center}
\begin{tikzpicture}[scale=1, x={(0.1cm, 0.3cm)}, y={(0cm, 0.1cm)}, z={(-0.1cm, 0.1cm)}]
  \def\gridsize{6} 
  \def\perspective{1} 
  \def\layercount{3} 
  \def\layergap{15} 

  \foreach \layer in {0, 1, ..., \layercount} {
    \foreach \x in {0, 1, ..., \gridsize} {
      \foreach \y in {0, 1, ..., \gridsize} {
        \pgfmathsetmacro{\z}{\perspective * \y + \layer * \layergap}
        \fill (\x, \y, \z) circle (1pt);

      }
    }
  }

  \draw [dotted, ->] (1, 5, {\perspective * 5 + 3 * \layergap}) -- (1, 1, {\perspective * 1 + 2 * \layergap});
  \draw [blue, ->] (1, 5, {\perspective * 5 + 3 * \layergap}) -- (5, 5, {\perspective * 5 + 2 * \layergap});
  \draw [blue, ->] (1, 5, {\perspective * 5 + 3 * \layergap}) -- (4, 3, {\perspective * 3 + 2 * \layergap});
  \draw [dotted, ->] (1, 5, {\perspective * 5 + 3 * \layergap}) -- (3, 2, {\perspective * 2 + 2 * \layergap});

  \draw [dotted, ->] (1, 5, {\perspective * 5 + 2 * \layergap}) -- (1, 1, {\perspective * 1 + 1 * \layergap});
  \draw [dotted, ->] (1, 5, {\perspective * 5 + 2 * \layergap}) -- (5, 5, {\perspective * 5 + 1 * \layergap});
  \draw [blue, ->] (1, 5, {\perspective * 5 + 2 * \layergap}) -- (4, 3, {\perspective * 3 + 1 * \layergap});
  \draw [blue, ->] (1, 5, {\perspective * 5 + 2 * \layergap}) -- (3, 2, {\perspective * 2 + 1 * \layergap});

  \draw [blue, ->] (1, 5, {\perspective * 5 + 1 * \layergap}) -- (1, 1, {\perspective * 1 + 0 * \layergap});
  \draw [dotted, ->] (1, 5, {\perspective * 5 + 1 * \layergap}) -- (5, 5, {\perspective * 5 + 0 * \layergap});
  \draw [dotted, ->] (1, 5, {\perspective * 5 + 1 * \layergap}) -- (4, 3, {\perspective * 3 + 0 * \layergap});
  \draw [blue, ->] (1, 5, {\perspective * 5 + 1 * \layergap}) -- (3, 2, {\perspective * 2 + 0 * \layergap});

\end{tikzpicture}
\end{center}
\caption{\label{fig:shortcutsb} Our symmetry-breaking strategy for shortcut set lower bounds starts with a large set of convex vectors, but independently subsamples adjacent pairs of convex vectors to generate the edges between adjacent layers.  In this picture, there are $4$ edge vectors and $4$ layers, but only two of the edge vectors (in blue) are sampled and available between any given pair of adjacent layers.  For clarity, we have only drawn the edges leaving one particular node in each layer.}
\end{figure}

The fact that the sampled vectors are \emph{adjacent}, and hence typically close together, allows for a key optimization in the construction.
The rate at which paths drift apart from each other is much slower than in \cite{HP18}, even when they are generated by very different direction vectors.
This allows us to apply a carefully-chosen translation of the grid from layer to layer, in order to keep all of the paths contained in the grid.
This in turn lets us pack many more layers into the construction while still ensuring that all of our paths stay within the confines of the grid.

As before, paths $\pi \in \Pi$ are generated greedily: for direction vector $\Vec{d} \in D$, an associated path $\pi$ will traverse the sampled edge vector in each layer that maximizes progress in the objective direction $\Vec{d}$. 
For two paths $\pi_1, \pi_2$ with direction vectors $\Vec{d}_j, \Vec{d}_k \in D$ and $v \in \pi_1 \cap \pi_2$, these paths split at $v$ in the event that the sampled edge vectors $\Vec{c}_{\lambda_i}, \vec{c}_{\lambda_i + 1} \in W$ lie \textit{between} $\Vec{d}_j$ and $\Vec{d}_k$ in $D$.
This again leads to behavior where paths generated by nearby direction vectors tend to coincide on long subpaths, while paths generated by far apart direction vectors have smaller intersections, but with high probability all pairs of paths split apart eventually.
Paths with the same direction vector again remain parallel.
This is formalized in Lemma \ref{lem:node_split_ss}.


\section{Preliminaries}


We use the following notations:
\begin{itemize}
\item For a path $\pi$, we use $|\pi|$ to denote the number of \textbf{nodes} in $\pi$.
This is $1$ different from the (unweighted) length of $\pi$.
In weighted graphs, we write $w(\pi)$ for the sum of edge weights in $\pi$.

\item We write $\dist_G(s, t)$ for the shortest path distance from node $s$ to node $t$ in graph $G$ (counting edge weights, if $G$ is weighted).
We write $\hopdist_G(s, t)$ for the least number of edges contained in any $s \leadsto t$ shortest path.

\item We use $\langle \cdot, \cdot \rangle$ to denote the standard Euclidean inner product.
\end{itemize}

\section{Exact Hopsets}






In this section we will prove the following theorem.

\begin{theorem} \label{thm:ehbody}
For any parameter $ p \in [1, n^2]$, there exists an $n$-node weighted undirected graph $G = (V, E, w)$ such that for any exact hopset $H$ of size $|H| \leq p$ where $p \in [1, n^2]$, the graph $G \cup H$ must have hopbound  $\Omega\left(\frac{n}{p^{1/2} \log^{1/2} n}\right)$.
\label{thm:eh}
\end{theorem}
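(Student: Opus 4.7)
The plan is to construct a random weighted undirected layered graph together with a large family of critical paths, and then use a potential-function argument over the endpoints of those paths. The graph will have $L$ layers, each a copy of an $m \times m$ grid in $\zz^2$; the $i$th layer is vertically shifted by $\sum_{j \leq i} \eps_j$ with each $\eps_j$ drawn independently and uniformly from $(0,1)$. Fix a small convex set $C$ of edge vectors, each with first coordinate $1$, and connect $v$ in layer $i$ to $v + \vec{c}$ in layer $i+1$ (post-shift) for each $\vec{c} \in C$. Crucially, weight each such edge by the \emph{square} of the Euclidean distance between its endpoints in the plane. The parameters $L, m, |C|$ and a separate direction-vector set $D$ with $|D| \gg |C|$ will be tuned so that $|V| = O(n)$, the number of critical pairs is $\Omega(p)$, and $L = \Omegaish(n / p^{1/2})$, which will serve as the target hopbound $\ell$.

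Next I would define the critical path family $\Pi$: for each start vertex $u$ in layer $1$ and each $\vec{d} \in D$, generate $\pi_{u,\vec{d}}$ greedily by taking, at each layer, the edge vector $\vec{c} \in C$ maximizing $\langle \vec{d}, \vec{c} \rangle$ (with a canonical tie-break). Because $|D| \gg |C|$, many direction vectors yield the same layer-by-layer choices, which is exactly where path overlap enters — but that overlap is tolerable by the potential bound sketched in the overview. The three structural lemmas to prove are then: (i) \emph{uniqueness}: each $\pi_{u,\vec{d}}$ is the unique shortest path in $G$ from its first to its last node; (ii) \emph{splitting}: if two critical paths meet at a node $v$ in layer $i$, then conditioned on history, there is an interval inside $(0,1)$ of width proportional to the angular gap between their direction vectors such that if $\eps_{i+1}$ lands in it, the paths choose different edge vectors at layer $i+1$ and never reconverge; (iii) \emph{bounded concurrent overlap}: for every subpath of length $r$ shared by many critical paths, the expected number of such paths decays as $r$ grows, with $O(\log n)$ paths sharing any subpath of length $\Theta(\ell)$ with high probability.

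Given (i)--(iii), I would then run the potential argument. Let $P$ be the endpoint pairs of paths in $\Pi$ and set $\phi(H) := \sum_{(s,t) \in P} \dist_{G \cup H}(s,t)$. Initially $\phi(\emptyset) = |\Pi| \cdot \ell$. By uniqueness, a hopset edge $(u,v)$ can only decrease $\dist_{G \cup H}(s,t)$ for a pair $(s,t)$ whose critical path contains $u$ and $v$; by (iii), the number of such pairs is $O(\log n)$ w.h.p., so each hopset edge reduces $\phi$ by at most $O(\ell \log n)$. With $|H| \leq p$ and $|\Pi| = \Theta(p \log n)$, we get $\phi(G \cup H) = \Omega(|\Pi| \cdot \ell)$, so some critical pair retains hopdistance $\Omega(\ell) = \Omegaish(n/p^{1/2})$ in $G \cup H$, matching the claim after the $\log^{1/2} n$ adjustment from tuning $|\Pi|$.

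The main obstacle is lemma (i), the unique-shortest-path property. Since the greedy rule only guarantees \emph{first-order} optimality (maximizing the inner product with $\vec{d}$ at each step), showing it produces a globally shortest path in a metric different from $\rr^2$ requires the derivative of the edge weight to behave like Euclidean distance; this is exactly why edge weights are squared. The proof will need to compare the cumulative squared length of $\pi_{u,\vec{d}}$ against any alternate path by expanding $\|v_{i+1} - v_i\|^2$ and using a convexity/inner-product inequality against the direction $\vec{d}$ to show that any deviation from the greedy choice strictly increases the total weight — and this argument must be robust to the random $\eps$-shifts. Lemma (ii) and the parameter balancing are comparatively routine once (i) is in place.
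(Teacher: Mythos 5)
Your overall architecture matches the intended proof (layered grid with compounding $\eps$-shifts, squared edge weights, a direction set $D$ with $|D| \gg |C|$, greedy path generation, a splitting lemma, overlap bounds, and the potential function over critical pairs), but your greedy selection rule breaks the construction. You generate paths by maximizing $\langle \vec{d}, \vec{c} \rangle$ over the available edge vectors. At layer $i$ every candidate is shifted by the \emph{same} vector $(0, \eps_{i+1})$, and a common shift adds the same constant $d\cdot\eps_{i+1}$ to every inner product with a fixed $\vec{d}$, so the argmax is independent of $\eps_{i+1}$: your splitting lemma (ii) then holds with probability $0$, not probability proportional to the gap between direction vectors. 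Worse, with edge vectors $(1,x)$, $x \in [0,M]$, maximizing $\langle (1,d),(1,x)\rangle$ simply picks the largest $x$ for every $d>0$, so all paths from a common start node coincide and you get only about $n/\ell$ distinct critical pairs rather than $p$. The rule has to be: take the edge vector whose slope is \emph{closest} to $d$ (minimize $|c-d|$). This non-linear criterion is genuinely sensitive to the shift (two paths split at a shared node exactly when $\zz + \eps_{i+1} + 1/2$ hits the interval between their direction slopes), and uniqueness of the generated path as a shortest path then follows from a scalar exchange inequality for sums of squares: if $|\hat{x}_i - d| \le |x_i - d|$ for all $i$ and $\sum \hat{x}_i = \sum x_i$, then $\sum \hat{x}_i^2 \le \sum x_i^2$, with strictness unless the deviations agree. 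Your instinct that squaring the weights makes the ``derivative'' behave like distance is the right one, but the comparison is made against the direction slope $d$, not via inner products; also, the hopset construction keeps each layer one-dimensional (columns of a grid in $\zz^2$) — the per-layer $2$-dimensional grid belongs to the shortcut-set argument.

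There is a second gap in the potential accounting. You bound the damage of a hopset edge $(u,v)$ by (number of critical paths containing both $u$ and $v$) times $\ell$, claiming the first factor is $O(\log n)$ w.h.p. That is false in any construction with $|D| \gg |C|$: paths whose direction vectors differ by about $1/q$ coincide on subpaths of length about $q\log n$, so two \emph{adjacent} nodes can lie on $\Theta(\ell/\mathrm{polylog}\, n) \gg \log n$ critical paths; your lemma (iii) only controls overlap for subpaths of length $\Theta(\ell)$, which is not the relevant case for a short hopset edge. The correct accounting uses the trade-off: an exact hopset edge $(u,v)$ can shorten an affected pair's hop-distance by at most $\hopdist_G(u,v)$, while the number of critical paths containing both $u$ and $v$ is at most roughly $(q\log n)/\hopdist_G(u,v) \le \ell/\hopdist_G(u,v)$; the product is at most $\ell$ per hopset edge, and with $|\Pi| = 2p$ pairs and $|H| \le p$ edges an average hop-distance of $\Theta(\ell)$ survives. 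Finally, the potential must sum \emph{hop-distances} $\hopdist_{G\cup H}(s,t)$, not weighted distances: an exact hopset leaves $\dist_{G \cup H}(s,t) = \dist_G(s,t)$ unchanged, so the quantity you wrote never decreases and carries no information.
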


\noindent 
We will prove this via a construction of the following type:
\begin{lemma}
\label{lem:construction_lemma}
For any $p \in [1, n^2]$, there is an infinite family of $n$-node undirected weighted graphs $G = (V, E, w)$ and sets $\Pi$ of $|\Pi|=p$ paths in $G$ with the following properties:
\begin{itemize}
\item $G$ has $\ell = \Theta\left( \frac{n}{p^{1/2} \log^{1/2} n} \right)$ layers, and each path in $\Pi$ starts in the first layer, ends in the last layer, and contains exactly one node in each layer.

\item Each path in $\Pi$ is the unique shortest path between its endpoints in $G$.

\item For any two nodes $u, v \in V$, there are at most $ \frac{\ell}{\hopdist_G(u, v)} $ paths in $\Pi$ that contain both $u$ and $v$.

\item Each node $v \in V$ lies on at most $O\left(\frac{\ell p}{n} \right)$ paths in $\Pi$. 
\end{itemize}
\end{lemma}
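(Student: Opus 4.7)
The plan is to realize the construction previewed in Section \ref{sec:prob_approach}: a layered graph where each layer is a copy of a grid in $\mathbb{Z}^2$ shifted by a random offset, paths generated greedily against a large direction-vector set $D$, and edge weights given by squared Euclidean distances. I would set $k^2 = \Theta(p^{1/2} \log^{1/2} n)$ and $\ell = \Theta(n/(p^{1/2} \log^{1/2} n))$ so that an $\ell$-layered $k \times k$ grid has $\Theta(n)$ nodes, then draw each vertical shift $\varepsilon_i$ independently and uniformly from $(0,1)$. The edge vector set $C$ is chosen as a small strictly convex set whose members have first coordinate $1$ and bounded magnitudes, so that any greedy path built from edges in $C$ stays inside the grid for all $\ell$ layers; edges go from $v$ in layer $i$ to $v + \vec{c}$ in layer $i+1$ in the shifted coordinates, with weight equal to the square of the Euclidean distance between the two points. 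The direction set $D$ is a larger, denser convex arc of size $\Theta(p/k^2) = \Theta(p^{1/2}/\log^{1/2} n)$. A critical path $\pi_{u, \vec{d}}$ is indexed by $(u, \vec{d})$ with $u$ in the first layer and $\vec{d} \in D$, and is generated greedily: at each layer, select the $\vec{c} \in C$ that maximizes progress toward direction $\vec{d}$ (under the shifted coordinates). Counting gives $|\Pi| = p$ and the claimed layer count immediately.

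For the unique-shortest-path property, I would prove the optimization lemma previewed as Proposition \ref{prop:squared_prop}: because edge weights are \emph{squared} Euclidean distances, their derivative with respect to vertical displacement is proportional to ordinary Euclidean distance, and consequently the greedy choice of edge vector at each layer is the unique minimizer of marginal squared weight toward the terminal endpoint. An inductive argument, walking from the last layer back to the first, then certifies that $\pi_{u, \vec{d}}$ is the \emph{unique} shortest path between its endpoints in $G$. The fourth property (each node on at most $O(\ell p / n)$ paths) reduces to showing that the greedy process distributes incidences roughly uniformly across the grid, which follows because each direction vector contributes at most one path per row of the first layer, and grid translation symmetry prevents concentration of paths on any single interior node.

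The main obstacle is the third (overlap) property, which must hold simultaneously for all pairs $(u,v)$ and requires marrying the randomness of the $\{\varepsilon_i\}$ to the combinatorial structure of $D$. My plan is to establish, via what will become Lemma \ref{lem:node_split}, that whenever two paths $\pi_{u_1, \vec{d}_j}$ and $\pi_{u_2, \vec{d}_k}$ meet at a node in layer $i$, they split apart at layer $i+1$ with probability at least proportional to $\|\vec{d}_j - \vec{d}_k\|$, because this is the length of the sub-interval of $(0,1)$ in which $\varepsilon_{i+1}$ causes the two paths' greedy choices to diverge; once split, the optimization lemma forces them never to reconverge. Hence the probability that a fixed pair of paths coincides on a subpath of hop-length $h$ decays geometrically in $h \cdot \|\vec{d}_j - \vec{d}_k\|$. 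Summing expected (path pair, common subpath) incidences over all pairs in $\Pi$, applying Markov's inequality, and union-bounding over the $O(n^2)$ node pairs $(u,v)$, I would show that for a positive-probability realization of the $\{\varepsilon_i\}$ every pair satisfies the desired $\ell/\hopdist_G(u,v)$ bound; the probabilistic method then extracts a deterministic graph in the claimed infinite family. The delicate part is calibrating the density of $D$ and the $\log^{1/2} n$ slack in $\ell$ so that this union bound closes while still matching the advertised parameter tradeoff.
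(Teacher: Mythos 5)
Your high-level skeleton matches the paper's: random $\eps$-shifts per layer, greedy paths indexed by a dense direction set $D$ with $|D|\gg|C|$, a split lemma with probability proportional to the separation of direction vectors, geometric decay of pairwise overlap plus a union bound, the count $|S|\cdot|D|=\Theta(p)$, and a per-node bound via disjointness of same-direction paths. But there is a genuine gap in the geometric realization, and it breaks the parameter accounting. You make each layer a two-dimensional $k\times k$ grid with $k^2=\Theta(p^{1/2}\log^{1/2}n)$ nodes, i.e.\ a three-dimensional layered graph (this is the geometry of the \emph{shortcut-set} construction), while keeping the hopset parameters $\ell=\Theta\bigl(n/(p^{1/2}\log^{1/2}n)\bigr)$. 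Then the transverse room per layer is only $k=\Theta(p^{1/4}\log^{1/4}n)$, which is far smaller than $\ell$ (at $p=\Theta(n)$: $k\approx n^{1/4}$ versus $\ell\approx n^{1/2}$). A greedy path that takes a nonzero integer edge vector at each of its $\ell-1$ steps drifts by at least $\ell\gg k$ and falls off the grid after $o(\ell)$ layers; and if $C$ is a strictly convex set of nonzero vectors, the greedy rule selects a nonzero vector at every step for every direction vector, so no choice of ``bounded magnitudes'' rescues this. The paper avoids exactly this problem by making each layer a single \emph{column} of $n/\ell$ nodes (the whole vertex set is a grid in $\mathbb{Z}^2$ whose columns are the layers), so the per-step slope budget is $n/\ell^2=\Theta(p\log n/n)\ge\Omega(\log n)$ and there is room for the $\Theta(n/\ell^2)$ distinct integer edge slopes that the argument needs; this is the content of Proposition \ref{prop:path_edge} and the reason $S$ is taken as the lower half of the first layer. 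In your 3D geometry the drift constraint would force $\ell\lesssim n^{1/3}$-type values, which cannot give the $n^{1/2}$-type bound of Lemma \ref{lem:construction_lemma}.

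Two further mismatches stem from the same conflation of the two constructions. First, your greedy rule ``maximize $\langle\vec{c},\vec{d}\rangle$'' is the shortcut-set rule, whose uniqueness proof is powered by strong convexity of unweighted edge vectors; the squared-weight uniqueness argument (Proposition \ref{prop:squared_prop}) instead requires the rule ``pick the edge slope \emph{closest} to $d$,'' since the proposition compares sequences that are pointwise closer to the objective $b=d$ under a fixed-sum constraint. With all edge vectors having first coordinate $1$, maximizing the inner product is not the same as minimizing $|u_e-d|$ (it would just pick the largest slope), so as stated your induction ``from the last layer back'' does not certify unique shortest paths. Second, the clean split probability in Lemma \ref{lem:node_split} comes from the one-dimensional slope parametrization: the decision boundary between adjacent integer slopes sits at half-integers shifted by $\eps_{i+1}$, so the splitting event is exactly that a shifted half-integer lattice point lands in the interval $(d_1,d_2)$. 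With a strongly convex planar arc of directions and a shift in only one coordinate, your claim that the bad set of $\eps_{i+1}$ is an interval of length proportional to $\|\vec{d}_j-\vec{d}_k\|$ is unsubstantiated, and the constant of proportionality is exactly what calibrates $q\approx\ell/\log n$ in Lemma \ref{lem:no_good_hops}. Finally, your finishing step for the third property (expected incidences plus Markov over node pairs) omits the key deterministic ingredient the paper uses: all paths through a fixed pair $u,v$ have distinct direction vectors, hence two of them are separated by at least $(k-1)/q$, which converts the pairwise overlap bound into the count bound $\max\{16q\log n/\hopdist_G(u,v),1\}$; you would also still need to handle the undirectedness requirement and the regime $p\in[1,n]$ (the paper does these by a weight-shift reduction and by path subdivision, respectively).
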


\noindent
Next we show how Lemma \ref{lem:construction_lemma} implies Theorem \ref{thm:ehbody}.

\subsection{Proving Theorem \ref{thm:ehbody} using Lemma \ref{lem:construction_lemma}}
\label{subsec:eh_thm}
Fix an $n$ and $p \in [1, n^2]$. Let $G = (V, E, w)$ be the graph in Lemma \ref{lem:construction_lemma} with associated set $\Pi$ of $|\Pi| = 2p$ paths in $G$. Let $H$ be an exact hopset of size $|H| \leq p$. Let $P \subseteq V \times V$ be the set of node pairs that are the endpoints of paths in $\Pi$. We define the following potential function over hopsets $H$, which simply sums hopdistances over critical pairs:
$$\phi\left( H \right) := \sum \limits_{(s, t) \in P} \hopdist_{G \cup H}(s, t)$$
Observe that by Lemma \ref{lem:construction_lemma}, we have $\phi(\emptyset) = \sum_{(s, t) \in P}\hopdist_G(s, t) =  |\Pi| \cdot (\ell - 1) = 2p(\ell - 1)$. 
Now fix a pair of nodes $(x, y) \in V \times V$, and let $\Pi' \subseteq \Pi$ be the set of paths $\pi \in \Pi$ such that $x, y \in \pi$. We make the following observations.
\begin{itemize}
    \item For all $(s, t) \in P$, if the unique shortest $s \leadsto t$-path $\pi \in \Pi$ in $G$ is not in $\Pi'$, then
    $$
    \hopdist_{G \cup \{(x, y)\}}(s, t) = \hopdist_{G}(s, t).
    $$
    \item For all $(s, t) \in P$, if the unique shortest $s \leadsto t$-path $\pi \in \Pi$ in $G$ is in $\Pi'$, then
    $$
     \hopdist_{G}(s, t) - \hopdist_{G \cup \{(x, y)\}}(s, t) \leq \hopdist_G(x, y).
    $$
\end{itemize}
Then by Lemma \ref{lem:construction_lemma}, $\phi(\emptyset) - \phi(\{(x, y)\}) \leq |\Pi'| \cdot \hopdist_G(x, y) \leq \ell$. 
We obtain the following sequence of inequalities:
$$
    \phi(\emptyset) - \phi(H)  \leq \sum_{(x, y) \in H} \left(\phi(\emptyset) - \phi(\{(x, y)\})\right)  \leq \ell \cdot |H| \leq \ell p.
$$
Rearranging, we find that
$$
\phi(H) \geq \phi(\emptyset) - \ell p \geq 2p(\ell - 1) - \ell p = (\ell - 2)p.
$$
Thus, over the $|P| = 2p$ pairs of path endpoints in $P$, the average hopdistance in $G \cup H$ is $\Theta(\ell)$, and so  there must be a pair $(s, t) \in P$ such that $\hopdist_{G \cup H}(s, t) = \Theta(\ell)$ by the pigeonhole principle.

\subsection{Constructing $G$}
Our goal is now to prove Lemma \ref{lem:construction_lemma}. Let $n$ be a sufficiently large positive integer, and let $p \in [n, n^2]$.\footnote{We will handle the case where $p \in [1, n]$ later.} 
For simplicity of presentation, we will frequently ignore issues related to non-integrality of expressions that arise in our construction; these issues affect our bounds only by lower-order terms. 
Initially, all edges $(u, v)$ in $G$ will be directed from $u$ to $v$; we will convert $G$ into an undirected graph in the final step of our construction. 


\paragraph{Vertex Set $V$.}
\begin{itemize}
    \item  Let $\ell$ be a positive integer parameter of the construction to be specified later.  Our graph $G$ will have $\ell$ layers $L_1, \dots, L_{\ell}$, and each layer will have $n/\ell$ nodes, ordered from $1$ to $n/\ell$. Initially, we will label the $j$th node in layer $L_i$ with tuple $(i, j)$. We will interpret the node labeled $(i, j)$ as a point in $\mathbb{R}^2$ with integer coordinates. These $n$ nodes arranged in $\ell$ layers will be the node set $V$ of graph $G$. 
    \item We now perform the following random operation on the node labels of $V$. For each layer $L_i$, $i \in [1, \ell]$, uniformly sample a random real number in the interval $(0, 1)$ and call it $\varepsilon_i$. Now for each node in layer $L_i$ of $G$ labeled $(i, j)$, relabel this node with the label
    $$
    \left(i, j+\sum_{k=1}^j \varepsilon_k\right).
    $$
    Again, we interpret the resulting labels for nodes in $V$ as points in $\mathbb{R}^2$.
    In a slight abuse of notation, we will treat $v \in V$  as either a node in $G$ or a point in $\mathbb{R}^2$, depending on the context. 
    Less formally: for each layer $i$, this step shifts the nodes in layer $i$ vertically upwards to be $\eps_i$ higher than the previous layer (and thus, these vertical shifts compound across the layers). See Section \ref{sec:prob_approach} for intuition on this design choice.

\end{itemize}


\paragraph{Edge Set $E$.} 
\begin{itemize}
    \item All the edges in $E$ will be between consecutive layers $L_i, L_{i+1}$ of $G$. We will let $E_i$ denote the set of edges in $G$ between layers $L_i$ and $L_{i+1}$. 
    \item Just as our nodes in $V$ correspond to points in $\mathbb{R}^2$, we can interpret the edges $E$ in $G$ as vectors in $\mathbb{R}^2$. In particular, for every edge $e = (v_1, v_2) \in E$, we identify $e$ with  the corresponding vector $\Vec{u}_e := v_2 - v_1$. Note that since all edges in $E$ are between adjacent layers $L_i$ and $L_{i+1}$, the first coordinate of $\Vec{u}_e$ is $1$ for all $e \in E$. We will use $u_e$ to denote the 2nd coordinate of $\Vec{u}_e$, i.e., for all $e \in E$, we write $\Vec{u}_e = (1, u_e)$.
    \item We begin our construction of $E$ by defining the following set $C$ of vectors: 
    $$
    C := \left\{ (1, x) \mid x \in \left[0, \frac{n}{4\ell^2}\right] \right\}.
    $$
    We will refer to the vectors in $C$ as \emph{edge vectors}.
    
    \item For each $i \in [1, \ell - 1]$, let 
    $$
    C_i := \{ \Vec{c} + (0, \varepsilon_{i+1}) \}_{\Vec{c} \in C}.
    $$
    Intuitively: we want the edge vectors in $C$ to point between nodes in adjacent layers, and due to the random vertical shifts between layers applied to the nodes, we need to apply a similar shift to $C$ at each layer to adjust for this. 
    
    \item For each $v \in L_i$ and edge vector $\Vec{c} \in C_i$, if $v + \Vec{c} \in V$, then add edge $(v, v+ \Vec{c})$ to $E_i$. After adding these edges to $E_i$, we will have that
    $$
    C_i = \{\Vec{u}_e \mid e \in E_i\}.
    $$
    Additionally, note that the case $v + \Vec{c} \not \in V$ only occurs if $v + \Vec{c} = (i+1, j)$ for some $j$ that is higher than any point in the $(i+1)$st layer; that is, $j > \frac{n}{\ell} + \sum_{k = 1}^{i+1} \varepsilon_k$. 
    
    \item For each $e \in E$, if $\Vec{u}_e = (1, u_e)$, then we assign edge $e$ the weight $w(e) := u_e^2$. 
\end{itemize}
This completes the construction of our graph $G = (V, E, w)$.

\subsection{Direction Vectors, Critical Pairs, and Critical Paths}
\label{subsec:critical_paths}
Our next step is to generate a set of \emph{critical pairs} $P \subseteq V \times V$, as well as a set of \emph{critical paths} $\Pi$.
Specifically, there will be one critical path $\pi_{s,t} \in \Pi$ going between each critical pair $(s, t) \in P$, and we will show that $\pi_{s,t}$ is the unique shortest (weighted) $s \leadsto t$ path in $G$.
 We will identify our critical pairs and paths by first constructing a set of vectors $D \subseteq \mathbb{R}^2$ that we call \textit{direction vectors}, which we define next.
\paragraph{Direction Vectors $D$.}
\begin{itemize}
    \item Let $q \in \mathbb{Z}_{\geq 0}$ be a sufficiently large
    integer parameter to be specified later. The size of $q$ will roughly correspond to the maximum number of edges shared between any two critical paths in $G$.
    
    \item We choose our set of direction vectors $D$ to be \footnote{Note that if $(1, x) \in D$, then $x \in \left[1, \frac{n}{4\ell^2}  \right]$. However, if $(1, x) \in C$, then $x \in \left[0, \frac{n}{4\ell^2}\right]$. This $+1$ gap between $C$ and $D$ is needed to accommodate the $\varepsilon$-shifting operation used to obtain $C_i$, and is relevant in the proof of Lemma  \ref{lem:node_split}. }
    $$
    D := \left\{ \left(1, \text{ } x + \frac{y}{q} \right) \bigl\vert x \in \left[ 1, \text{ } \frac{n}{4\ell^2} - 1 \right] \text{ and } y \in [0, q]  \right\}. 
    $$
    Note that there are $q+1$ direction vectors between adjacent vectors $(1, x), (1, x+1) \in C$ for $x\neq 0$. Additionally, adjacent direction vectors in $D$ differ only by $1/q$ in their second coordinate.
\end{itemize}
\begin{prop}
With probability $1$, for every $i \in [1, \ell -1]$ and every direction vector $\Vec{d} = (1, d) \in D$, there is a unique vector $(1, c) \in C_i$ that minimizes $|c - d|$ over all choices of $(1, c) \in C_i$. 
\label{prop:distinct_obj_dir}
\end{prop}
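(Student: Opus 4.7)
}
The plan is to show that for any fixed layer index $i$ and any fixed $\vec{d} \in D$, the "tie event" that there are two distinct closest vectors in $C_i$ to $\vec{d}$ is a measure-zero event with respect to the continuous random variable $\varepsilon_{i+1}$, and then apply a union bound over the finitely many pairs $(i, \vec{d})$.

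First I would unpack the structure of $C_i$. Writing $C = \{(1, x) : x \in \mathbb{Z} \cap [0, n/(4\ell^2)]\}$, the set of second coordinates of $C_i$ is the shifted arithmetic progression $\{x + \varepsilon_{i+1} : x \in \mathbb{Z} \cap [0, n/(4\ell^2)]\}$, whose consecutive elements differ by exactly $1$. Thus for a fixed $\vec{d} = (1, d) \in D$, the function $c \mapsto |c - d|$ on this discrete set is minimized uniquely unless $d$ lies at the midpoint of two consecutive values, i.e.\ unless $d = x + \varepsilon_{i+1} + \tfrac{1}{2}$ for some integer $x$.

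Next I would translate this into a condition on $\varepsilon_{i+1}$. Non-uniqueness for the pair $(i, \vec{d})$ is equivalent to
\[
\varepsilon_{i+1} \in B_{i,\vec{d}} := \left\{ d - x - \tfrac{1}{2} : x \in \mathbb{Z} \right\} \cap (0,1).
\]
Since $d$ is fixed, $B_{i,\vec{d}}$ is a finite (in fact at most one-element) set. Because $\varepsilon_{i+1}$ is drawn uniformly from the continuous interval $(0,1)$, the event $\varepsilon_{i+1} \in B_{i,\vec{d}}$ has probability $0$.

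Finally I would union-bound. The index $i$ ranges over the finite set $[1, \ell-1]$, and $D$ is finite by its definition as pairs $(x, y)$ in a bounded integer grid, so there are only finitely many relevant pairs $(i, \vec{d})$. A finite union of probability-zero events still has probability zero, so almost surely the closest-vector-in-$C_i$ to every $\vec{d} \in D$ is unique for every layer $i$. This is essentially the entire argument; the only subtlety to be careful about is correctly identifying $C_i$ as a shifted integer progression (so that the only way to have equidistant closest vectors is via the consecutive midpoint condition), rather than worrying about ties between non-adjacent elements. There is no real obstacle here: the statement is a probabilistic triviality that sets up the genuinely interesting symmetry-breaking arguments in the subsequent lemmas.
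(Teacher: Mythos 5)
Your proposal is correct and follows essentially the same route as the paper: for each fixed $(i, \vec{d})$ the set of $\varepsilon_{i+1}$ values causing a tie is finite (hence measure zero under the uniform draw from $(0,1)$), and a union bound over the finitely many pairs $(i, \vec{d})$ finishes the argument. The only difference is that you spell out the explicit midpoint condition $\varepsilon_{i+1} = d - x - \tfrac{1}{2}$, which the paper leaves implicit.
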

\begin{proof}
There are only finitely many choices of $\varepsilon_{i+1} \in (0, 1)$ that result in there being two distinct vectors $(1, c_1), (1, c_2) \in C_i$ such that $|c_1 - d| = |c_2 - d|$. We conclude that the claim holds with probability $1$. 
\end{proof}
In the following we assume that this event holds, i.e., there is a unique minimizing vector in $C_i$ for all $\vec{d} \in D$.
Each of our critical paths $\pi$ in $\Pi$ will have an associated direction vector $\Vec{d} \in D$, and for all $i \in [1, \ell - 1]$, path $\pi$ will take an edge vector in $C_i$ that is closest to $\Vec{d}$ in the sense of Proposition \ref{prop:distinct_obj_dir} (see Section \ref{sec:prob_approach} for more intuition).


\paragraph{Critical Pairs $P$ and Critical Paths $\Pi$.}
\begin{itemize}
    \item We first define a set $S \subseteq L_1$ containing half of the nodes in the first layer $L_1$ of $G$:
$$
S := \left\{ (1, j + \varepsilon_1) \in L_1 \mid j \in \left[1, \frac{n}{2 \ell}\right] \right\}.
$$
We will choose our set of demand pairs $P$ so that $P \subseteq S \times L_{\ell}$.
For every node $s \in S$ and direction vector $\Vec{d} \in D$, we will create a critical pair $(s, t) \in S \times L_{\ell}$ and a corresponding critical path $\pi_{s, t}$ to add to $P$ and $\Pi$.

\item 
Let $v_1 \in S$, and let $\Vec{d} = (1, d) \in D$.
The associated path $\pi$ has start node $v_1$.
We iteratively grow $\pi$, layer-by-layer, as follows.
Suppose that currently $\pi = (v_1, \dots, v_i)$, for $i<\ell$, with each $v_i \in L_i$.
To determine the next node $v_{i+1} \in L_{i+1}$, let $E_i^{v_i} \subseteq E_i$ be the edges in $E_i$ incident to $v_i$, and let
$$
e_i := \text{argmin}_{e \in E_i^{v_i}}(|u_e - d|).
$$
By definition, $e_i$ is an edge whose first node is $v_i$; we define $v_{i+1} \in L_{i+1}$ to be the other node in $e_i$, and we append $v_{i+1}$ to $\pi$.



\item This completes our construction of $P$ and $\Pi$. Note that
\begin{itemize}
    \item we will show that the paths generated in this way have distinct endpoints (with high probability), and therefore $|P| = |S||D| \geq \frac{n^2q}{16\ell^3}$, and
    \item every path $\pi_{s, t} \in \Pi$ contains one node in each layer, and therefore its number of nodes is $|\pi_{s, t}| = \ell $. 
\end{itemize}
\end{itemize}

An important feature for correctness of our construction is that, when we iteratively generate paths, we never reach a point $v_{i}$ such that $v_i + C_i \not \subseteq L_{i+1}$ (i.e. $v_i + \Vec{c} \not \in L_{i+1}$ for some $\Vec{c} \in C_i$). 
This follows by straightforward counting, based on the maximum second coordinate used in our edge vectors $C$ and also on our choice of start nodes $S$ as only the ``lower half'' of the nodes in the first layer.
The following proposition expresses this correctness in a particular way, pointing out that for any node $v$ lying on a generated path $\pi$, none of the edges from $v$ to the following layer are omitted from the graph due to falling off the top of the grid with a too-high second coordinate.


\begin{prop}
\label{prop:path_edge}
Let $v \in L_i \cap \pi$ for some $i \in [1, \ell - 1]$ and $\pi \in \Pi$. Then $\{\Vec{u}_e \mid e \in E_i^v \} = C_i$.
\end{prop}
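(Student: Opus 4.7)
The plan is to show that, for any node $v \in L_i \cap \pi$ with $i \in [1, \ell-1]$, and any candidate edge vector $\vec{c} \in C_i$, the endpoint $v + \vec{c}$ lies in $L_{i+1}$ and hence the edge is actually present in $E_i$. Since the map $\vec{c} \mapsto (v, v+\vec{c})$ is obviously injective from $C_i$ to $E_i^v$, this immediately gives $\{\vec{u}_e \mid e \in E_i^v\} = C_i$.

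First I would set up coordinates. Writing $s_i := \sum_{k=1}^{i} \varepsilon_k$, the nodes of $L_i$ are exactly the points $(i, j + s_i)$ for $j \in [1, n/\ell]$, and $C_i = \{(1, x + \varepsilon_{i+1}) : x \in [0, n/(4\ell^2)]\}$. Hence, for any $v = (i, j + s_i) \in L_i$ and any $\vec{c} = (1, x + \varepsilon_{i+1}) \in C_i$, one has
\[
v + \vec{c} = (i+1, (j+x) + s_{i+1}),
\]
which belongs to $V$ (equivalently, to $L_{i+1}$) if and only if $j + x \in [1, n/\ell]$. Since $j \ge 1$ and $x \ge 0$ the lower bound is automatic, so the only issue is to ensure $j + x \le n/\ell$ for every admissible $x \le n/(4\ell^2)$, i.e.\ to control how large the ``original $j$-coordinate'' of $v$ can be.

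Next I would bound the original $j$-coordinate $j_i$ of the path's $i$-th node $v_i$ by a simple induction along $\pi = (v_1, v_2, \ldots)$. By definition $v_1 \in S$, so $j_1 \le n/(2\ell)$. If $v_{k+1}$ is obtained from $v_k$ by adding an edge vector $(1, x_k + \varepsilon_{k+1}) \in C_k$ with $x_k \in [0, n/(4\ell^2)]$, then $j_{k+1} = j_k + x_k$. Iterating,
\[
j_i \;\le\; \frac{n}{2\ell} + (i-1) \cdot \frac{n}{4\ell^2}.
\]

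Finally, to verify the required inequality $j_i + n/(4\ell^2) \le n/\ell$ for $i \in [1, \ell-1]$, rearrange:
\[
j_i + \frac{n}{4\ell^2} \;\le\; \frac{n}{2\ell} + i \cdot \frac{n}{4\ell^2} \;\le\; \frac{n}{\ell}
\quad\Longleftrightarrow\quad i \le 2\ell,
\]
which clearly holds since $i \le \ell - 1$. Thus for every $\vec{c} \in C_i$ we have $v_i + \vec{c} \in L_{i+1}$, so the edge $(v_i, v_i + \vec{c})$ belongs to $E_i$, completing the proof. There is no real obstacle here; the proof is essentially a bookkeeping exercise, and the role of the parameter choices is precisely to make the slack comfortable: the ``lower half'' start set $S$ gives the $n/(2\ell)$ headroom, while bounding the $C$-vectors' second coordinate by $n/(4\ell^2)$ ensures that after $\ell-1$ steps the path cannot have drifted off the top of the grid.
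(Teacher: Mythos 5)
Your proof is correct and follows essentially the same route as the paper's: bound the original second coordinate of the path's node in layer $i$ by $\frac{n}{2\ell}$ (from the ``lower half'' start set $S$) plus $(i-1)\cdot\frac{n}{4\ell^2}$ of accumulated drift, then check that adding one more edge vector of second coordinate at most $\frac{n}{4\ell^2}$ stays within the grid height $\frac{n}{\ell}$ (the paper phrases the shifts additively via $\sum_k \varepsilon_k$ rather than factoring them out, but the computation is identical).
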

\begin{proof}
Let $v =: (i, j)$, and
let $(s, t)\in P$ be the endpoints of $\pi$. 
Since $s \in S$, we have $s = (1, s_2) \in \mathbb{R}^2$, where
$$ s_2 \leq \frac{n}{2\ell} + \varepsilon_1.$$ Moreover, since for all $e \in E_i$ the corresponding vector $\Vec{u}_e = (1, u_e)$ satisfies $\varepsilon_{i+1} \leq u_e \leq \frac{n}{4 \ell^2} + \varepsilon_{i+1}$, we have
$$ j \leq \frac{n}{2\ell} +  (i - 1) \cdot \frac{n}{4\ell^2} + \sum_{k=1}^i \varepsilon_k \leq \frac{3n}{4\ell} + \sum_{k=1}^i \varepsilon_k.$$
Then observe that for all $\Vec{c} = (1, c) \in C_i$, we have that $v + \Vec{c} = (i+1, j+c)$, where
$$j + c \leq \left(\frac{3n}{4\ell} + \sum_{k=1}^i \varepsilon_k\right) + \left( \frac{n}{4\ell^2} + \eps_{i+1} \right)  \leq \frac{n}{\ell} + \sum_{k=1}^{i+1} \varepsilon_k.$$
Thus $(i+1, j+c) \in L_{i+1}$, and so we have $(v, v+ \Vec{c}) \in E_i^v$ for all $\Vec{c} \in C_i$.
It follows that $\{\Vec{u}_e \mid e \in E_i^v\} = C_i$. 
\end{proof}

\subsection{Critical paths are unique shortest paths}
\label{subsec:uniqueshortest}
We now verify that graph $G$ and paths $\Pi$ have the unique shortest path property as stated in Lemma \ref{lem:construction_lemma}. 

\begin{lemma}[Unique shortest paths]
\label{lem:unique_paths}
With probability $1$, for every $(s, t) \in P$, path $\pi_{s, t} \in \Pi$ is a unique shortest (weighted) $s \leadsto t$-path in $G$.
\end{lemma}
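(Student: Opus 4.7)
\textbf{Proof plan for Lemma \ref{lem:unique_paths}.} The plan is to use the fact that in our layered graph, every $s \leadsto t$ path must use exactly $\ell - 1$ edges (one from each $E_i$), and that the sum of its edge vectors' second coordinates equals the fixed value $T := t_2 - s_2$. With these two invariants, minimizing weight over $s \leadsto t$ paths becomes equivalent to a pointwise optimization problem at each layer, which the greedy construction of $\pi_{s,t}$ solves exactly.

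Fix a direction vector $\vec{d} = (1, d) \in D$ and write $\pi = \pi_{s,t}$. The key algebraic step is the identity
\[
u_e^2 \;=\; (u_e - d)^2 \,+\, 2d\, u_e \,-\, d^2.
\]
Summing along any $s \leadsto t$ path $\pi'$ gives
\[
w(\pi') \;=\; \sum_{e \in \pi'} (u_e - d)^2 \;+\; 2d\,T \;-\; (\ell - 1)\, d^2,
\]
and the last two terms depend only on $s, t, d,$ and $\ell$, not on the choice of path. Thus, among $s \leadsto t$ paths, $\pi'$ is a shortest path if and only if it minimizes $\sum_{e \in \pi'} (u_e - d)^2$. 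This is exactly why the construction squares the edge weights: the derivative trick converts a global sum constraint into an independent per-layer optimization.

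Next I would argue that $\pi$ is the unique minimizer of this new objective. By Proposition \ref{prop:path_edge}, at every node of $\pi$ in layer $L_i$ the full set of edge vectors $C_i$ is available, and by construction $\pi$ picks the edge vector in $C_i$ whose second coordinate is closest to $d$, i.e.\ the global minimizer of $(c - d)^2$ over $(1, c) \in C_i$. Proposition \ref{prop:distinct_obj_dir} ensures that, with probability $1$ (simultaneously for all $i \in [1, \ell-1]$ and all $\vec{d} \in D$, since there are only finitely many such pairs), this minimizer is unique. For an arbitrary $s \leadsto t$ path $\pi'$, the edge used in layer $i$ has $\vec{u}_e \in C_i$, so its contribution to the objective is at least that of $\pi$'s $i$-th edge, and strictly greater whenever the edge vectors differ.

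Finally, I would close the argument by noting that in the layered graph, a path starting at $s$ is determined by its sequence of edge vectors: if $\pi' \neq \pi$, they must differ in the edge vector used at some layer $i^*$, at which point $\pi'$ contributes a strictly larger $(u_e - d)^2$ than $\pi$, making the total sum strictly larger. Combined with the identity above, this yields $w(\pi') > w(\pi)$, proving $\pi$ is the unique shortest $s \leadsto t$ path. I do not anticipate a serious obstacle: the main subtlety is simply recognizing that the $\eps$-shifts preserve the invariant $\sum u_e = T$ for any $s \leadsto t$ path, so the algebraic reduction to a per-layer optimization goes through cleanly; the probabilistic uniqueness needed to rule out ties is already packaged into Proposition \ref{prop:distinct_obj_dir}.
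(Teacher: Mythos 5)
Your proposal is correct and takes essentially the same route as the paper: both use the facts that every $s \leadsto t$ path in the layered graph has exactly $\ell-1$ edges with a fixed sum of second coordinates, that all of $C_i$ is available along $\pi_{s,t}$ (Proposition \ref{prop:path_edge}), and that the greedy choice is the strict per-layer minimizer of $|u_e - d|$ with probability $1$ (Proposition \ref{prop:distinct_obj_dir}). Your complete-the-square identity is simply a streamlined version of the paper's Proposition \ref{prop:squared_prop}, whose case analysis $\hat{x}_i^2 - x_i^2 \le 2b(\hat{x}_i - x_i)$ is equivalent to expanding $(\hat{x}_i - b)^2 \le (x_i - b)^2$, so the two arguments coincide in substance.
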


\noindent
We begin with a technical proposition:


\begin{prop}
\label{prop:squared_prop}
    Let $b, x_1,  \dots, x_k \in \mathbb{R}$. Now consider $\hat{x}_1,  \dots, \hat{x}_k$ such that 
    \begin{itemize}
        \item $|\hat{x}_i - b| \leq |x_i - b|$ for all $i \in [1, k]$, and
        \item $\sum_{i=1}^k x_i = \sum_{i=1}^k \hat{x}_i$. 
    \end{itemize}
    Then 
    $$
    \sum_{i=1}^k x_i^2 \geq \sum_{i=1}^k \hat{x}_i^2,
    $$
    with equality only if  $|\hat{x}_i - b| = |x_i - b|$ for all $i \in [1, k]$.
\end{prop}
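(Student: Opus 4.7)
The plan is to center everything at $b$ via the identity $y^2 = (y-b)^2 + 2b(y-b) + b^2$, and then use the mean-preserving hypothesis $\sum x_i = \sum \hat{x}_i$ to cancel the linear cross-term, reducing the inequality to a termwise comparison of squared deviations from $b$.

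Concretely, first I would expand
\[
\sum_{i=1}^k x_i^2 \;=\; \sum_{i=1}^k (x_i - b)^2 \;+\; 2b \sum_{i=1}^k (x_i - b) \;+\; k b^2,
\]
and similarly
\[
\sum_{i=1}^k \hat{x}_i^2 \;=\; \sum_{i=1}^k (\hat{x}_i - b)^2 \;+\; 2b \sum_{i=1}^k (\hat{x}_i - b) \;+\; k b^2.
\]
Subtracting, the $kb^2$ terms cancel, and the two linear sums in $b$ are equal by the mean-preservation hypothesis $\sum_i x_i = \sum_i \hat{x}_i$. Thus
\[
\sum_{i=1}^k x_i^2 \;-\; \sum_{i=1}^k \hat{x}_i^2 \;=\; \sum_{i=1}^k \left[(x_i-b)^2 - (\hat{x}_i - b)^2\right].
\]

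Next, I would invoke the termwise hypothesis $|\hat{x}_i - b| \le |x_i - b|$, which squares to $(\hat{x}_i - b)^2 \le (x_i - b)^2$ for every $i$. Hence every summand on the right is nonnegative, proving the inequality $\sum x_i^2 \ge \sum \hat{x}_i^2$. For the equality clause, since the right-hand side is a sum of nonnegative quantities, it vanishes iff each summand is zero, i.e., $(x_i - b)^2 = (\hat{x}_i - b)^2$, which is equivalent to $|\hat{x}_i - b| = |x_i - b|$ for every $i \in [1, k]$.

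No real obstacle is expected here: the statement is essentially the observation that, among real tuples with a fixed sum, the second moment is a monotone function of the deviations from any chosen center $b$. The only subtlety is remembering to center at $b$ rather than at the mean, which is exactly what makes the linear term collapse under the hypothesis $\sum x_i = \sum \hat{x}_i$.
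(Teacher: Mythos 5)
Your proof is correct, and it reaches the same termwise bound as the paper by a cleaner route. The paper proves, by a four-case analysis on the relative order of $b$, $x_i$, $\hat{x}_i$, the per-index inequality $\hat{x}_i^2 - x_i^2 \le 2b(\hat{x}_i - x_i)$, and then sums and invokes $\sum_i \hat{x}_i = \sum_i x_i$ to kill the right-hand side. Your recentering identity $y^2 = (y-b)^2 + 2b(y-b) + b^2$ shows that this per-index inequality is nothing but $(\hat{x}_i - b)^2 \le (x_i - b)^2$ in disguise, which follows immediately by squaring the hypothesis $|\hat{x}_i - b| \le |x_i - b|$ --- so the case analysis is unnecessary. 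The two arguments are algebraically equivalent, but yours is shorter, and it makes the equality clause transparent: the difference of the two sums is exhibited as a sum of nonnegative terms, which vanishes only if $|\hat{x}_i - b| = |x_i - b|$ for every $i$ (in fact you get the "if and only if," slightly more than the proposition asks). Nothing is missing; this is a valid and arguably preferable write-up of the same underlying fact.
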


\begin{proof}
We will prove the equivalent statement $\sum_{i=1}^k (\hat{x}_i^2 - x_i^2) \leq 0$. Fix an $i \in [1, k]$.
First we will show that
$$
\hat{x}_i^2 - x_i^2  \leq 2b(\hat{x}_i - x_i).
$$
We  split our analysis into four cases:
\begin{itemize}
    \item \textbf{Case 1: $b \leq \hat{x}_i \leq x_i$.} In this case, $\hat{x}_i^2 - x_i^2 = (\hat{x}_i + x_i)(\hat{x}_i - x_i) \leq 2b(\hat{x}_i - x_i)$.
    \item \textbf{Case 2: $x_i \leq \hat{x}_i \leq b$.} In this case,  $\hat{x}_i^2 - x_i^2 = (\hat{x}_i + x_i)(\hat{x}_i - x_i) \leq 2b(\hat{x}_i - x_i)$.
    \item \textbf{Case 3: $\hat{x}_i \leq b \leq x_i$.} In this case, $\hat{x}_i^2 - x_i^2 = (\hat{x}_i + x_i)(\hat{x}_i - x_i)\leq 2b(\hat{x}_i - x_i)$, since $ b - \hat{x}_i \leq x_i - b$.
    \item \textbf{Case 4: $x_i \leq b \leq \hat{x}_i$.} In this case,  $\hat{x}_i^2 - x_i^2 = (\hat{x}_i + x_i)(\hat{x}_i - x_i) \leq 2b(\hat{x}_i - x_i)$, since $\hat{x}_i - b \leq b - x_i$.
\end{itemize}
Then
\begin{align*}
    \sum_{i=1}^k (\hat{x}_i^2 - x_i^2) & \leq 2b \sum_{i=1}^k (\hat{x}_i - x_i) = 0.
\end{align*}
This inequality is strict if $|\hat{x}_i - b| < |x_i - b|$ for some $i \in [1, k]$. 
\end{proof}

Using Proposition \ref{prop:squared_prop}, we can now prove Lemma \ref{lem:unique_paths}.

\begin{proof}[Proof of Lemma \ref{lem:unique_paths}]
Fix an $(s, t) \in P$, and let $(1, x) \in D$ be the direction vector  associated with $\pi_{s, t}$. Let $\hat{x}_1, \dots, \hat{x}_{\ell - 1} \in \mathbb{R}$ be  real numbers such that the $i$th edge of $\pi_{s, t}$ has the corresponding vector $(1, \hat{x}_i) \in C_i$ for $i \in [1, \ell - 1]$. 
Now consider an arbitrary $s \leadsto t$-path $\pi$ in $G$, where $\pi \neq \pi_{s, t}$. Since all edges in $G$ are directed from $L_i$ to $L_{i+1}$, it follows that $\pi$ has $\ell - 1$ edges and the $i$th edge of $\pi$ is in $E_i$. Let $x_1, \dots, x_{\ell - 1} \in \mathbb{R}$ be real numbers such that the $i$th edge of $\pi$ has the corresponding vector $(1, x_i) \in C_i$ for $i \in [1, \ell - 1]$. 
Now observe that since $\pi$ and $\pi_{s, t}$ are both $s \leadsto t$-paths, it follows that $$\sum_{i=1}^{\ell - 1} \hat{x}_i = \sum_{i=1}^{\ell - 1} x_i.$$ Additionally, by our construction of $\pi_{s, t}$, it follows that   
$$
|\hat{x}_i - x | \leq |x_i - x|
$$
for all $i \in [1, \ell - 1]$. In particular, since $\pi \neq \pi_{s, t}$, there must be some $j \in [1, \ell - 1]$ such that $\hat{x}_j \neq x_j$, and so by Proposition \ref{prop:distinct_obj_dir},  $|\hat{x}_j - x | < |x_j - x|$ with probability 1.  
Then by Proposition \ref{prop:squared_prop},
$$
w(\pi_{s, t}) = \sum_{e \in \pi_{s, t}}w(e) = \sum_{i=1}^{\ell - 1} \hat{x}_i^2 < \sum_{i=1}^{\ell - 1} x_i^2 = \sum_{e \in \pi} w(e) = w(\pi).
$$
Path $\pi_{s, t}$ is a unique shortest $s \leadsto t$-path in $G$, as desired. 
\end{proof}

\subsection{Critical Paths Intersection Properties}

Before finishing our proof of Lemma \ref{lem:construction_lemma}, we will need to establish several properties of the critical paths in $\Pi$. 

\begin{prop}
Let $\pi_1, \pi_2 \in \Pi$ be two critical paths with the same corresponding direction vector $\Vec{d} \in D$. Then $\pi_1 \cap \pi_2 = \emptyset$.    
\label{prop:dist_dir}
\end{prop}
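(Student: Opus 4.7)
The plan is to show that two critical paths $\pi_1, \pi_2 \in \Pi$ sharing the same direction vector $\vec{d} \in D$ traverse the \emph{same} edge vector at every layer, so they remain a fixed nonzero offset apart and therefore never share a node.

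First I would note that, by construction, each critical path is indexed by its start node $s \in S$ together with its direction vector. So if $\pi_1 \ne \pi_2$ but they share $\vec{d}$, they must have distinct start nodes $s_1, s_2 \in S \subseteq L_1$, differing only in their second coordinate. Let $v_i^{(1)}, v_i^{(2)}$ denote the nodes of $\pi_1, \pi_2$ in layer $L_i$, so $v_1^{(1)} = s_1$ and $v_1^{(2)} = s_2$.

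The core observation is that the greedy selection rule $e_i := \mathrm{argmin}_{e \in E_i^{v_i}}|u_e - d|$ depends on the current node $v_i$ only through the set of available edge vectors $\{\vec{u}_e : e \in E_i^{v_i}\}$. By Proposition~\ref{prop:path_edge}, for every node on a critical path this set equals $C_i$. Therefore the chosen edge vector at layer $i$ is $\mathrm{argmin}_{\vec{c} \in C_i}|c - d|$, which is determined entirely by $\vec{d}$ and $C_i$ and, by Proposition~\ref{prop:distinct_obj_dir}, is almost surely unique. Consequently $\pi_1$ and $\pi_2$ use the \emph{same} edge vector $\vec{c}_i \in C_i$ at every layer $i \in [1, \ell-1]$.

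Finally, telescoping gives
\[
v_i^{(1)} - v_i^{(2)} \;=\; s_1 - s_2 \;\ne\; \vec{0}
\]
for all $i$, since the increments $\vec{c}_1, \dots, \vec{c}_{i-1}$ cancel. Hence no layer contains a common node and $\pi_1 \cap \pi_2 = \emptyset$. I do not anticipate a real obstacle: once one recognizes that the argmin step is a function of $(\vec{d}, C_i)$ alone and invokes Proposition~\ref{prop:path_edge} to guarantee $C_i$ is fully available, the conclusion follows immediately from the start nodes being distinct.
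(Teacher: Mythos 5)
Your proposal is correct and follows essentially the same argument as the paper: both use Proposition~\ref{prop:path_edge} to conclude that the greedy rule selects the same edge vector in $C_i$ at every layer for paths sharing a direction vector, and then note that the constant offset between the distinct start nodes persists across all layers. The explicit appeal to Proposition~\ref{prop:distinct_obj_dir} for uniqueness of the argmin is consistent with the paper's standing assumption after that proposition.
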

\begin{proof}
    Let $v_i^j \in L_i$ denote the $i$th node of $\pi_j$, where $j \in \{1, 2\}$.  
    Note that since $\pi_1$ and $\pi_2$ share the same direction vector $\Vec{d}$, 
    the edges $(v_i^1, v_{i+1}^1)$ and $(v_i^2, v_{i+1}^2)$ have the same corresponding vector $\Vec{u}_i \in C_i$ for all $i \in [1, \ell - 1]$  by Proposition \ref{prop:path_edge}.
    By our construction of $\Pi$, for each node in the first layer $v \in L_1$, $v$ belongs to at most one path $\pi \in \Pi$ with direction vector $\Vec{d}$, so $v_1^1 \neq v_1^2$. 
    Then for all $i \in [1, \ell]$,
    $$
    v_i^1 = v_1^1 + \sum_{i=1}^{i-1} \Vec{u}_i \neq v_1^2 + \sum_{i=1}^{i-1} \Vec{u}_i = v_i^2. \qedhere
    $$
\end{proof}

Let $\pi_1, \pi_2 \in \Pi$ be two critical paths, and let $v \in V$ be a node in $G$. We say that paths $\pi_1$ and $\pi_2$ \textit{split} at $v$ if $v \in \pi_1 \cap \pi_2$ and 
the node following $v$ in $\pi_1$ is distinct from the node following $v$ in $\pi_2$, and we simply say that $\pi_1$ and $\pi_2$ \emph{split} if there exists some $v \in V$ such that they split at $v$.
Note that since $\pi_1, \pi_2 \in \Pi$ are unique shortest paths in $G$, paths $\pi_1$ and $\pi_2$ can split at most once.







\begin{lemma}
\label{lem:node_split}
Fix a node $v \in L_i$, where $i \in [1, \ell - 1]$, and let $\pi_1, \pi_2 \in \Pi$ be critical paths with direction vectors $(1, d_1), (1, d_2) \in D$ such that $v \in \pi_1$ and $v \in \pi_2$. Then paths $\pi_1$ and $\pi_2$ split at $v$ with probability at least $\min \left\{|d_1 - d_2|, 1\right\}$.\footnote{For the sake of completeness, let us be more precise here about the probability claim being made in this lemma.
Consider any two paths $\pi_1, \pi_2$, indexed by two start nodes and two direction vectors, and consider a node $v \in L_i$.
The event that we generate $\pi_1, \pi_2$ in such a way that $v \in \pi_1 \cap \pi_2$ depends only on the random choices of $\eps_1, \dots, \eps_i$. If $v \in \pi_1 \cap \pi_2$, then the event that $\pi_1, \pi_2$ split at $v$ depends only on the random choice of $\eps_{i+1}$.
The claim is: conditional on the event that $\{\eps_1, \dots, \eps_{i}\}$ are selected in such a way that $v \in \pi_1 \cap \pi_2$, the probability that $\eps_{i+1}$ is selected such that $\pi_1, \pi_2$ split at $v$ is at least $\min \left\{|d_1 - d_2|, 1\right\}$.}
\end{lemma}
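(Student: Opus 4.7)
The plan is to reduce the claim to a one-variable computation in $\varepsilon_{i+1}$. The conditioning event ``$v \in \pi_1 \cap \pi_2$'' is determined entirely by $\varepsilon_1, \ldots, \varepsilon_i$ (together with the fixed start nodes and direction vectors of $\pi_1, \pi_2$), so conditional on this event $\varepsilon_{i+1}$ is still uniformly distributed on $(0, 1)$. It therefore suffices to estimate, as a function of $\varepsilon_{i+1}$ alone, the probability that the next edges taken by $\pi_1$ and $\pi_2$ out of $v$ differ.

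The first step is to use Proposition \ref{prop:path_edge}: since $v$ lies on a critical path, the edge-vectors out of $v$ toward $L_{i+1}$ are exactly $C_i = \{(1, c + \varepsilon_{i+1}) : c \in \{0, 1, \ldots, \lfloor n/(4\ell^2) \rfloor\}\}$. By construction the edge taken by $\pi_j$ at $v$ corresponds to the integer $c_j^*$ minimizing $|c + \varepsilon_{i+1} - d_j|$, i.e.\ the integer nearest to $d_j - \varepsilon_{i+1}$. Since $d_j \in [1, n/(4\ell^2)]$ and $\varepsilon_{i+1} \in (0, 1)$ we have $d_j - \varepsilon_{i+1} \in (0, n/(4\ell^2))$, so $c_j^*$ is a genuine nearest integer with no boundary clipping --- this is precisely the role played by the $+1$ gap between $C$ and $D$ flagged in the definition of $D$.

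Now WLOG $d_1 \le d_2$ and set $\Delta := d_2 - d_1$. The paths split at $v$ exactly when $c_1^* \neq c_2^*$. Because the nearest-integer function is constant between consecutive half-integers and jumps at each $k + \tfrac{1}{2}$, this happens iff some half-integer lies in the open interval $(d_1 - \varepsilon_{i+1},\, d_2 - \varepsilon_{i+1})$, equivalently
\[
\varepsilon_{i+1} \;\in\; \bigcup_{k \in \mathbb{Z}} \bigl(d_1 - k - \tfrac{1}{2},\ d_2 - k - \tfrac{1}{2}\bigr)
\]
(the measure-zero tie cases, where $d_j - \varepsilon_{i+1}$ is exactly a half-integer, can be safely ignored). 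These intervals are $1$-periodic integer translates of a common length-$\Delta$ interval: if $\Delta \le 1$ they are pairwise disjoint and by periodicity their intersection with $(0, 1)$ has measure exactly $\Delta$, while if $\Delta > 1$ consecutive intervals overlap and their union is all of $\mathbb{R}$, giving measure $1$. Either way the splitting probability equals $\min\{\Delta, 1\} = \min\{|d_1 - d_2|, 1\}$, as claimed. The only steps requiring any care are setting up the conditioning so that $\varepsilon_{i+1}$ remains uniform and invoking Proposition \ref{prop:path_edge} to rule out boundary edges; once these are in place the calculation is a short periodicity argument, and I do not expect any serious obstacle.
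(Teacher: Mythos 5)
Your proposal is correct and takes essentially the same route as the paper: your condition ``some half-integer lies in $(d_1 - \varepsilon_{i+1},\, d_2 - \varepsilon_{i+1})$'' is exactly the paper's event $F$ that $(\mathbb{Z} + \varepsilon_{i+1} + 1/2) \cap (d_1, d_2) \neq \emptyset$, and both arguments then combine Proposition \ref{prop:path_edge} with the uniformity of $\varepsilon_{i+1}$ to bound the probability by $\min\{|d_1-d_2|,1\}$. The only cosmetic difference is that you phrase the split criterion as a nearest-integer jump and compute the measure exactly via periodicity, whereas the paper proves only the implication ``$F$ implies split'' by contradiction, which suffices for the stated lower bound.
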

\begin{proof}
By Proposition \ref{prop:dist_dir}, $d_1 \neq d_2$, and assume wlog that $d_1 < d_2$. 
Let $F$ be the event that the random variable $\varepsilon_{i+1}$ was sampled so that 
$$
(\mathbb{Z} + \varepsilon_{i+1} + 1/2) \cap (d_1, d_2) \neq \emptyset,
$$
(where $(d_1, d_2) \subseteq \mathbb{R}$ is the open interval with endpoints $d_1$ and $d_2$).
Our proof strategy is to show that $F$ implies that $\pi_1, \pi_2$ split at $v$, and then to show that $F$ occurs with the claimed probability.

\paragraph{$F$ implies that $\pi_1, \pi_2$ split at $v$.}

Assume that $F$ occurs.
By construction there is a nonnegative integer $c \in \mathbb{Z}$ such that $ c + \varepsilon_{i+1} + 1/2$ is in the interval $(d_1, d_2)$. Since $(d_1, d_2) \subseteq \left(1, \frac{n}{4\ell^2 } \right)$, it follows that vectors $(1, c + \varepsilon_{i+1}), (1, c + \varepsilon_{i+1} + 1) \in \mathbb{R}^2$ are in $C_i$, because $0 \leq c \leq \frac{n}{4\ell^2} - 1$. More generally, by our choice of sets $C$ and $D$ there are  vectors $(1, c_1), (1, c_2) \in C_i$ such that
$$
c_1 \leq d_1 \leq c_1 + 1  \text{ \qquad and \qquad } c_2 \leq d_2 \leq c_2 + 1.
$$
Now we claim that 
$$
\text{argmin}_{(1, x) \in C_i} |x - d_1| \neq \text{argmin}_{(1, x) \in C_i} |x - d_2|. 
$$
To see this, suppose for the sake of contradiction that there is a vector $(1, y) \in C_i$ such that $$(1, y) = \text{argmin}_{(1, x) \in C_i} |x - d_1| = \text{argmin}_{(1, x) \in C_i} |x - d_2|.$$ 
Then using our assumption that $d_1 < c + \varepsilon_{i+1} + 1/2 < d_2$, we obtain 
$$
y - d_1 \leq |y - d_1| \leq \min\{|c_1 - d_1|, |(c_1 + 1) - d_1|\} \leq 1/2 < (c + \varepsilon_{i+1} + 1) - d_1
$$
and
$$
d_2 - y \leq |y - d_2| \leq \min\{|c_2 - d_2|, |(c_2 + 1) - d_2|\} \leq 1/2 < d_2 - (c + \varepsilon_{i+1}).
$$
Together, these two sequences of inequalities imply that $c + \varepsilon_{i+1}< y < c + \varepsilon_{i+1} + 1$.  But this contradicts our assumption that $(1, y) \in C_i$, so we conclude that 
$$
\text{argmin}_{(1, x) \in C_i} |x - d_1| \neq \text{argmin}_{(1, x) \in C_i} |x - d_2|. 
$$
By Proposition \ref{prop:path_edge}, $\{\Vec{u}_e \mid e \in E_i^v\} = C_i$, so we have also shown that 
$$
\text{argmin}_{\{\Vec{u}_e \mid e \in E_i^v\}} |u_e - d_1| \neq \text{argmin}_{\{\Vec{u}_e \mid e \in E_i^v\}} |u_e - d_2|.
$$
Then $\pi_1$ and $\pi_2$ must split at $v$ by our construction of the critical paths in $\Pi$. 

\paragraph{$F$ happens with good probability.}

Since $\varepsilon_{i+1}$ is sampled uniformly at random from the interval $(0, 1)$,
it follows that $(\mathbb{Z} + \varepsilon_{i+1} + 1/2) \cap (d_1, d_2) \neq \emptyset$ with probability at least $\min\{d_2 -d_1, 1\}$.
\end{proof}

We will use Lemma \ref{lem:node_split} to prove the following two lemmas, which capture the key properties of our graph $G$. 

\begin{lemma}[Low path overlap]
Let $\pi_1, \pi_2 \in \Pi$ be critical paths with distinct associated direction vectors $(1, d_1), (1, d_2) \in D$. Then:
\begin{itemize}
\item If $|d_1 - d_2| < 1$, then with probability at least $1 - n^{-8}$, we have\footnote{Formally, we consider any two paths $\pi_1, \pi_2$ indexed by two start nodes and direction vectors.  When we iteratively generate these paths, the number of nodes in their intersection (possibly $0$) depends only on the random choices of $\eps_1, \dots, \eps_{\ell}$.  The probability claim in this lemma is with respect to these random choices.}
$|\pi_1 \cap \pi_2| \leq \frac{8 \log n}{ |d_1 - d_2|}.$

\item If $|d_1 - d_2| \ge 1$, then $|\pi_1 \cap \pi_2| \leq 1$ (deterministically).
\end{itemize}
\label{lem:path_overlap}
\end{lemma}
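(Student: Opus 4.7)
The plan rests on two observations: \textbf{(i)} any intersection $\pi_1 \cap \pi_2$ must consist of the shared nodes in a single contiguous range of layers, and \textbf{(ii)} Lemma~\ref{lem:node_split} gives a per-layer lower bound on the probability that the paths split at a shared node. For \textbf{(i)}, I would argue: if $u \in L_i$ and $w \in L_j$ (with $i < j$) both lie in $\pi_1 \cap \pi_2$, then the $u \leadsto w$ subpaths of $\pi_1$ and $\pi_2$ are each a shortest $u \leadsto w$ path in $G$, and by the unique-shortest-path property both are in fact the unique shortest $u \leadsto w$ path (any alternate subpath would swap in to give an alternate shortest path between the original endpoints), so they coincide. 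Thus $\pi_1 \cap \pi_2$ is the set of shared nodes in some contiguous range of layers $L_{i^*},\dots,L_{i^* + k - 1}$, where $k = |\pi_1 \cap \pi_2|$.

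For the deterministic case $|d_1 - d_2| \geq 1$, I would leverage that the interval $(d_1,d_2)$ has length at least $1$ and hence always meets $\mathbb{Z} + \varepsilon + 1/2$ regardless of $\varepsilon \in (0,1)$. Replaying the derivation inside the proof of Lemma~\ref{lem:node_split}, this forces $\pi_1, \pi_2$ to split at \emph{every} shared node (not just with high probability). If $|\pi_1 \cap \pi_2| \geq 2$, then at the earliest shared node, say in layer $L_{i^*}$, the next shared node (in $L_{i^*+1}$) would force both paths to take the same edge out of $L_{i^*}$, i.e., \emph{not} split there, contradicting deterministic splitting. Hence $|\pi_1 \cap \pi_2| \leq 1$.

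For the probabilistic case $|d_1 - d_2| < 1$, set $K := \frac{8 \log n}{|d_1 - d_2|}$. By \textbf{(i)}, the event $|\pi_1 \cap \pi_2| \geq K$ implies that there is some starting layer $i^*$ such that $\pi_1,\pi_2$ share a node in $L_{i^*}$ and, at each of the shared nodes in $L_{i^*},\dots,L_{i^*+K-2}$, the two paths fail to split. The footnote to Lemma~\ref{lem:node_split} tells us that, conditional on the event that the shared node at $L_j$ exists, the event of not splitting at that node is determined solely by $\varepsilon_{j+1}$ and has conditional probability at most $1 - |d_1 - d_2|$. Since $\varepsilon_{i^*+1},\dots,\varepsilon_{i^*+K-1}$ are independent, chaining these conditional bounds gives, for each fixed $i^*$, probability at most $(1-|d_1-d_2|)^{K-1} \leq e^{-|d_1-d_2|(K-1)}$ of sustaining the full run of $K$ shared layers. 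A union bound over the $\leq \ell \leq n$ possible values of $i^*$ then yields $\Pr[\,|\pi_1\cap\pi_2| \geq K\,] \leq n \cdot e^{-|d_1-d_2|(K-1)} \leq n^{-8}$ after a routine adjustment of constants.

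The main technical care goes into making the conditioning and independence watertight: specifically, that after conditioning on the configuration determined by $\varepsilon_1,\dots,\varepsilon_j$ (which fixes whether $\pi_1,\pi_2$ share a node in $L_j$), the splitting event at that node is a function only of $\varepsilon_{j+1}$, so that the ``don't-split'' events across different layers compose as if independent. The footnote to Lemma~\ref{lem:node_split} already does this bookkeeping, so once it is in hand the remainder is just a chain rule and a union bound; I do not expect a substantive obstacle beyond the structural subpath argument that closes the intersection into a contiguous block.
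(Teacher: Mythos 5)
Your proposal is correct and follows essentially the same route as the paper: contiguity of $\pi_1 \cap \pi_2$ from the unique-shortest-path property, the per-shared-node split probability from Lemma \ref{lem:node_split} (with the $|d_1-d_2|\ge 1$ case giving a deterministic split), and geometric decay of the run of ``no-split'' events over fresh, independent $\varepsilon$'s. The one difference is your union bound over the starting layer $i^*$: the paper avoids it by conditioning on wherever the first shared node happens to occur and chaining the conditional no-split bounds, which gives $(1-|d_1-d_2|)^{8\log n/|d_1-d_2|} \le n^{-8}$ uniformly over all starting positions, whereas your extra factor of $n$ only yields roughly $n^{-7}$ unless you bump the constant $8$ as you note --- a harmless loss for the subsequent union bound over pairs in $\Pi$, but worth dropping since the cleaner conditioning argument needs no adjustment.
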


\begin{proof}
We begin with the first point; suppose $|d_1 - d_2| < 1$.
Suppose we iteratively generate $\pi_1, \pi_2$ one layer at a time.
Each time we choose a node $v$ that lies in both $\pi_1$ and $\pi_2$,
by Lemma \ref{lem:node_split}, 
$\pi_1$ and $\pi_2$ split at $v$ with probability at least $|d_1 - d_2|$ (over the random choice of $\eps_{i+1}$).
Moreover, since $\pi_1$ and $\pi_2$ are unique shortest paths in $G$ and $G$ is acyclic, it follows that $\pi_1 \cap \pi_2$ is a contiguous subpath of $\pi_1$ and $\pi_2$; thus, once they split, they can no longer intersect in later layers.
The number of nodes in the intersection $|\pi_1 \cap \pi_2|$ is $1$ more than the number of consecutive nodes at which $\pi_1, \pi_2$ intersect but do not split.
So by the above discussion, we have
\begin{align*}
    \Pr\left[|\pi_1 \cap \pi_2| > \frac{8 \log n}{ |d_1 - d_2|}\right] & \leq (1 - |d_1 - d_2|)^{\frac{8 \log n}{ |d_1 - d_2|}}\\
    & \leq e^{-|d_1 - d_2| \cdot \frac{8 \log n}{ |d_1 - d_2|}}\\
    & \leq e^{-8 \log n} \\
    & = n^{-8} 
\end{align*}

For the second point of the lemma: if $|d_1, d_2| \ge 1$, then by Lemma \ref{lem:node_split}, if there is a node $v \in \pi_1 \cap \pi_2$, then $\pi_1$ and $\pi_2$ split at $v$ with probability $1$, and then they can no longer intersect in later layers.
So we have $|\pi_1 \cap \pi_2| \leq 1$.
\end{proof}
Since $|\Pi| = p \leq n^2$, we can argue by a union bound that Lemma \ref{lem:path_overlap} holds for all $\pi_1, \pi_2 \in \Pi$ simultaneously with probability at least $1 - n^{-4}$.  From now on, we will assume  that this property holds for our constructed graph $G$.


Once we specify our construction parameters $\ell$ and $q$, the following lemma will immediately imply the third property of $G$ as stated in Lemma \ref{lem:construction_lemma}. 


\begin{lemma}
Let $x, y \in V$ be distinct nodes in $G$, and let $z =\hopdist_G(x, y)$.
Let
$$
\Pi' := \{ \pi \in \Pi \mid x, y \in \pi \}. 
$$
Then $|\Pi'| \leq  \max\left\{\frac{16q \log n}{z}, 1 \right\}$.
\label{lem:no_good_hops}
\end{lemma}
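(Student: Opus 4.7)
The plan is to first show that any two paths in $\Pi'$ are forced to have their direction vectors lying within a very short interval, and then to count the direction vectors of $D$ that fit inside such an interval. I would begin by fixing $x \in L_i$ and $y \in L_j$. Since $G$ is layered and every path in $\Pi$ contains exactly one node per layer, $\Pi'$ is empty unless $i < j$, in which case $z = j - i$. For any two paths $\pi_1, \pi_2 \in \Pi'$, both $x$ and $y$ lie in $\pi_1 \cap \pi_2$. By the uniqueness-of-shortest-paths argument already invoked in the proof of Lemma \ref{lem:path_overlap} (if $\pi_1, \pi_2$ split apart and later re-merged, one could splice an alternate shortest path between their endpoints, contradicting uniqueness), this intersection is a single contiguous subpath, so $\pi_1$ and $\pi_2$ must coincide on all layers $i, i+1, \dots, j$. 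Hence $|\pi_1 \cap \pi_2| \geq z + 1$.

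Next I would plug this lower bound into Lemma \ref{lem:path_overlap} to constrain the direction vectors $(1, d_1), (1, d_2) \in D$ of $\pi_1, \pi_2$. Proposition \ref{prop:dist_dir} rules out $d_1 = d_2$, since paths sharing a direction vector are node-disjoint whereas $x \in \pi_1 \cap \pi_2$. The second bullet of Lemma \ref{lem:path_overlap} rules out $|d_1 - d_2| \geq 1$, which would force $|\pi_1 \cap \pi_2| \leq 1 < z+1$. Hence $0 < |d_1 - d_2| < 1$, and the first bullet of the same lemma gives
$$ z + 1 \;\leq\; |\pi_1 \cap \pi_2| \;\leq\; \frac{8 \log n}{|d_1 - d_2|}, $$
so $|d_1 - d_2| \leq 8 \log n / z$. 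As this holds for every pair in $\Pi'$, the second coordinates of the direction vectors of paths in $\Pi'$ are pairwise within distance $8\log n / z$, and so all lie in a single interval of length $L := 8 \log n / z$.

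To finish, I would count. By the definition of $D$, the second coordinates of its direction vectors lie on a lattice of spacing $1/q$, so any interval of length $L$ contains at most $Lq + 1 = 8 q \log n / z + 1$ of them. Combined with Proposition \ref{prop:dist_dir}, which says that each direction vector is used by at most one path in $\Pi'$, this gives $|\Pi'| \leq 8 q \log n / z + 1$. If $8 q \log n / z \geq 1$, this bound is at most $16 q \log n / z$. Otherwise the forced pairwise distance $8 \log n / z$ is strictly less than the lattice spacing $1/q$, which is impossible for two distinct direction vectors, so $|\Pi'| \leq 1$. Either way, $|\Pi'| \leq \max\{ 16 q \log n / z, \; 1 \}$, as desired.

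I do not anticipate a serious obstacle: the heavy lifting is already done by Lemma \ref{lem:path_overlap} and Proposition \ref{prop:dist_dir}, and the argument above is essentially a packaging of those facts with the lattice structure of $D$. The only piece that needs a sliver of care is the case split between ``the forced interval contains several lattice points of $D$'' and ``it is too short to contain two,'' which is exactly what produces the two sides of the $\max$ in the statement.
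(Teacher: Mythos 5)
Your proof is correct and takes essentially the same route as the paper's: both rest on Proposition \ref{prop:dist_dir}, Lemma \ref{lem:path_overlap}, the $1/q$ spacing of the direction vectors in $D$, and the observation that two unique shortest paths through both $x$ and $y$ share the $(z+1)$-node subpath between them, and both arrive at the bound $8q\log n/z + 1$ followed by the same integrality case split. The only cosmetic difference is that the paper applies Lemma \ref{lem:path_overlap} once, to the pair with extreme direction vectors (whose gap is at least $(k-1)/q$), whereas you apply it to every pair and then count lattice points of $D$ inside an interval of length $8\log n/z$ -- a rearrangement of the same computation.
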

\begin{proof}

Let $\Pi' = \{\pi_1, \dots, \pi_k\}$ and let $(1, d_i) \in D$ be the  direction vector associated with $\pi_i$ for $i \in [1, k]$.
Since the paths in $\Pi'$ all intersect, by Proposition \ref{prop:dist_dir} we must have $d_i \neq d_j$ for $i \neq j$.  
Let $a = \min_{i \in [1, k]} d_i$ and let $b = \max_{i \in [1, k]} d_i$. Then 
$$
d_b - d_a \geq \frac{k - 1}{q},
$$
since $|d_i - d_j| \geq 1/q$ for all $(1, d_i), (1, d_j) \in D$ such that $i \neq j$.
Thus, by Lemma \ref{lem:path_overlap} we must have $\frac{k-1}{q} < 1$, since we have at least two nodes $x, y \in \pi_a \cap \pi_b$.
So by Lemma \ref{lem:path_overlap}, 
$$
|\pi_a \cap \pi_b| \leq \frac{8 \log n}{d_b - d_a} \leq \frac{8 q \log n}{k-1}.
$$

Since $x, y \in \pi_a \cap \pi_b$ and $\pi_a$ and $\pi_b$ are unique shortest paths in $G$, it follows that they coincide on their subpaths $\pi_a[x, y] = \pi_b[x, y]$.
Moreover, since the hopdistance from $x$ to $y$ in $G$ is $z$, it follows that $|\pi_a[x, y]| = |\pi_b[x, y]| = z + 1$.
Then taken together we have 
$$
z \leq |\pi_a \cap \pi_b| \leq  \frac{8 q \log n}{k-1}.
$$
Rearranging, we get
$$k \leq \frac{8q \log n}{z} + 1.$$
If $\frac{8q\log n}{z} \geq 1$, then this implies $k \leq \frac{16q \log n}{z}$.
Otherwise, if $\frac{8q\log n}{z} < 1$, then this implies that $k \leq 1$ since $k \in \mathbb{Z}$.  
%
%
%
%
\end{proof}

\subsection{Finishing the proof of Lemma \ref{lem:construction_lemma}}


We note that Theorem \ref{thm:eh} is trivial in the parameter regime $p = \Omega(n^2 / \log n)$, since its lower bound on hopbound is $\Omega(1)$.
So we will assume $p = O(n^2 / \log n)$ in the following, with as small of an implicit constant as needed.
Let
$$\ell =  \frac{n}{2^{10}p^{1/2} \log^{1/2} n} \text{\qquad and \qquad} q = \frac{\ell}{2^{10}\log n}.$$ 

We now quickly verify that graph $G$ and associated critical paths $\Pi$ satisfy the properties of Lemma \ref{lem:construction_lemma}:
\begin{itemize}
    \item By construction, $G$ has $\ell = \Theta\left( \frac{n}{p^{1/2}\log^{1/2}n} \right)$ layers, and each path in $\Pi$ travels from the first layer to the last layer.  
    \item  Proposition \ref{prop:dist_dir} implies that each vertex has at most $|D| = O\left( qn / \ell^2 \right)$ paths passing through it. By our choice of construction parameters $\ell$ and $q$, we conclude that $|D| = O(\ell \cdot p / n)$.
    \item Each path $\pi \in \Pi$ is a unique shortest path between its endpoints in $G$ by Lemma \ref{lem:unique_paths}. 
    \item Since $16q \log n \leq \ell$ and $\hopdist_G(u, v) \leq \ell$, Lemma \ref{lem:no_good_hops} immediately implies that  for all $u, v \in V$, there are at most 
    $$ \max\left\{\frac{16q \log n}{\hopdist_G(u, v)}, 1 \right\}  \leq \frac{\ell}{\hopdist_G(u, v)} $$
    paths in $\Pi$ that contain both $u$ and $v$. 
    \item For all critical pairs $(s, t) \in P$, the hopdistance from $s$ to $t$ in $G$ is $\ell - 1 \gg 16q \log n$. Then by Lemma \ref{lem:no_good_hops}, each of the $|\Pi|$ paths constructed in Section \ref{subsec:critical_paths} have distinct endpoints and thus are distinct. Then
    $$
    |\Pi| = |S||D| \geq \frac{n}{2\ell} \cdot \frac{nq}{8\ell^2} \geq \frac{n^2q}{16\ell^3} \geq p.
    $$ 
\end{itemize}

We have shown that our \textit{directed} graph $G$ satisfies the properties of Lemma \ref{lem:construction_lemma} in the regime of $p \in [n, n^2]$. Moreover,  our construction still goes through  even in the extended regime of $p \in [n/c, n^2]$ for any constant $c>0$. All that remains is to extend our construction to the entire regime of $p \in [1, n]$ and make $G$ undirected.

\paragraph{Extending the construction to $p \in [1, n]$.}
We can extend our construction to the regime of $p \in [1, n]$ with a simple modification to $G$ that was previously used in the prior work of \cite{KP22a}. We will sketch the modification here and defer the proof of correctness to Lemma  \ref{lem:small_p} in Appendix \ref{app:small_p}.

Let $G(n, p)$ denote an instance of our originally constructed graph $G$ with input parameters $n$ and $p \in [n, n^2]$. Let $n$ be a sufficiently large integer and let $p \in [1, n]$. Let $G_1 := G(p_1, p)$ where $p_1 = \Theta(p)$ and $p_1$ divides $n$.  Now for each node $v$ in $G_1$, replace $v$ with a directed path $\pi_v = (u_1^v, \dots, u_{k}^v)$ with $ k = n/p_1$ nodes. For all $v \in V$, assign weight $0$ to all edges in $\pi_v$. For each edge $(v_1, v_2)$ originally in $G_1$, add edge $(u_{k}^{v_1}, u_1^{v_2})$ to the graph. Let $G_2$ be the resulting graph, and let $\Pi_2$ be the updated set of critical paths. This completes the modification.
\begin{lemma}
 The $n$-node graph $G_2$ and the set $\Pi_2$ of  $|\Pi_2| \geq p$ paths satisfy the properties of Lemma \ref{lem:construction_lemma}.
\label{lem:small_p}
\end{lemma}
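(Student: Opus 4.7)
The plan is to verify each of the four properties of Lemma \ref{lem:construction_lemma} for $G_2$ and $\Pi_2$, leveraging the fact that Lemma \ref{lem:construction_lemma} already applies to the inner graph $G_1 = G(p_1, p)$ (since $p_1 = \Theta(p)$ can be chosen so that $p \in [p_1, p_1^2]$). The central observation is that the path-expansion induces a natural bijection between $\Pi_1$ and $\Pi_2$: each $\tilde{\pi} \in \Pi_2$ is obtained from the corresponding $\pi \in \Pi_1$ by replacing each node $v \in \pi$ with the weight-zero internal path $\pi_v$. This preserves the weight along critical paths while scaling their hopdistances by a factor of $k = n/p_1 = \Theta(n/p)$.

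Using this, the structural and uniqueness properties are quickly handled. Graph $G_2$ has $p_1 \cdot k = n$ nodes, and organizes naturally into $\ell_2 := \ell_1 \cdot k$ layers by placing $u_j^v$ (with $v \in L_i$ of $G_1$) into layer $(i-1)k + j$; each $\tilde{\pi} \in \Pi_2$ visits exactly one such layer. Substituting $\ell_1 = \Theta(p^{1/2}/\log^{1/2} p)$ and $k = \Theta(n/p)$, and using $\log p \le \log n$, gives $\ell_2 = \Omega(n/(p^{1/2}\log^{1/2} n))$, matching the target bound (truncating $k$ if strictly larger). For uniqueness of shortest paths, any alternative $s \leadsto t$-path in $G_2$ contracts to an $s \leadsto t$-path in $G_1$ of equal weight (since internal $\pi_v$ edges are zero-weight), so uniqueness of each $\pi \in \Pi_1$ as a shortest $s \leadsto t$-path in $G_1$ transfers to uniqueness of $\tilde{\pi} \in \Pi_2$ in $G_2$.

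The remaining two properties follow by case analysis on $u, v \in V(G_2)$. If $u, v \in \pi_w$ for a common $w$, then $\tilde{\pi} \in \Pi_2$ contains both iff its corresponding $\pi \in \Pi_1$ contains $w$, giving count $O(\ell_1 p/p_1) = O(\ell_1)$ from the per-node bound for $G_1$; together with $\hopdist_{G_2}(u, v) \le k$ and $\ell_2 = \ell_1 k$, this is $O(\ell_2/\hopdist_{G_2}(u, v))$. If $u \in \pi_{w_1}$ and $v \in \pi_{w_2}$ with $w_1 \ne w_2$, then a direct count through the expansion gives $\hopdist_{G_2}(u, v) = \Theta(k \cdot \hopdist_{G_1}(w_1, w_2))$, while the count of paths through both corresponds via the bijection to paths in $\Pi_1$ through both $w_1, w_2$, bounded by $\ell_1/\hopdist_{G_1}(w_1, w_2)$ by the overlap property of $G_1$; this again yields the bound. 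The per-node bound and the count $|\Pi_2| = |\Pi_1| \ge p$ follow immediately from the bijection $\Pi_2 \leftrightarrow \Pi_1$.

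I expect the main subtlety to be keeping hidden constants aligned so that the overlap bound $\ell_2/\hopdist_{G_2}(u, v)$ holds with the stated leading constant rather than an extra $O(1)$ factor; these can be absorbed into the choice of constant in $p_1 = \Theta(p)$, or equivalently into the implicit $\Theta(\cdot)$ hiding in the definition of $\ell_2$. Finally, the undirected conversion of $G_2$ proceeds exactly as in the corresponding final step of the construction of $G_1$, since path-expansion preserves the layered acyclic structure.
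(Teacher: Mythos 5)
Your proposal is correct and follows essentially the same route as the paper's Appendix A proof: expand each node of $G_1 = G(p_1, p)$ into a zero-weight path, use the induced bijection $\Pi_1 \leftrightarrow \Pi_2$ for the layer count, uniqueness (via contraction), and path count, and handle the overlap bound by the same two-case analysis (both nodes inside one expanded path, using the per-node/direction-vector bound, versus nodes in distinct expanded paths, using the overlap bound for $G_1$). The constant-alignment subtlety you flag is exactly what the paper resolves by fixing $p_1 = 2^{30}p$.
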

\begin{proof}
    We defer the proof to Appendix \ref{app:small_p}, as it largely follows our earlier analysis. 
\end{proof}

\paragraph{Making $G$ undirected.}

To make $G$ undirected, we use the following standard simple blackbox reduction.
Let $W$ be the sum of all edge weights in $G$, i.e., $W = \sum_{e \in E} w(e)$.
For every edge $e \in E$, add $+W$ to the weight $w(e)$ of $e$, and treat $e$ as an undirected edge.
Call the resulting graph $G'$.

We now argue correctness: in particular, we need to argue that for all $s, t \in V$ such that $t$ is reachable from $s$ in $G$, $\pi$ is a shortest weighted (directed) $s \leadsto t$-path in $G$ if and only if $\pi$ is a shortest weighted (undirected) $s \leadsto t$-path in $G'$. 
\begin{itemize}
\item First, note that for all $s \leadsto t$-paths $\pi'$ in $G'$, the number of nodes in $\pi'$ satisfies $|\pi'| \geq |\pi|$ by the construction of $G$ and $G'$.

\item Moreover, if $|\pi'| > |\pi|$, then $\pi'$ has one more edge than $\pi$.
Thus, its weighted length in $G'$ satisfies
\begin{align*}
w(\pi') &> (|\pi|+1)\cdot W\\
&= W|\pi| + \sum \limits_{e \in E(G)} w(e)\\
&> w(\pi)
\end{align*}
and so $\pi'$ is not a shortest path.
We conclude that if $\pi'$ is a shortest $s \leadsto t$-path in $G'$, then $|\pi'| = |\pi|$. 

\item Any $s \leadsto t$-path $\pi'$ in $G'$ with $|\pi|$ edges must use exactly one node in each layer, and thus it respects the original edge directions in $G$.
We conclude that $\pi'$ is a shortest weighted $s \leadsto t$-path in $G'$ if and only if $\pi'$ is a shortest weighted (directed) $s \leadsto t$-path in $G$. 
\end{itemize}

\noindent 
Lemma \ref{lem:construction_lemma}  is immediate from the above discussion.


\section{Shortcut Sets}

In this section we will prove the following theorem.

\begin{theorem} \label{thm:ss}
For any parameter $p \in [1,  n^{5/4}]$, there exists an $n$-node unweighted directed graph $G = (V, E)$ such that for any shortcut set $H$ of size $|H| \leq p$, the graph $G \cup H$ must have diameter $D$, where
\[ D = \begin{cases} 
      \Omega\left( \frac{n}{p^{3/4} \log n} \right) & \text{for } p \in [1, n / (\alpha \log n) ], \text{ where $\alpha > 0$ is a sufficiently large constant} \\
      \Omega\left( \frac{n^{5/4}}{p \log^{9/4} n} \right) & \text{for } p \in [n/(\alpha \log n), n^{5/4}]. \\
   \end{cases}
\]
In particular, when $p = O(n)$, $G \cup H$ must have diameter $D = \Omega\left(\frac{n^{1/4}}{\log^{9/4}n} \right)$. 
\end{theorem}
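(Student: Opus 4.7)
The plan is to mirror the two-step template of Section \ref{subsec:eh_thm}: first establish a construction lemma analogous to Lemma \ref{lem:construction_lemma}, producing an $n$-node \emph{unweighted} directed graph $G$ together with a set $\Pi$ of $|\Pi| = 2p$ critical paths and endpoint set $P \subseteq V \times V$; then conclude Theorem \ref{thm:ss} by an essentially identical potential-function argument. The target properties are: $G$ is organized in $\ell$ layers with one path-node per layer for each $\pi \in \Pi$; each path in $\Pi$ is the \emph{unique} $s \leadsto t$ path in $G$ between its endpoints (strictly stronger than unique-\emph{shortest}, and the natural property for unweighted shortcut sets); each node lies on $O(\ell p / n)$ paths of $\Pi$; and any two nodes $u, v$ lie on at most $\ell / \hopdist_G(u, v)$ paths simultaneously. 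The strengthened uniqueness property makes the potential $\phi(H) := \sum_{(s,t) \in P} \dist_{G \cup H}(s, t)$ behave correctly for shortcut sets: if a shortcut edge $(u, v) \in H$ strictly reduces $\dist_{G \cup \{(u,v)\}}(s,t)$, then concatenating the $G$-subpaths $s \leadsto u$, the $u \leadsto v$ path guaranteed because $(u,v)$ lies in the transitive closure, and $v \leadsto t$ gives an $s \leadsto t$ path in $G$, which by uniqueness must equal $\pi_{s, t}$, forcing $u, v \in \pi_{s, t}$. The same telescoping sum as in Section \ref{subsec:eh_thm} then yields an average $\dist_{G \cup H}$ of $\Omega(\ell)$ over $P$, and pigeonhole delivers a critical pair at the claimed distance.

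For the construction I would build an $\ell$-layered digraph whose $i$th layer $L_i$ is a rectangular grid in $\zz^2$, translated by a layer-dependent offset $\tau_i$ chosen to follow the expected drift of the critical paths. Following the sketch in Section \ref{sec:prob_approach}, fix a strictly convex edge-vector set $W = \{\vec{c}_1, \dots, \vec{c}_m\}$ with each $\vec{c}_j$ having first coordinate $1$. For each $i \in [1, \ell-1]$, independently sample a uniformly random index $\lambda_i \in [1, m-1]$, and include between $L_i$ and $L_{i+1}$ only edges realizing the two adjacent vectors $\vec{c}_{\lambda_i}, \vec{c}_{\lambda_i + 1}$. The direction-vector set $D$ is a fine subdivision of the interval spanned by $W$, with $|D| \gg |W|$. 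Each critical path $\pi_{s, \vec{d}}$ is grown greedily from a start node $s \in L_1$ and direction vector $\vec{d} \in D$: at each layer, follow whichever of the two available edge vectors has second coordinate closer to that of $\vec{d}$.

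The two technical lemmas to prove are unique-path and low-overlap. Uniqueness follows from an exchange argument: at each layer the greedy choice maximizes inner product with $\vec{d}$ among the two available edges, and the strict convexity of $W$ (after a mild generic perturbation of the second coordinates) prevents any alternative sequence of layer-choices from achieving the same cumulative displacement as the greedy one. The low-overlap analysis is the main technical contribution and is the shortcut-set counterpart of Lemma \ref{lem:node_split}: conditional on two paths $\pi_1, \pi_2$ with direction-vector second-coordinates $d_1 < d_2$ sharing a node $v \in L_i$, the paths split at $v$ exactly when the fresh sample $\lambda_{i+1}$ yields an adjacent pair for which $d_1$ and $d_2$ select different nearest vectors. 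The probability of this is proportional to the number of indices $\lambda \in [1, m-1]$ whose associated interval separates $d_1$ from $d_2$, which scales with $d_2 - d_1$. A Chernoff bound over layers then yields the analog of Lemma \ref{lem:path_overlap}, and a union bound over all $O(|\Pi|^2)$ path pairs produces the low-overlap property with high probability.

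The principal obstacle, and the reason the theorem breaks into two parameter regimes, is that the greedy paths behave like biased random walks whose vertical drift depends on $\vec{d}$ and whose $\sqrt{\ell}$-scale fluctuations are dictated by the spread of $W$. Keeping all critical paths inside a single rectangular grid throughout $\ell$ layers requires the grid width to absorb both the drift (cancelled by the translations $\tau_i$) and the fluctuations. For small $p$ the binding constraint is $|\Pi| \geq 2p$, which determines the joint constraint on $|L_1|$ and $|D|$, and in turn the grid size; optimization in this regime produces the bound $D = \Omega(n/(p^{3/4} \log n))$. For larger $p$ the low-overlap condition becomes binding, since paths generated by nearby direction vectors must split within $\ell$ layers with high probability; the logarithmic cost from the Chernoff bound then propagates into the final $\log^{9/4} n$ factor, giving $D = \Omega(n^{5/4}/(p \log^{9/4} n))$. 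Setting $\ell$, $m$, $|D|$, and $\tau_i$ to balance these constraints and invoking the potential-function derivation then completes the proof.
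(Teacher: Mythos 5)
Your high-level skeleton matches the paper: a construction lemma (layered grid, strongly convex vector set, adjacent-pair subsampling of edge vectors per layer, greedy paths, split probability proportional to the separation of the direction vectors, Chernoff plus union bound, convexity/exchange argument for unique paths) followed by the same potential-function argument as in Section \ref{subsec:eh_thm}. However, there are two genuine gaps. First, taking $D$ to be a fine subdivision with $|D| \gg |W|$ gains nothing in the unweighted shortcut-set setting: the greedy choice at layer $i$ depends only on which of the two sampled adjacent vertices $\vec{w}_{\lambda_i}, \vec{w}_{\lambda_i+1}$ maximizes the inner product with $\vec{d}$ (equivalently, on the sign of $\langle \vec{w}_{\lambda_i+1}-\vec{w}_{\lambda_i}, \vec{d}\rangle$), so all direction vectors lying in the same normal cone of the polygon $W$ make the identical choice at every layer and, from a common start node, generate literally the same path. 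Unlike the hopset construction there is no $\eps$-shifting available to separate nearby directions (the graph is unweighted), so the effective number of directions is $|W|$, and your path count $|\Pi| = |S||D|$ is wrong; a ``mild generic perturbation'' of coordinates does not change which cone a direction lies in. The paper simply sets $D := W$ and gets its gain elsewhere, by using the short difference vectors $\vec{w}_{\lambda+1}-\vec{w}_{\lambda}$ (expected magnitude $\Theta(r)$ rather than $\Theta(r^3)$) so that many layers fit inside the grid.

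Second, the two parameter regimes do not arise from ``different binding constraints'' in a single optimization of this construction, and your account of the $\log^{9/4} n$ factor is not correct. Within your framework the overlap requirement forces $q\log n = O(\ell)$ (otherwise one shortcut edge can destroy more than $\ell$ units of potential), while the number of distinct critical paths is at most roughly $(n/\ell)\cdot q = O(n/\log n)$; so the base construction tops out at $p = \Theta(n/\log n)$, and for smaller $p$ a direct re-optimization yields only diameter on the order of $n^{1/2}/p^{1/4}$, which is polynomially weaker than the claimed $\Omega\left(n/(p^{3/4}\log n)\right)$ at small $p$. The paper reaches the small-$p$ regime by the node-to-path subdivision reduction (build the base construction on $\Theta(p\log p)$ nodes and replace each node by a directed path on $n/p_1$ nodes, as in Appendix \ref{app:small_p}), and it reaches the large-$p$ regime $p \in [n/(\alpha\log n), n^{5/4}]$ by a separate extremal-function subsampling reduction (Lemma \ref{lem:path_subsample}): sample nodes with probability $1/(2x)$, join sampled pairs at distance at most $8x\log n$, apply the $p = n/(\alpha\log n)$ lower bound to the smaller instance, and transfer back, which is exactly where the $n^{5/4}/(p\log^{9/4} n)$ expression comes from. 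Neither reduction appears in your sketch, and without them the statement as given (both regimes, and in particular the $p = O(n)$ corollary, which lives in the second regime) is not reachable by the argument you outline.
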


\noindent
We will prove this via a construction of the following type:

\begin{lemma}
    \label{lem:ss_construction_lemma}
    For any $p \in [1, n/ (\alpha \log n)]$, where $\alpha > 0$ is a sufficiently large constant, there is an infinite family of $n$-node directed unweighted graphs $G = (V, E, w)$ and sets $\Pi$ of $|\Pi|=p$ paths in $G$ with the following properties:
\begin{itemize}
\item $G$ has $\ell = \Theta\left( \frac{n}{p^{3/4} \log n } \right)$ layers. 
Each path in $\Pi$ starts in the first layer, ends in the last layer, and contains exactly one node in each layer. 

\item Each path in $\Pi$ is the unique  path between its endpoints in $G$.

\item For any two nodes $u, v \in V$, there are at most $ \frac{\ell}{\hopdist_G(u, v)} $ paths in $\Pi$ that contain both $u$ and $v$.
\end{itemize}
\end{lemma}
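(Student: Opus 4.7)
The plan is to adapt the exact-hopset construction from the previous section, swapping the $\varepsilon$-shifting symmetry breaking for the edge-vector subsampling strategy described in Section \ref{sec:prob_approach}. Set $\ell = \Theta(n/(p^{3/4}\log n))$ and give each layer $L_1,\dots,L_\ell$ a copy of a square grid in $\mathbb{Z}^2$ of side length $\Theta(\sqrt{n/\ell})$, so the total vertex count is $n$. Construct a convex set of edge vectors $W = \{\vec c_1,\dots,\vec c_m\} \subseteq \mathbb{Z}^2$ of size $m = \Theta(\ell/\log n)$ (chosen so that no three vectors are collinear and the vectors lie in strictly convex position). For each $i \in [1,\ell-1]$, independently sample an index $\lambda_i \in [1,m-1]$ uniformly at random, and place only the two edge vectors $\vec c_{\lambda_i}, \vec c_{\lambda_i+1}$ between layers $L_i$ and $L_{i+1}$; then translate the grid in $L_{i+1}$ by the average of these two vectors (or another carefully chosen offset) so that paths remain centered in the grid over all layers. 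Use $D := W$ as direction vectors, and for each start node $s$ in a subset $S \subseteq L_1$ of $\Theta(n/\ell)$ nodes and each $\vec d \in D$, generate $\pi_{s,\vec d}$ greedily by choosing at layer $i$ the sampled edge vector maximizing $\langle \vec c,\vec d\rangle$.

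The first substantive step is a uniqueness lemma analogous to Lemma \ref{lem:unique_paths}, but for unique paths rather than unique shortest paths, since the graph is unweighted. Here convexity of $W$ (not just $C_i$) gives us that for any path $\pi$ from $s$ to the endpoint of $\pi_{s,\vec d}$, the total inner product $\langle\text{endpoint}-s,\vec d\rangle$ is fixed, but $\pi_{s,\vec d}$ greedily maximizes this inner product at each layer; any deviation forces a strictly smaller accumulated progress and hence a different endpoint. Grid containment, which is the key feature enabling so many layers, follows from the observation that because $\vec c_{\lambda_i}$ and $\vec c_{\lambda_i+1}$ are adjacent in $W$, the per-step ``drift'' of the path away from its expected trajectory is $O(1)$ in each coordinate, so after $\ell$ layers the total deviation is $O(\ell)$ per path, which our choice of $\sqrt{n/\ell}$-wide grid tolerates as long as $\ell = O(\sqrt{n/\ell})$, i.e., $\ell = O(n^{1/3})$; our parameter choice fits this.

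Next I would prove a splitting lemma analogous to Lemma \ref{lem:node_split}: if two paths $\pi_1,\pi_2$ with direction vectors $\vec d_j,\vec d_k \in D$ both pass through a node $v \in L_i$, then they split at $v$ whenever the random pair $\{\vec c_{\lambda_{i+1}}, \vec c_{\lambda_{i+1}+1}\}$ straddles the ``angular gap'' between $\vec d_j$ and $\vec d_k$ in $W$. If the number of $W$-vectors strictly between $\vec d_j$ and $\vec d_k$ in the convex ordering is $r$, then there are exactly $r+1$ choices of $\lambda_{i+1}$ for which this straddling happens, giving splitting probability at least $(r+1)/(m-1) = \Theta(|j-k|/m)$. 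Chaining this geometric decay via the same calculation as Lemma \ref{lem:path_overlap} gives $|\pi_1 \cap \pi_2| \le O((m/|j-k|)\log n)$ with high probability, and union-bounding over all $|\Pi|^2 \le n^4$ pairs makes this hold simultaneously.

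The final step mirrors Lemma \ref{lem:no_good_hops}: if $k$ critical paths all share two nodes $x,y$ with $\hopdist_G(x,y)=z$, then the extremal pair has $W$-separation at least $k-1$, forcing $z \le |\pi_a \cap \pi_b| \le O(m \log n / (k-1))$ and hence $k \le O(m\log n / z) \le \ell / z$ by our choice of $m$. Plugging in $|\Pi| = |S|\cdot|D| = \Theta((n/\ell)\cdot m) = \Theta(n/\log n) \ge p$, we get the required count. The main obstacle is the interplay in Step 2: the grid must be wide enough to contain all paths (forcing a lower bound on the grid side in terms of $\ell$ and the per-layer drift) while still having only $n/\ell$ nodes per layer, and simultaneously $m$ must be large enough to give $|\Pi|\ge p$ but small enough that the splitting probabilities are not swamped. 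The regime $p \in [1, n/(\alpha \log n)]$ is exactly where these constraints are simultaneously satisfiable with $\ell = \Theta(n/(p^{3/4}\log n))$; verifying this requires the careful parameter bookkeeping analogous to the final subsection on exact hopsets, and the extension to the larger-$p$ regime follows by the same path-contraction trick used for small $p$ in Appendix \ref{app:small_p}.
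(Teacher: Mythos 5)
Your proposal follows the paper's construction essentially step for step (layered square grids, a strictly convex integer vector set with a random \emph{adjacent pair} sampled per layer and a renormalizing translation, $D:=W$, greedy paths, then splitting/overlap/counting lemmas and the small-$p$ subdivision trick), so the architecture is right. But there is a genuine gap in the grid-containment step, which is exactly the quantitatively binding step of the whole argument. You claim that because the two sampled vectors are adjacent in $W$, the per-step drift is $O(1)$ per coordinate, so total deviation is $O(\ell)$ and containment only needs $\ell = O(n^{1/3})$. This is false: the per-step drift is the length of a \emph{side} of the convex polygon spanned by $W$, and a strictly convex polygon with $m$ integer vertices has perimeter $\Omega(m^{3/2})$, so the average adjacent difference has norm $\Theta(m^{1/2})$, not $O(1)$ (with $m$ distinct directions and integer sides, bounded-length sides are impossible). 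With $m = \Theta(\ell/\log n)$ this makes the total drift $\Theta(\ell^{3/2}/\log^{1/2} n)$, and the containment requirement against grid half-width $\Theta(\sqrt{n/\ell}\,)$ becomes $\ell \lesssim n^{1/4}\log^{1/4} n$ — this is precisely why the paper's bound is $n^{1/4}$ and not $n^{1/3}$. Indeed, if your $O(1)$-drift claim were true, the identical construction with $\ell \approx n^{1/3}$ would yield an $\Omegaish(n^{1/3})$ shortcut-set lower bound matching the Kogan--Parter upper bound, which the paper explicitly leaves as an open gap; that is the tell that this step cannot be right.

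The good news is that your parameter choice $\ell = \Theta\left(\frac{n}{p^{3/4}\log n}\right)$ still satisfies the \emph{true} constraint $\ell \cdot m^{1/2} \lesssim \sqrt{n/\ell}$ in the regime $p \le n/(\alpha\log n)$, so the construction is salvageable, but the containment argument must be redone as in the paper: bound the expected length of the sampled difference vector by $\mathrm{perimeter}/|C| = \Theta(m^{1/2})$ (Proposition \ref{prop:small_c}), apply Markov to the sum of drifts to get that with probability at least $1/2$ \emph{all} paths survive a constant fraction $\hat{\ell} = \Theta(\ell)$ of the layers (Proposition \ref{prop:path_lengths_ss}), and then truncate the graph and all critical paths at layer $\hat{\ell}$ before intersecting this event with the high-probability overlap bounds. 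Your write-up asserts deterministic containment and never truncates, which is not available once the drift is $\Theta(m^{1/2})$; relatedly, choosing the start set $S$ as a centered patch at distance $\Omega(\sqrt{n/\ell}\,)$ from the boundary is needed to make the Markov argument close. The remaining steps (strict-convexity uniqueness, splitting probability $\Theta(|j-k|/m)$, the $8 m\log n/|j-k|$ overlap bound, and the final counting) match the paper's Lemmas \ref{lem:ss_unique_paths}, \ref{lem:node_split_ss}, \ref{lem:path_overlap_ss}, \ref{lem:ss_no_good_hops} and are fine as sketched.
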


\noindent
We quickly verify that Lemma \ref{lem:ss_construction_lemma} implies Theorem \ref{thm:ss}.

\begin{proof}[Proof of Theorem \ref{thm:ss} using Lemma \ref{lem:ss_construction_lemma}]
Fix an $n$ and $p \in [1, n/(\alpha \log n)]$. Let $G = (V, E, w)$ be the graph in Lemma \ref{lem:ss_construction_lemma} with associated set $\Pi$ of $|\Pi| = 2p$ paths in $G$. Let $H$ be a shortcut set of size $|H| \leq p$. Let $P \subseteq V \times V$ be the set of node pairs that are the endpoints of paths in $\Pi$. Since all paths in $\Pi$ are \textit{unique} paths between their endpoints in $G$, it follows that 
$$
\texttt{diameter}(G \cup H) \geq \max_{(s, t) \in P} \dist_{G \cup H}(s, t) = \max_{(s, t) \in P} \hopdist_{G \cup H}(s, t).
$$
Then when $p \in [1, n/(\alpha \log n)]$, we  can achieve the bounds in Theorem \ref{thm:ss} using the same potential function argument as in Section \ref{subsec:eh_thm}.

In order to extend our shortcut set lower bound  to $p \in [n/(\alpha \log n), n^{5/4}]$, we will appeal to a more general property of the extremal functions of exact hopsets and shortcut sets, which we defer to Lemma \ref{lem:path_subsample} in Appendix \ref{app:path_subsample}. 
\end{proof}

\subsection{Constructing the strongly convex vector set $W(q)$ }


In our construction of the graph $G$, we will implicitly use the following lemma from \cite{BL98, BH22} that establishes the existence of a dense set of vectors that each extend the farthest in their own direction.\footnote{This is a slightly stronger property than convexity, and hence is sometimes referred to as ``strong convexity'' in the area.}

\begin{lemma}[Theorem 1 of \cite{BL98}; Lemma 7 of \cite{BH22}]
\label{lem:convex_set}
    For sufficiently large $r \in \mathbb{Z}_{\geq 0}$, there exists a strongly convex set of integer vectors $W(r) \subseteq \mathbb{Z}^2$ of size $|W(r)| = \Theta(r^{2/3})$, such that
    \begin{itemize}
        \item for all $\Vec{w} \in W(r)$, $\|\Vec{w}\| \leq r$, 
        \item every $\Vec{w} \in W(r)$ lies in the first quadrant, i.e., both coordinates are positive, and
        \item for all distinct $\Vec{w}_1, \Vec{w}_2 \in W(r)$,  $\langle \Vec{w}_1 , \Vec{w}_2\rangle < \langle \vec{w}_1, \Vec{w}_1 \rangle$.
    \end{itemize}
\end{lemma}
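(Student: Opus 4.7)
The plan is to construct $W(r)$ as the set of ``arc'' vertices of the convex hull of positive-coordinate lattice points in the disk of radius $r$, and to verify the strong convexity condition by a supporting-line argument. Concretely, let $D_r := \{(x, y) \in \mathbb{R}^2 : x, y > 0,\, x^2 + y^2 \leq r^2\}$, let $P_r := \mathrm{conv}(D_r \cap \mathbb{Z}^2)$, and define
\[
W(r) := \{\vec{w} \in \mathbb{Z}^2 : \vec{w} \text{ is a vertex of } P_r \text{ and } \vec{w} \text{ lies in the interior of its own outward normal cone at } P_r\}.
\]
The first-quadrant condition and the norm bound $\|\vec{w}\| \leq r$ are then immediate from this definition.

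For the strong convexity condition, I would consider any $\vec{w} \in W(r)$ together with the line $\ell_{\vec{w}} := \{\vec{x} \in \mathbb{R}^2 : \langle \vec{w}, \vec{x}\rangle = \|\vec{w}\|^2\}$, which passes through $\vec{w}$ with normal direction $\vec{w}$. Since $\vec{w}$ lies in the \emph{interior} of its outward normal cone at $P_r$, the line $\ell_{\vec{w}}$ is a strict supporting line of $P_r$ at $\vec{w}$, so every other vertex $\vec{u}$ of $P_r$ (and in particular every $\vec{u} \in W(r) \setminus \{\vec{w}\}$) satisfies $\langle \vec{w}, \vec{u}\rangle < \|\vec{w}\|^2$, which is exactly the inequality $\langle \vec{w}, \vec{u}\rangle < \langle \vec{w}, \vec{w}\rangle$ demanded by the lemma.

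For the size bound, a classical theorem of Andrews gives that the total number of vertices of $P_r$ is $\Theta(r^{2/3})$. Only $O(1)$ of these are ``axis-adjacent'' vertices sitting along the slices $x = 1$ or $y = 1$ (for which the outward normal cone fails to be interior to the first quadrant); the remaining $\Omega(r^{2/3})$ ``arc'' vertices should have the property that the two hull edges incident to $\vec{w}$ point roughly tangentially along the boundary of $D_r$, so that their outward perpendiculars straddle the radial direction $\vec{w}/\|\vec{w}\|$, placing $\vec{w}$ in the interior of its outward normal cone. Hence $|W(r)| = \Theta(r^{2/3})$.

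The main obstacle I anticipate is making this last ``straddling'' claim quantitative --- verifying that for each arc vertex $\vec{w}$, the two adjacent hull edges really do lie on opposite angular sides of the radial direction through $\vec{w}$, rather than both falling on the same side. This reduces to controlling the slopes of the primitive edge vectors of the lattice polygon via standard Farey/continued-fraction machinery, using the fact that $\partial D_r$ is smooth and strongly convex. Since the construction is classical, I would follow the careful parameterization of \cite{BL98, BH22} rather than re-deriving these slope estimates from scratch.
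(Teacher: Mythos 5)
First, note that the paper itself does not prove Lemma \ref{lem:convex_set}; it is imported wholesale from Theorem 1 of \cite{BL98} and Lemma 7 of \cite{BH22}, so your sketch is being measured against those external arguments rather than anything in this paper. The parts of your sketch that are solid are the easy ones: the norm bound and first-quadrant condition are immediate from your definition, and the supporting-line observation is correct --- if a vertex $w$ of $P_r$ lies in the interior of its own outward normal cone, then $\langle w, \cdot\rangle$ is uniquely maximized over $P_r$ at $w$, which yields the third bullet for every other element of $W(r)$ simultaneously. With your definition of $W(r)$, the entire lemma therefore reduces to the single claim $|W(r)| = \Omega(r^{2/3})$.

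That claim is where there is a genuine gap, and it is not the routine verification you suggest. Andrews' theorem gives only the upper bound $O(r^{2/3})$ on the number of vertices of $P_r$; the matching lower bound for the integer hull of a disk is itself the content of Theorem 1 of \cite{BL98}, so even the count of all hull vertices cannot be waved through. More seriously, your assertion that only $O(1)$ ``axis-adjacent'' vertices fail to contain their own radial direction in the interior of their normal cone is unsupported and, as stated, doubtful. Quantitatively, a vertex $w$ at depth $\delta_w := r - \|w\|$ below the circle satisfies $\langle w, v\rangle < \|w\|^2$ for a nearby vertex $v$ at angular distance $\theta$ only when $\delta_w - \delta_v \lesssim r\theta^2/2$; consecutive hull vertices have typical angular gap $\theta \approx r^{-2/3}$, so the tolerance is only about $r^{-1/3}$, while vertex depths fluctuate at a comparable (or larger) scale. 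Thus a vertex fails whenever some near-neighbor sits even slightly closer to the circle, and such failures can occur anywhere along the arc, not just near the axes; whether an $\Omega(r^{2/3})$-size subset survives is exactly the delicate lattice-point-distribution question that \cite{BL98} and Lemma 7 of \cite{BH22} resolve (by controlling how extreme lattice points are distributed in thin annular sectors), not something that follows from a qualitative ``straddling'' picture plus generic Farey machinery. Since your proposal defers precisely this step back to the cited works, it does not constitute an independent proof of the lemma: it establishes the strong-convexity condition by fiat (via the definition of $W(r)$) and leaves the size bound --- the real content --- unproven.
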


In our construction of $G$, we will use a vector set $W$ from this lemma to help generate edge and direction vectors.
We will make use of the following technical property of the vectors in $W$:

\begin{prop}
\label{prop:vec_proj_order}
   Let $W = \{ \Vec{w}_1, \dots, \Vec{w}_q \}$ be a set of vectors as described in Lemma \ref{lem:convex_set}, with its vectors ordered counterclockwise. For all $\Vec{w}_i, \Vec{w}_j, \Vec{w}_k \in W$ with $i < j < k$, the following inequalities hold:
   $$
   \left\langle \Vec{w}_i, \Vec{w}_k \right\rangle < \left\langle \Vec{w}_j, \Vec{w}_k \right\rangle < \left\langle \Vec{w}_k, \Vec{w}_k \right\rangle  \text{\qquad and \qquad} \left\langle \Vec{w}_i, \Vec{w}_k \right\rangle < \left\langle \Vec{w}_i, \Vec{w}_j \right\rangle < \left\langle \Vec{w}_i, \Vec{w}_i \right\rangle.
   $$
\end{prop}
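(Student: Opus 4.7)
The proposition splits naturally into four strict inequalities. The two rightmost ones, $\langle \Vec{w}_j, \Vec{w}_k \rangle < \langle \Vec{w}_k, \Vec{w}_k \rangle$ and $\langle \Vec{w}_i, \Vec{w}_j \rangle < \langle \Vec{w}_i, \Vec{w}_i \rangle$, are immediate instances of the strong convexity clause in Lemma \ref{lem:convex_set}, so I would dispose of them in one sentence. The real content lies in the two ``middle'' inequalities $\langle \Vec{w}_i, \Vec{w}_k \rangle < \langle \Vec{w}_j, \Vec{w}_k \rangle$ and $\langle \Vec{w}_i, \Vec{w}_k \rangle < \langle \Vec{w}_i, \Vec{w}_j \rangle$, which I plan to attack by passing to polar coordinates.

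The setup is to write each $\Vec{w}_m = r_m(\cos\theta_m, \sin\theta_m)$ with $r_m > 0$. Since the vectors lie in the open first quadrant and are ordered counterclockwise, the angles satisfy $0 < \theta_1 < \theta_2 < \cdots < \theta_q < \pi/2$, so every pairwise angular gap lies in $(0, \pi/2)$ and hence has strictly positive sine and cosine. Inner products then factor as $\langle \Vec{w}_a, \Vec{w}_b \rangle = r_a r_b \cos(\theta_b - \theta_a)$, and strong convexity at index $m$ translates to the magnitude inequality $r_n \cos(\theta_n - \theta_m) < r_m$ for every $n \neq m$.

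For the first middle inequality I would argue by contradiction: assume $\langle \Vec{w}_j, \Vec{w}_k \rangle \leq \langle \Vec{w}_i, \Vec{w}_k \rangle$, divide by $r_k > 0$ to obtain $r_j \cos(\theta_k - \theta_j) \leq r_i \cos(\theta_k - \theta_i)$, then invoke strong convexity at $\Vec{w}_j$ applied against $\Vec{w}_i$, namely $r_j > r_i \cos(\theta_j - \theta_i)$. Multiplying the latter by the positive number $\cos(\theta_k - \theta_j)$ and chaining with the former gives
$$r_i \cos(\theta_j - \theta_i)\cos(\theta_k - \theta_j) < r_j \cos(\theta_k - \theta_j) \leq r_i \cos(\theta_k - \theta_i),$$
so dividing by $r_i$ yields $\cos(\theta_j - \theta_i)\cos(\theta_k - \theta_j) < \cos(\theta_k - \theta_i)$. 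But the cosine addition identity rewrites the right-hand side as $\cos(\theta_j - \theta_i)\cos(\theta_k - \theta_j) - \sin(\theta_j - \theta_i)\sin(\theta_k - \theta_j)$, and since both sines are strictly positive this is strictly less than the left-hand side, a contradiction. The second middle inequality follows from the mirror-image argument: assume $\langle \Vec{w}_i, \Vec{w}_j \rangle \leq \langle \Vec{w}_i, \Vec{w}_k \rangle$, divide out $r_i$, and apply strong convexity at $\Vec{w}_j$ against $\Vec{w}_k$ (i.e., $r_j > r_k \cos(\theta_k - \theta_j)$) to reach exactly the same displayed cosine inequality and hence the same contradiction.

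The only real judgment call in the proof is recognizing that strong convexity should be applied specifically \emph{at the middle index $j$}, because that is the index whose angle appears in both cosine factors on the left-hand side, so that the cosine addition identity lines up and the contradiction is forced by positivity of sines in the first quadrant. Aside from that, the argument is short and mechanical.
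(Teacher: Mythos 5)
Your proof is correct and is essentially the paper's argument in contrapositive form: both reduce to strong convexity applied at the middle vector $\Vec{w}_j$ together with the identity $\cos\psi_1\cos\psi_2 = \sin\psi_1\sin\psi_2 + \cos(\psi_1+\psi_2) > \cos(\psi_1+\psi_2)$ for angles in $(0,\pi/2)$, the paper chaining these directly while you derive a contradiction. No substantive difference.
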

\begin{proof}
We will only prove here that $\langle \Vec{w}_i, \Vec{w}_k \rangle < \langle \Vec{w}_j, \Vec{w}_k \rangle < \langle \Vec{w}_k, \Vec{w}_k \rangle$; the other set of inequalities follows from an identical argument. 
By Lemma \ref{lem:convex_set} we already have $\langle \Vec{w}_j, \Vec{w}_k \rangle < \langle \Vec{w}_k, \Vec{w}_k \rangle$, so it remains only to show that $\langle \Vec{w}_i, \Vec{w}_k \rangle < \langle \Vec{w}_j, \Vec{w}_k \rangle $.

Let $\psi_1$ be the inner angle between $\Vec{w}_i$ and $\vec{w}_j$ and let $\psi_2$ be the inner angle between $\Vec{w}_j$ and $\vec{w}_k$; thus the inner angle between $\Vec{w}_i$ and $\vec{w}_k$ is $\psi_1 + \psi_2$. 
We first establish a useful inequality:
\begin{align*}
    \langle \Vec{w}_i, \Vec{w}_j \rangle & < \langle \Vec{w}_j, \Vec{w}_j \rangle \tag*{ by Lemma \ref{lem:convex_set}} \\
    \|\Vec{w}_i\| \|\Vec{w}_j\| \cos \psi_1 & < \|\Vec{w}_j\|^2 \tag*{cosine formula} \\
        \|\Vec{w}_i\|\|\Vec{w}_k\|  \cos \psi_1  \cos \psi_2 & < \|\Vec{w}_j\| \|\Vec{w}_k\| \cos \psi_2. \tag*{follows algebraically from previous line}
\end{align*}
We will next use the trigonometric identity
\begin{align*}\cos \psi_1 \cos \psi_2 &= \sin \psi_1 \sin \psi_2 + \cos(\psi_1 + \psi_2)\\
&> \cos(\psi_1 + \psi_2) \tag*{since \text{$\psi_1, \psi_2 \in (0, \pi/2)$} by Lemma \ref{lem:convex_set}.}
\end{align*}
We are now ready to show:
\begin{align*}
\langle \Vec{w}_i, \Vec{w}_k \rangle &=  \|\Vec{w}_i\|\|\Vec{w}_k\|\cos(\psi_1 + \psi_2) \tag*{cosine formula}\\
&< \|\Vec{w}_i\|\|\Vec{w}_k\|  \cos \psi_1  \cos \psi_2 \tag*{second inequality}\\
&< \|\Vec{w}_j\| \|\Vec{w}_k\| \cos \psi_2 \tag*{first inequality}\\
&< \langle \Vec{w}_j, \Vec{w}_k \rangle. \tag*{cosine formula \qedhere}
\end{align*}
\end{proof}

\subsection{Constructing $G$}

We next construct the graph that will be used for Lemma \ref{lem:ss_construction_lemma}. Let $n$ be a sufficiently large positive integer, and let  $p = n / (\alpha_0 \log n)$ for a sufficiently large constant $\alpha_0 > 0$ to be chosen later (we will extend our construction to other choices of $p \in [1, n^{5/4}]$ later).
For simplicity of presentation, we will frequently ignore issues related to non-integrality of expressions that arise in our construction; these issues affect our bounds only by lower-order terms. 
 All edges $(u, v)$ in $G$ will be directed from $u$ to $v$. 

\paragraph{Vertex Set $V$.}
\begin{itemize}
    \item Let $r$ be a positive integer construction parameter to be specified later. Our graph $G$ will have $\ell := \frac{n^{1/3}}{r^{2/3}}$ layers $L_1, \dots, L_{\ell}$, and each layer will have $n/\ell = n^{2/3}r^{2/3}$ nodes.
    \item We will label each node in layer $L_i$, $i \in [1, \ell]$, with a distinct triple in $\{i\} \times \left[1,  n^{1/3}r^{1/3} \right] \times \left[1,  n^{1/3}r^{1/3}\right]$. We will interpret the node in $L_i$ labeled $(i, j, k)$ as an integer point $(i, j, k) \in \mathbb{Z}^3$.
    \item These $n$ nodes arranged in $\ell$ layers will compose our node set $$V = [1, \ell] \times [1,  n^{1/3}r^{1/3}] \times  [1,  n^{1/3}r^{1/3}] \subseteq \mathbb{Z}^3$$ of graph $G$. In a slight abuse of notation, we will treat $v \in V$  as either a node in $G$ or a point in $\mathbb{Z}^3$, depending on the context. 
\end{itemize}

\paragraph{Edge Set $E$.} 

\begin{itemize}
    \item All the edges in $E$ will be between consecutive layers $L_i, L_{i+1}$ of $G$. We will let $E_i$ denote the set of edges in $G$ between layers $L_i$ and $L_{i+1}$.
    \item Just as our nodes in $V$ correspond to points in $\mathbb{R}^3$, we can interpret the edges $E$ in $G$ as vectors in $\mathbb{R}^3$. In particular, for every edge $e = (v_1, v_2) \in E$, we identify $e$ with  the corresponding vector $\Vec{u}_e := v_2 - v_1$. Note that since all edges in $E$ are between adjacent layers $L_i$ and $L_{i+1}$, the first coordinate of $\Vec{u}_e$ is $1$ for all $e \in E$. We will use $u_{e}^i$ to denote the $i$th coordinate of $\Vec{u}_e$ for $i \in \{2, 3\}$ (i.e. for all $e \in E$, we write $\Vec{u}_e = (1, u_{e}^2, u_{e}^3)$).
    \item We begin our construction of $E$ by defining the set of vectors $W := W(r^3) \subseteq \mathbb{Z}^2$, where $W(r^3)$ is the strongly convex set of vectors defined in Lemma \ref{lem:convex_set}. Let $W = \{\Vec{w}_1, \dots, \Vec{w}_q\} $, where vectors $\Vec{w}_i$ are ordered counterclockwise and $q := |W| = \Theta(r^2)$. We may assume wlog that $q \leq r^2$ (e.g. by removing vectors from $W$ until this is true). 
    \item Using $W$, we define our set of \textit{edge vectors} $C \subseteq \mathbb{Z}^2$ as:
    $$
    C := \left\{ \Vec{w}_{i+1} - \Vec{w}_i \mid i \in [1, q-1] \right\}.
    $$
    Let $\Vec{c}_i := \Vec{w}_{i+1} - \Vec{w}_i $ denote the $i$th vector of $C$. 
    \item For each layer $L_i$, $i \in [1, \ell - 1]$, sample a random integer  from $[1, q-1]$ and call it $\lambda_i$.    
    We define the set $C_i \subseteq \mathbb{Z}^2$  as  
    $$
    C_i := \{(0, 0), \Vec{c}_{\lambda_i}\},
    $$
    where $\Vec{c}_{\lambda_i}$ is the $\lambda_i$th vector of $C$. 
    Note that $C_i$ contains exactly two vectors, the zero vector and a randomly chosen vector $\Vec{c}_{\lambda_i}$ from $C$. Intuitively: for each layer $L_i$, we are sampling two adjacent vectors $\Vec{w}_{\lambda_i}$ and $\vec{w}_{\lambda_i + 1}$ from $C$ and adding $-\vec{w}_{\lambda_i}$ to each of them to obtain $C_i$. The purpose of adding the normalizing vector $-\vec{w}_{\lambda_i}$ to $\Vec{w}_{\lambda_i}$ and $\vec{w}_{\lambda_i + 1}$ is to reduce the magnitude of the vectors in $C_i$, as we will formalize in Proposition \ref{prop:small_c}.
    
    \item The vectors in set $C_i$ will define the edges in $E_i$. Specifically, for all $(i, v_1, v_2) \in L_i$ and for all $(c_1, c_2) \in C_i$ such that 
    $$
    (i+1, v_1 + c_1, v_2 + c_2) \in L_{i+1},
    $$
    we add the  edge $((i, v_1, v_2), (i+1, v_1 + c_1, v_2 + c_2))$ to $E_i$.
\end{itemize}
This completes the construction of our graph $G = (V, E)$.  We now verify that the vectors in $C_i$ have small magnitude in expectation.

\begin{prop}
    For all $i \in [1, \ell - 1]$, $\mathbb{E}[\|\Vec{c}_{\lambda_i}\|] \leq \frac{2^4r^3}{q} = \Theta(r)$.
    \label{prop:small_c}
\end{prop}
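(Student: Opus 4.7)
The plan is to observe that since $\lambda_i$ is sampled uniformly from $[1,q-1]$, linearity of expectation gives
\[
\mathbb{E}[\|\Vec{c}_{\lambda_i}\|] = \frac{1}{q-1}\sum_{j=1}^{q-1}\|\Vec{w}_{j+1}-\Vec{w}_j\|,
\]
and the sum on the right is precisely the length of the polygonal chain that visits $\Vec{w}_1,\ldots,\Vec{w}_q$ in counterclockwise order. I will upper bound this chain length by $O(r^3)$; since $q-1 = \Theta(r^2)$, this will yield the claimed $\Theta(r)$ bound.

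To bound the chain length, first I note that the strong convexity property in Lemma \ref{lem:convex_set}, namely $\langle \Vec{w}_a,\Vec{w}_b\rangle < \langle \Vec{w}_a,\Vec{w}_a\rangle$ for $a \neq b$, says that each $\Vec{w}_a$ uniquely maximizes the linear functional $\Vec{x} \mapsto \langle \Vec{w}_a,\Vec{x}\rangle$ over $W$, so every $\Vec{w}_a$ is an extreme point of $\mathrm{conv}(W)$. Thus the counterclockwise ordering of $W$ coincides with the order in which the $\Vec{w}_j$ appear along $\partial\,\mathrm{conv}(W)$, and the segments $\overline{\Vec{w}_j\Vec{w}_{j+1}}$ form a contiguous portion of the polygonal boundary whose total length is at most the full perimeter of $\mathrm{conv}(W)$.

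Second, since every $\Vec{w}_j$ lies in the first quadrant with $\|\Vec{w}_j\|\leq r^3$ (by Lemma \ref{lem:convex_set} applied with parameter $r^3$), we have $\mathrm{conv}(W)\subseteq [0,r^3]\times[0,r^3]$. By the triangle inequality
\[
\sum_{j=1}^{q-1}\|\Vec{w}_{j+1}-\Vec{w}_j\| \leq \sum_{j=1}^{q-1}|x_{j+1}-x_j| + \sum_{j=1}^{q-1}|y_{j+1}-y_j|,
\]
where I write $\Vec{w}_j=(x_j,y_j)$. Going around a convex polygon counterclockwise, each coordinate has total variation exactly $2(\max-\min)$, so each of the two sums above, being over a contiguous sub-chain, is at most $2r^3$. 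Combining,
\[
\mathbb{E}[\|\Vec{c}_{\lambda_i}\|] \leq \frac{4r^3}{q-1} \leq \frac{2^4 r^3}{q}
\]
whenever $q \geq 2$, and this is $\Theta(r)$ since $q = \Theta(r^2)$. The proposition is essentially a clean averaging statement: a convex polygon with $\Theta(r^2)$ vertices inscribed in an $r^3 \times r^3$ box must have average edge length $\Theta(r)$. There is no real obstacle here; the only point requiring a sentence of justification is the identification of the polygonal chain with a portion of $\partial\,\mathrm{conv}(W)$, which is immediate from strong convexity.
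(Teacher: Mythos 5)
Your proof is correct and follows essentially the same route as the paper: both compute the expectation as the average length of the sides of the convex polygon on vertex set $W$, and bound the total side length by the perimeter of a convex region of size $O(r^3)$ (the paper uses the ball of radius $r^3$ giving $2\pi r^3$, you use the bounding box $[0,r^3]^2$ and total variation giving $4r^3$ — a negligible difference). Your extra sentence justifying via strong convexity that the $\Vec{w}_j$ are the hull vertices in boundary order is a fine, slightly more careful version of the paper's bare assertion that the $\Vec{c}_j$ are sides of a convex polygon.
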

\begin{proof}
    Note that the vectors in $C$ correspond to sides of a convex polygon whose vertices are the vectors in $W$. Since this polygon is contained in a ball of radius $r^3$ in $\mathbb{R}^2_{> 0}$ by Lemma \ref{lem:convex_set}, it follows that $\sum_{\Vec{c} \in C} \|\Vec{c}\| \leq 2\pi r^3$. Note that $|C| = q - 1\geq q/2$ for sufficiently large $q$. Then
    $$
        \mathbb{E}[\|\Vec{c}_{\lambda_i}\|] = \sum_{\Vec{c} \in C}  \frac{1}{|C|} \cdot \|\Vec{c}\|  \leq \frac{2\pi r^3}{|C|} \leq \frac{2^4r^3}{q}. \qedhere
    $$
\end{proof}

The vectors $\Vec{w}_i \in W$ have magnitude roughly $r^3$, whereas the vectors in $C_i$ have expected magnitude at most $\Theta(r)$ by Proposition \ref{prop:small_c}. Since each edge in our critical path corresponds to a vector in $C_i$, ensuring that the vectors in $C_i$ have small magnitude (at least in expectation) will be essential for guaranteeing that the paths $\pi \in \Pi$ are long. 

Let us comment here on a discrepancy between this construction and the technical overview.
In the technical overview, we stated that we would use the convex vectors $W$ to generate the edge vectors $C$.
Instead, we are using the \emph{difference} between adjacent convex vectors $W$ to generate $C$.
This is an optimization: our plan is to sample \emph{one} edge vector from $C$, and use it together with the zero vector $(0, 0)$ as the two available vectors between pairs of adjacent layers.
This is equivalent to sampling \emph{two adjacent} edge vectors from $W$, as advertised in the technical overview, and then applying an appropriate translation of the next layer in space.
Our strategy lets us use vectors of length $\Theta(r)$, instead of $\Theta(r^3)$, and these shorter vectors ultimately lead to a stronger lower bound.


\subsection{Direction Vectors, Critical Pairs, and Critical Paths}

\label{subsec:ss_critical_paths}

Our next step is to generate a set of \emph{critical pairs} $P \subseteq V \times V$, as well as a set of \emph{critical paths} $\Pi$.
Specifically, there will be one critical path $\pi_{s,t} \in \Pi$ going between each critical pair $(s, t) \in P$, and we will show that $\pi_{s,t}$ is the unique  $s \leadsto t$ path in $G$.

\paragraph{Direction Vectors $D$.}
We choose our set of direction vectors $D$ to be $D := W$, where $W$ is our strongly convex set of $q = \Theta(r^2)$ vectors. We will let $D = \{\vec{d}_1, \dots, \vec{d}_q \}$ be our list of direction vectors, and we will let the $i$th vector $\vec{d}_i$ of $D$ correspond to the $i$th vector $\vec{w}_i$ of $W$, i.e. $\vec{d}_i := \vec{w}_i$, for $i \in [1, q]$.
We will simply use the name $\vec{d}_i$ when we wish to emphasize the role of these vectors as direction vectors.

Note that since $D = W$, Proposition \ref{prop:vec_proj_order} also holds for $D$. That is, if $i, j, k \in [1, q-1]$ and $i < j < k$, then $$\langle \Vec{d}_i, \Vec{d}_k \rangle < \langle \Vec{d}_j, \Vec{d}_k \rangle < \langle \Vec{d}_k, \Vec{d}_k \rangle \qquad \text{ and } \qquad \langle \Vec{d}_i, \Vec{d}_k \rangle < \langle \Vec{d}_i, \Vec{d}_j \rangle < \langle \Vec{d}_i, \Vec{d}_i \rangle.$$




\paragraph{Critical Pairs $P$ and Critical Paths $\Pi$.}
\begin{itemize}
    \item We first define a set $S \subseteq L_1$, containing a subset  of the nodes in the first layer $L_1$ of $G$:
    $$
    S := \left\{  (1, j, k) \bigm\vert (j, k) \in \left[\frac{1}{3}n^{1/3}r^{1/3}, \hspace{2mm} \frac{2}{3}n^{1/3}r^{1/3} \right] \times \left[\frac{1}{3}n^{1/3}r^{1/3}, \hspace{2mm} \frac{2}{3}n^{1/3}r^{1/3}\right]   \right\}.
    $$
    Informally, $S$ is a middle square patch of the nodes in $L_1$. The key property of $S$ is that all nodes in $S$ are of distance at least $\frac{1}{3}n^{1/3}r^{2/3}$ from the sides of the square grid corresponding to layer $L_1$.   
    
    We will choose our set of demand pairs $P$ so that $P \subseteq S \times V$.  For every node $s \in S$ and direction vector $\vec{d} \in D$, we will choose a critical pair $(s, t) \in S \times V$ and a corresponding critical path $\pi_{s, t}$ to add to $P$ and $\Pi$.

\item 
Let $v_1 \in S$, and let $\Vec{d} = (1, d) \in D$.
The associated path $\pi$ has start node $v_1$.
We iteratively grow $\pi$, layer-by-layer, as follows.
Suppose that currently $\pi = (v_1, \dots, v_i)$ with each $v_i \in L_i$.
To determine the next node $v_{i+1} \in L_{i+1}$, let $E_i^{v_i} \subseteq E_i$ be the edges in $E_i$ incident to $v_i$, and let $\Vec{u}_i = (u_i^1, u_i^2) \in C_i$ be 
$$
\Vec{u}_i := \text{argmax}_{\Vec{c} \in C_i}\langle\Vec{c}, \Vec{d} \rangle.
$$

If $(1, u_i^1, u_i^2) = \Vec{u}_{e_i}$ for some $e_i \in E_i^{v_i}$, then by definition, $e_i$ is an edge whose first node is $v_i$; we define $v_{i+1} \in L_{i+1}$ to be the other node in $e_i$, and we append $v_{i+1}$ to $\pi$.   Otherwise, if there is no such edge $e_i$ in $E_i^{v_i}$, then we terminate our construction of path $\pi$ (i.e. $v_i$ will be the final node in $\pi$).



 This completes our construction of $P$ and $\Pi$. We will show that the paths generated in this way have distinct endpoints (with high probability), and therefore $|P| = |S||D| \geq \frac{n^{2/3}r^{2/3}q}{2^5}$, where $q = \Theta(r^{2})$.



\end{itemize}
 
 An important feature for correctness of our construction is that, when we iteratively generate paths, if we reach a point  $v_{i}$ such that $v_i + C_i \not \subseteq L_{i+1}$ (i.e. $v_i + \Vec{c} \not \in L_{i+1}$ for some $\Vec{c} \in C_i$), then we end our path at $v_i$. As a consequence, our critical paths in $\Pi$ may not travel through all $\ell$ layers of $G$. However, with nonzero probability, paths in $\Pi$  travel through a constant fraction of layers, as we prove in the following proposition.

\begin{prop}
  Let $\hat{\ell} := \ell \cdot \frac{q}{2^7r^2} = \Theta(\ell)$.  With probability at least $1/2$, for all $\pi \in \Pi$, $|\pi| \geq \hat{\ell}$. 
  \label{prop:path_lengths_ss}
\end{prop}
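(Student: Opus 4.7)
The plan is to bound the total displacement of every path from its start node by a single aggregate random variable and then apply Markov's inequality. The crucial observation is that the randomness $\{\lambda_i\}_{i=1}^{\ell-1}$ is shared across all paths in $\Pi$, so a single concentration estimate on $\sum_i \|\vec{c}_{\lambda_i}\|$ controls the entire family simultaneously --- no union bound over $|\Pi|$ pairs is required.

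First I would identify the only way a path can terminate before reaching layer $\hat{\ell}$. For any path $\pi$ with direction vector $\vec{d}$, the chosen move at step $i$ is $\vec{u}_i = \mathrm{argmax}_{\vec{c} \in C_i}\langle \vec{c}, \vec{d}\rangle \in \{(0,0),\vec{c}_{\lambda_i}\}$. The zero move is always valid, so $\pi$ can terminate at step $i$ only if $\vec{u}_i = \vec{c}_{\lambda_i}$ and $v_i + \vec{c}_{\lambda_i}$ falls outside the $[1, n^{1/3} r^{1/3}]^2$ grid of layer $L_{i+1}$. Unrolling the recursion, $v_{i+1} = v_1 + \sum_{j\leq i} b_j \vec{c}_{\lambda_j}$ for coefficients $b_j \in \{0,1\}$ depending only on $\vec{d}$. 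Because $v_1 \in S$ sits at $\ell_\infty$-distance at least $\tfrac{1}{3}n^{1/3}r^{1/3}$ from the grid boundary, the triangle inequality together with $\|\cdot\|_\infty \leq \|\cdot\|_2$ gives the following $\vec{d}$-free sufficient condition for $|\pi| \geq \hat{\ell}$:
\[
X \;:=\; \sum_{j=1}^{\hat{\ell}-1} \|\vec{c}_{\lambda_j}\| \;\leq\; \tfrac{1}{3}\,n^{1/3}r^{1/3}.
\]
Since the bound does not involve $\vec{d}$ or the start node, the event $\{X \leq \tfrac{1}{3}n^{1/3}r^{1/3}\}$ simultaneously implies $|\pi|\geq \hat{\ell}$ for every $\pi \in \Pi$.

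Next I would compute $\mathbb{E}[X]$ using Proposition \ref{prop:small_c} and linearity of expectation, then plug in the definitions $\hat{\ell} = \ell q / (2^7 r^2)$ and $\ell = n^{1/3}/r^{2/3}$:
\[
\mathbb{E}[X] \;\leq\; (\hat{\ell}-1)\cdot\frac{2^4 r^3}{q} \;\leq\; \frac{\ell q}{2^7 r^2}\cdot\frac{2^4 r^3}{q} \;=\; \frac{\ell r}{8} \;=\; \frac{n^{1/3}r^{1/3}}{8}.
\]
Markov's inequality then yields
\[
\Pr\!\left[X > \tfrac{1}{3}n^{1/3}r^{1/3}\right] \;\leq\; \frac{\mathbb{E}[X]}{\tfrac{1}{3}n^{1/3}r^{1/3}} \;\leq\; \tfrac{3}{8} \;<\; \tfrac{1}{2},
\]
so with probability at least $1/2$, every path $\pi \in \Pi$ contains at least $\hat{\ell}$ nodes.

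There is no genuinely hard step here; the main design point, already absorbed into the choice of constants, is that $\hat{\ell}$ was defined with the factor $1/(2^7 r^2)$ precisely so that $\mathbb{E}[X]$ is a small constant fraction of the distance from $S$ to the grid boundary, making Markov suffice. The one subtlety worth flagging in the write-up is to emphasize that the $b_j$ vary across paths but the $\vec{c}_{\lambda_j}$ do not, which is what allows a single aggregate bound to replace a union bound over the $|\Pi|$ paths.
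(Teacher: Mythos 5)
Your proposal is correct and follows essentially the same route as the paper: both bound the common drift $\sum_{i \le \hat{\ell}} \|\vec{c}_{\lambda_i}\|$ in expectation via Proposition \ref{prop:small_c}, apply Markov's inequality, and use the fact that start nodes in $S$ are a constant fraction $\frac{1}{3}n^{1/3}r^{1/3}$ away from the grid boundary so that one aggregate event covers all paths simultaneously, with no union bound over $\Pi$. The only differences are cosmetic (you threshold at $\frac{1}{3}n^{1/3}r^{1/3}$ rather than the paper's $\frac{1}{4}n^{1/3}r^{1/3}$, and you spell out more explicitly that each path's steps are $b_j\vec{c}_{\lambda_j}$ with $b_j \in \{0,1\}$, so termination is only possible at the boundary).
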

\begin{proof}
    Each critical path starts at a node $s$ in $S$ Each edge of the path  corresponds to a vector $(1, c_1, c_2) \in \mathbb{Z}^3$ such that $(c_1, c_2) \in C_i$ for some $i \in [1, \ell - 1]$. Our path ends when we reach the boundary  of our vertex set $V = [1, \ell] \times [1, n^{1/3}r^{1/3}] \times [1, n^{1/3}r^{1/3}] \subseteq \mathbb{Z}^3$. 
    We must show that before any such path travels through $\hat{\ell}$ nodes before reaching the boundary. Note that $\hat{\ell} = \ell \cdot \frac{q}{2^7r^2} \leq \ell$, since $q \leq r^2$.

    Let $x$ be the random variable defined as $x := \sum_{i = 1}^{\hat{\ell}} \Vec{c}_{\lambda_i}$. Then by Proposition \ref{prop:small_c} and the linearity of expectation, 
    $$\mathbb{E}[x] \leq \hat{\ell} \cdot \frac{2^4r^3}{q} = \ell \cdot \frac{r}{2^3} = \frac{1}{2^3}n^{1/3}r^{1/3},$$
    where the final equality follows from the fact that $\ell = n^{1/3}r^{-2/3}$. 
    Then by Markov's inequality, $$\Pr\left[x \leq  \frac{1}{4} n^{1/3}r^{1/3} \right] \geq 1/2.$$

    Now we claim that if $x \leq \frac{1}{4} n^{1/3}r^{1/3}$, then for all $\pi \in \Pi$, $|\pi| \geq  \hat{\ell}$. Let $\pi_{s, t} \in \Pi$ be a critical path for critical pair $(s, t) \in P$. Let $s = (s_1, s_2, s_3) \in \mathbb{Z}^3$ and let $t = (t_1, t_2, t_3) \in \mathbb{Z}^3$. By our construction of critical paths $\Pi$, either $t_1 = \ell$ or $t + C_{t_1} \not \in V$. In the first case, $|\pi| = \ell \geq \hat{\ell}$, since path $\pi$ traveled through all $\ell$ layers $L_i$. In the second case, we must have that $\|(t_2, t_3) - (s_2, s_3)\| \geq \frac{1}{3}n^{1/3}r^{1/3}$ by our choice of set $S$. But since $x \leq \frac{1}{4} n^{1/3}r^{1/3}$, we conclude that $|\pi| \geq \hat{\ell}$. The claim follows. 
\end{proof}

We have shown that with nonzero probability, all our paths in $\Pi$ travel through the first $\hat{\ell}$ layers of $G$. However, we cannot guarantee that paths in $\Pi$ travel to layers $L_i$ with $i > \hat{\ell}$.  Because of this, we choose to remove all layers $L_i$, $i > \hat{\ell}$, from $G$. We replace all critical paths $\pi \in \Pi$ with the truncated subpath of $\pi$ containing only the first $\hat{\ell}$ nodes of $\pi$, and we update our critical pairs $P \subseteq V \times V$ to be the set of all pairs of endpoints of the updated paths in $\Pi$.

\subsection{Critical paths are unique paths}

We now verify that graph $G$ and paths $\Pi$ have the unique path property as stated in Lemma \ref{lem:ss_construction_lemma}. This will follow straightforwardly from the properties of our set of direction vectors $D$, particularly Proposition \ref{prop:vec_proj_order}.

\begin{lemma}[Unique  paths]
\label{lem:ss_unique_paths}
    For every $(s, t) \in P$, path $\pi_{s, t}$ is a unique $s \leadsto t$-path in $G$. 
\end{lemma}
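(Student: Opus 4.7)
The approach is to fix a critical pair $(s, t) \in P$ with associated direction vector $\vec{d}_k \in D$, and show via an exchange argument that any $s \leadsto t$-path $\pi'$ in $G$ must coincide with $\pi_{s,t}$. The argument parallels the proof of Lemma \ref{lem:unique_paths} for exact hopsets, but is simpler in two ways: we only need uniqueness of \emph{any} path rather than uniqueness of a shortest path (so Proposition \ref{prop:squared_prop} is unnecessary), and $|C_i| = 2$ makes the greedy argmax immediate to analyze.

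First I would note that $G$ is layered, with all edges going from $L_i$ to $L_{i+1}$, so any $s \leadsto t$-path has exactly as many edges as $\pi_{s,t}$, with exactly one edge between each consecutive pair of traversed layers. Write $\vec{u}_i, \vec{u}'_i \in C_i$ for the vectors corresponding to the $i$-th edges of $\pi_{s,t}$ and $\pi'$ respectively. By construction $\vec{u}_i = \arg\max_{\vec{c} \in C_i}\langle \vec{c}, \vec{d}_k\rangle$. Since both paths have the same endpoints, their total edge displacements agree, so $\sum_i \vec{u}'_i = \sum_i \vec{u}_i$; taking the inner product with $\vec{d}_k$ yields $\sum_i \langle \vec{u}_i, \vec{d}_k \rangle = \sum_i \langle \vec{u}'_i, \vec{d}_k \rangle$.

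The key technical step is to show that $\vec{u}_i$ is the \emph{strict} maximizer of $\langle \cdot, \vec{d}_k\rangle$ on $C_i = \{(0, 0), \vec{c}_{\lambda_i}\}$. Using $\vec{d}_k = \vec{w}_k$ and $\vec{c}_{\lambda_i} = \vec{w}_{\lambda_i + 1} - \vec{w}_{\lambda_i}$, we have $\langle \vec{c}_{\lambda_i}, \vec{d}_k \rangle = \langle \vec{w}_{\lambda_i + 1}, \vec{w}_k\rangle - \langle \vec{w}_{\lambda_i}, \vec{w}_k\rangle$. Proposition \ref{prop:vec_proj_order} says that the map $j \mapsto \langle \vec{w}_j, \vec{w}_k\rangle$ is strictly unimodal with peak at $j = k$, and thus cannot take the same value at two consecutive integer arguments; hence $\langle \vec{c}_{\lambda_i}, \vec{d}_k \rangle \neq 0$, which makes the argmax unique. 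This strict-unimodality-at-integers observation is the only point where the special structure of $W$ enters, and is the main (quite mild) obstacle. Combining, each term on the left of the displayed equality above is $\geq$ its counterpart on the right; equality of the sums therefore forces termwise equality, and uniqueness of the argmax then gives $\vec{u}'_i = \vec{u}_i$ for every $i$, hence $\pi' = \pi_{s,t}$. The whole argument is deterministic conditional on the sampled $\{\lambda_i\}$, so no probabilistic analysis is required.
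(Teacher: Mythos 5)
Your proposal is correct and follows essentially the same route as the paper: establish that the greedy maximizer of $\langle \cdot, \vec{d}_k\rangle$ over $C_i = \{(0,0), \vec{c}_{\lambda_i}\}$ is strict (via Lemma \ref{lem:convex_set} and Proposition \ref{prop:vec_proj_order}, which is exactly the strict unimodality of $j \mapsto \langle \vec{w}_j, \vec{w}_k\rangle$ you invoke), then use the layered structure and equality of total displacement to force termwise equality of the inner products and hence equality of the paths. The paper phrases the final step as a contradiction rather than a termwise-equality argument, but this is a cosmetic difference.
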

\begin{proof}
Fix a  direction vector $\Vec{d}_j$ in $D$. We claim that for all $i \in [1, \ell - 1]$, there is a unique vector $\Vec{c} \in C_i$ such that maximizes $\langle \Vec{c}, \Vec{d}_j \rangle$. Recall that $C_i \subseteq \mathbb{Z}^2$ contains exactly two vectors: the zero vector $(0, 0)$ and the vector $\Vec{c}_{\lambda_i} = \Vec{w}_{\lambda_i + 1} - \Vec{w}_{\lambda_i}$. 

Now assume  that $j \geq \lambda_i + 1$ and observe the following sequence of equivalent inequalities:
\begin{align*}
    \langle  \Vec{w}_{\lambda_i }, \Vec{w}_j  \rangle & < \langle \Vec{w}_{\lambda_i + 1}, \Vec{w}_j \rangle & \text{\qquad by Lemma \ref{lem:convex_set} and Proposition \ref{prop:vec_proj_order}}\\
    \langle  \Vec{0}, \Vec{w}_j  \rangle & < \langle \Vec{w}_{\lambda_i + 1} - \vec{w}_{\lambda_i}, \Vec{w}_j \rangle \\ 
    \langle  \Vec{0}, \Vec{d}_j  \rangle & < \langle \Vec{c}_{\lambda_i + 1}, \Vec{d}_j \rangle.
\end{align*}
When $j < \lambda_i + 1$, an identical argument shows that $\langle  \Vec{0}, \Vec{d}_j  \rangle  > \langle \Vec{c}_{\lambda_i + 1}, \Vec{d}_j \rangle$. In either case, there is a unique vector $\Vec{c} \in C_i$ maximizing $\langle \Vec{c}, \Vec{d}_j \rangle$ for all $i \in [1, \ell - 1]$.

Now fix a critical pair $(s, t) \in P$ that has $\Vec{d}_j \in D$ as its associated direction vector and $\pi_{s, t} = (v_1, \dots, v_k )$ as its critical path (where $s = v_1$ and $t = v_k$).  Let $f: \mathbb{R}^3 \mapsto \mathbb{R}^2$ be the function that projects each point in $\mathbb{R}^3$ onto the subspace corresponding to the last two coordinates of $\mathbb{R}^3$, i.e. $f(x, y, z) = (y, z)$ for all $(x, y, z) \in \mathbb{R}^3$.  

Let $\pi$ be an arbitrary $s \leadsto t$-path, and note that $|\pi| = |\pi_{s, t}|$ since $G$ is a layered directed graph. Let $\pi =  (v_1', \dots, v_k')$, where $s = v_1'$ and $t = v_k'$. By our construction of path $\pi_{s, t}$, we must have that  for all $i \in [1, k - 1]$, $$\langle f(v'_{i+1} - v'_i), \Vec{d}_j   \rangle \leq \langle f(v_{i+1} - v_i), \Vec{d}_j   \rangle.$$
Now suppose for the sake of contradiction that $\pi \neq \pi_{s, t}$. Then  $v'_{i+1} - v'_i \neq v_{i+1} - v_i$ for some $i \in [1, k - 1]$. Then $\langle f(v'_{i+1} - v'_i), \Vec{d}_j \hspace{1mm} \rangle < \langle f(v_{i+1} - v_i), \Vec{d}_j \hspace{1mm} \rangle$ by the above discussion. But then since $\pi$ and $\pi_{s, t}$ are both $s \leadsto t$-paths,
\begin{align*}
    \langle f(t - s), \Vec{d}_j \rangle = \sum_{i = 1}^{k-1 } \langle f(v'_{i+1} - v'_i), \Vec{d}_j  \rangle < \sum_{i = 1}^{k-1 } \langle f(v_{i+1} - v_i), \Vec{d}_j \rangle = \langle f(t - s), \Vec{d}_j  \rangle.
\end{align*}
This is a contradiction, so we conclude that $ \pi = \pi_{s, t}$. Then path $\pi_{s, t}$ is a unique $s \leadsto t$-path in $G$.
\end{proof}

\subsection{Critical Paths Intersection Properties}

Before finishing our proof of Lemma \ref{lem:ss_construction_lemma}, we will need to establish several properties of the critical paths in $\Pi$. 

\begin{prop}
Let $\pi_1, \pi_2 \in \Pi$ be two critical paths with the same corresponding direction vector $\vec{d} \in D$. Then $\pi_1 \cap \pi_2 = \emptyset$.
\label{prop:dist_paths_ss}
\end{prop}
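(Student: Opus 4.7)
The plan is to mirror the proof of Proposition \ref{prop:dist_dir} from the hopset construction. The key observation is that the greedy step in the construction of critical paths selects $\vec{u}_i := \text{argmax}_{\vec{c} \in C_i} \langle \vec{c}, \vec{d} \rangle$, and this choice depends only on $\vec{d}$ and $C_i$ — not on the current node of the path. Moreover, this argmax is unique whenever $\vec{d} \in D$: this is exactly the computation carried out inside the proof of Lemma \ref{lem:ss_unique_paths}, using Proposition \ref{prop:vec_proj_order} to compare $\langle \vec{c}_{\lambda_i}, \vec{d}_j\rangle$ with $\langle \vec{0}, \vec{d}_j\rangle$. So I would begin by citing these facts rather than redoing the trigonometric calculation.

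Next, let $v_i^j$ denote the node of $\pi_j$ in layer $L_i$ (for $j \in \{1,2\}$). I would first observe that $\pi_1$ and $\pi_2$ must have distinct start nodes $v_1^1 \neq v_1^2$: by the indexing of critical paths in Section \ref{subsec:ss_critical_paths}, each pair (start node, direction vector) determines at most one critical path, so two distinct paths sharing $\vec{d}$ must originate from distinct nodes of $S$. Then I would prove by induction on $i$ that $v_i^1 - v_i^2 = v_1^1 - v_1^2$ for every layer $i$ in which both paths have a node. The base case is trivial. For the inductive step, if both paths extend from layer $i$ to $i+1$, the greedy rule applied with the common direction vector $\vec{d}$ and the common edge-vector set $C_i$ selects the same unique maximizer $\vec{u}_i \in C_i$ for both, and hence
\[
v_{i+1}^1 - v_{i+1}^2 = (v_i^1 + (1, u_i^1, u_i^2)) - (v_i^2 + (1, u_i^1, u_i^2)) = v_i^1 - v_i^2.
\]
Since the difference is the nonzero vector $v_1^1 - v_1^2$ at every layer where both paths exist, $\pi_1 \cap \pi_2 = \emptyset$.

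The only subtlety to be careful about is that in the shortcut set construction, paths can terminate early (when $v_i + \vec{c} \notin L_{i+1}$ for the chosen $\vec{c}$), so $\pi_1$ and $\pi_2$ need not have the same length. This is harmless for the conclusion: wherever both paths do have a node in a common layer, the induction gives a nonzero offset between the two nodes, and at any layer reached by only one path, trivially no common node exists. Thus I do not expect any genuine technical obstacle; the proposition follows by a short direct argument, essentially identical in structure to Proposition \ref{prop:dist_dir} but substituting the inner-product–based greedy rule for the absolute-difference–based one.
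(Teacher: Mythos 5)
Your proposal is correct and matches the paper's proof in all essentials: the paper likewise observes that paths sharing a direction vector select the same vector $\vec{u}_i \in C_i$ at every layer, that distinct paths with the same direction vector have distinct start nodes, and concludes $v_i^1 = v_1^1 + \sum \vec{u}_i \neq v_1^2 + \sum \vec{u}_i = v_i^2$, handling the unequal-length issue by working up to $k = \min\{|\pi_1|, |\pi_2|\}$ just as your early-termination remark does.
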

\begin{proof}
    Let $k = \min\{ |\pi_1|, |\pi_2| \}$. 
    Let $v_i^j \in L_i$ denote the $i$th node of $\pi_j$, where $j \in \{1, 2\}$ and $i \in [1, k]$.  
    Note that since $\pi_1$ and $\pi_2$ share the same direction vector $\Vec{d}$, 
    edges $(v_i^1, v_{i+1}^1)$ and $(v_i^2, v_{i+1}^2)$ have the same corresponding vector $\Vec{u}_i \in C_i$ for all $i \in [1, k - 1]$  by our construction of $\pi_1$ and $\pi_2$. Also,  for each node $v \in L_1$,  $v$ belongs to at most one path $\pi \in \Pi$ with direction vector $\Vec{d}$, so $v_1^1 \neq v_1^2$. 
    Then for all $i \in [1, k]$,
    $$
    v_i^1 = v_1^1 + \sum_{i=1}^{i-1} \Vec{u}_i \neq v_1^2 + \sum_{i=1}^{i-1} \Vec{u}_i = v_i^2. \qedhere
    $$    
\end{proof}

Let $\pi_1, \pi_2 \in \Pi$ be two critical paths, and let $v \in V$ be a node in $G$. We say that paths $\pi_1$ and $\pi_2$ \textit{split} at $v$  if $v \in \pi_1 \cap \pi_2$ and 
the node following $v$ in $\pi_1$ is distinct from the node following $v$  in $\pi_2$, and we simply say that $\pi_1$ and $\pi_2$ \textit{split} if there exists some $v \in V$ such that $\pi_1$ and $\pi_2$ split at $v$. Note that since $\pi_1, \pi_2 \in \Pi$ are unique  paths in $G$, paths $\pi_1$ and $\pi_2$ can split at most once. 

\begin{lemma}
Fix a node $v \in L_i$, where $i \in [1, \ell - 1]$, and let $\pi_1, \pi_2 \in \Pi$ be critical paths with direction vectors $\vec{d}_j$ and $\vec{d}_k \in D$, $j, k \in [1, q]$, such that $v \in \pi_1$ and $v \in \pi_2$. Then paths $\pi_1$ and $\pi_2$  split at $v$ with probability at least $\frac{|j - k|}{ q }$. 
\label{lem:node_split_ss}
\end{lemma}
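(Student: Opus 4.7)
The plan is to compute the splitting probability directly from the single piece of randomness that controls edges out of layer $L_i$, namely the uniformly sampled index $\lambda_i \in [1, q-1]$. The two candidate edge vectors available between $L_i$ and $L_{i+1}$ are $(0,0)$ and $\vec{c}_{\lambda_i} = \vec{w}_{\lambda_i+1} - \vec{w}_{\lambda_i}$, so whether $\pi_1$ and $\pi_2$ split at $v$ reduces to determining which of these two vectors each path's greedy rule prefers.

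First I would reuse the characterization extracted inside the proof of Lemma \ref{lem:ss_unique_paths}: a path with direction vector $\vec{d}_m$ chooses $\vec{c}_{\lambda_i}$ at layer $i$ iff $\langle \vec{c}_{\lambda_i}, \vec{d}_m\rangle > 0$, which by Lemma \ref{lem:convex_set} and Proposition \ref{prop:vec_proj_order} is equivalent to $\lambda_i + 1 \leq m$, i.e.\ $\lambda_i < m$; otherwise the inner product is strictly negative and the greedy choice is $\vec{0}$. Assuming without loss of generality $j < k$, applying this rule to $\pi_1$ and $\pi_2$ shows that their outgoing vectors at $v$ differ exactly when $\lambda_i \geq j$ but $\lambda_i < k$, i.e.\ when
$$\lambda_i \in \{j, j+1, \ldots, k-1\}.$$
This set contains $k - j = |j-k|$ integers, and since $1 \leq j$ and $k \leq q$ it is contained in the support $[1, q-1]$ of $\lambda_i$.

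For the probability, I would argue that the event $v \in \pi_1 \cap \pi_2$ is determined entirely by $\lambda_1, \ldots, \lambda_{i-1}$ (which govern both paths up through layer $L_i$), while the event that $\pi_1, \pi_2$ split at $v$ depends only on $\lambda_i$. Since the $\lambda$'s are sampled independently across layers, $\lambda_i$ remains uniform on $[1, q-1]$ even after conditioning on $v \in \pi_1 \cap \pi_2$. Hence the conditional probability of splitting is
$$\Pr\bigl[\pi_1, \pi_2 \text{ split at } v \bigm| v \in \pi_1 \cap \pi_2\bigr] \;=\; \frac{k - j}{q - 1} \;\geq\; \frac{|j-k|}{q},$$
as claimed.

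I do not anticipate a real obstacle here: the construction was engineered so that, between consecutive layers, there are only two edge options and a single random threshold $\lambda_i$ that decides between them based on the direction index. The only points requiring a small amount of care are the independence/conditioning claim (which holds because $\lambda_i$ affects only the edges leaving $L_i$) and the verification that the splitting interval $\{j, \ldots, k-1\}$ sits inside the support of $\lambda_i$; both are immediate from the construction.
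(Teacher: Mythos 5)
Your proposal is correct and follows essentially the same route as the paper: identify the event $j \le \lambda_i < k$ (your set $\{j,\dots,k-1\}$), show via Lemma \ref{lem:convex_set} and Proposition \ref{prop:vec_proj_order} that under this event the two greedy rules pick $\vec{0}$ and $\vec{c}_{\lambda_i}$ respectively, and bound the probability by $\frac{k-j}{q-1} \ge \frac{|j-k|}{q}$ using uniformity of $\lambda_i$. Your explicit remark that conditioning on $v \in \pi_1 \cap \pi_2$ leaves $\lambda_i$ uniform (since that event depends only on $\lambda_1,\dots,\lambda_{i-1}$) matches the clarification the paper makes in its subsequent overlap lemma.
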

\begin{proof}
Fix a node $v \in L_i$, where $i \in [1, \ell - 1]$, and let $\pi_1, \pi_2 \in \Pi$ be critical paths with direction vectors $\vec{d}_j$ and $\vec{d}_k \in D$, $j, k \in [1, q]$, such that $v \in \pi_1$ and $v \in \pi_2$. By Proposition \ref{prop:dist_paths_ss}, $j \neq k$, and assume wlog that $j < k$.  Let $F$ be the event that the random variable $\lambda_i$ was sampled so that 
$$
j \leq \lambda_i < k.
$$
Our proof strategy is to show that $F$ implies that $\pi_1, \pi_2$ split at $v$, and then to show that $F$ occurs with the claimed probability.


\paragraph{$F$ implies that $\pi_1, \pi_2$ split at $v$.}
Assume that $F$ occurs. Then $j \leq \lambda_i < \lambda_i + 1 \leq k$. Now observe the following sequence of equivalent inequalities:
\begin{align*}
    \langle \Vec{w}_j, \Vec{w}_{\lambda_i} \rangle & > \langle \Vec{w}_j, \Vec{w}_{\lambda_i + 1} \rangle & \qquad \text{ by Lemma \ref{lem:convex_set} and Proposition \ref{prop:vec_proj_order}} \\
     \langle \Vec{w}_j, \Vec{0} \rangle & > \langle \Vec{w}_j, \Vec{w}_{\lambda_i + 1} - \Vec{w}_{\lambda_i}  \rangle \\
    \langle \Vec{d}_j, \Vec{0} \rangle & > \langle \Vec{d}_j, \vec{c}_{\lambda_i}  \rangle. 
\end{align*}
Since $C_i = \{\Vec{0}, \Vec{c}_{\lambda_i}\}$, by our construction of the critical paths in $\Pi$,  the above inequality $\langle \Vec{d}_j, \Vec{0} \rangle  > \langle \Vec{d}_j, \vec{c}_{\lambda_i}  \rangle$ implies that path $\pi_1$ takes an edge in $E_i$ corresponding to vector $\Vec{0}$. An identical argument will show that $\langle \Vec{d}_k, \Vec{0} \rangle < \langle \Vec{d}_k, \vec{c}_{\lambda_i}  \rangle$, so path $\pi_2$ takes an edge in $E_i$ corresponding to vector $\Vec{c}_{\lambda_i}$. Since paths $\pi_1$ and $\pi_2$ take different edges in $E_i$, they must split at $v$.

\paragraph{$F$ happens with good probability.} Random variable $\lambda_i$ is sampled uniformly from $[1, q - 1]$.  Then the event $F$ occurs with probability $\frac{|j - k|}{q-1} \geq \frac{|j - k|}{q}$.
\end{proof}

We will use Lemma \ref{lem:path_overlap_ss} to prove the following two lemmas, which capture key properties of our graph $G$. 

\begin{lemma}
    Let $\pi_1, \pi_2 \in \Pi$ be critical paths with associated direction vectors $\vec{d}_j, \vec{d}_k \in D$. Then $|\pi_1 \cap \pi_2| \leq \frac{8q \log n}{|j - k|}$ with probability at least $1 - n^{-8}$.
    \label{lem:path_overlap_ss}
\end{lemma}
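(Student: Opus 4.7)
The plan is to mirror the proof of Lemma \ref{lem:path_overlap} from the exact hopset section, substituting the per-layer splitting probability from Lemma \ref{lem:node_split_ss} for the one from Lemma \ref{lem:node_split}. There are essentially two ingredients: (i) that the intersection $\pi_1 \cap \pi_2$ forms a contiguous subpath, so a single split means the paths never meet again; and (ii) that conditioned on the paths sharing a node in some layer $L_i$, they split there with probability at least $|j-k|/q$, independently across layers.

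First I would argue contiguity of $\pi_1 \cap \pi_2$. Since $G$ is a layered acyclic directed graph and each path in $\Pi$ is the unique path between its endpoints (Lemma \ref{lem:ss_unique_paths}), once $\pi_1$ and $\pi_2$ share a node $v$ and then choose distinct next-layer edges (i.e., split at $v$), any later common node $u$ would produce two distinct $v \leadsto u$ subpaths embeddable into the unique paths from the endpoints, which is impossible. Hence the set of layers at which $\pi_1$ and $\pi_2$ meet is an initial segment of the layers at which they could possibly meet.

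Next I would bound the failure probability by exploiting the independence of the random variables $\lambda_1, \dots, \lambda_{\ell-1}$. Whether $v \in \pi_1 \cap \pi_2$ in layer $L_i$ is determined by $\lambda_1, \dots, \lambda_{i-1}$, while whether $\pi_1$ and $\pi_2$ split at $v$ is determined only by $\lambda_i$ (via Lemma \ref{lem:node_split_ss}). So, writing $T = \lceil 8q\log n / |j-k| \rceil$, the event $|\pi_1 \cap \pi_2| > T$ requires failing to split in at least $T$ consecutive layers once the paths first meet, which occurs with probability at most
\[
\left(1 - \frac{|j-k|}{q}\right)^{T} \;\le\; \exp\!\left(-\frac{|j-k|}{q} \cdot \frac{8 q \log n}{|j-k|}\right) \;=\; n^{-8}.
\]

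The argument is essentially routine given Lemma \ref{lem:node_split_ss}, so I do not expect a substantial obstacle; the only subtle point is the conditioning structure, namely that the events ``paths do not split at $v \in L_i$, conditioned on $v \in \pi_1 \cap \pi_2$'' across different layers $i$ are independent because they are measurable with respect to disjoint sets of the independent variables $\{\lambda_i\}$. Once that is laid out cleanly, the exponential concentration above finishes the proof.
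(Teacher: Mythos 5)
Your proposal is correct and follows essentially the same route as the paper: contiguity of $\pi_1 \cap \pi_2$ from unique paths in the layered DAG, the per-layer splitting probability of at least $|j-k|/q$ from Lemma \ref{lem:node_split_ss} with the split event at a layer-$i$ node depending only on $\lambda_i$, and the same geometric/exponential bound $(1-|j-k|/q)^{8q\log n/|j-k|} \le n^{-8}$. No meaningful differences from the paper's argument.
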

\begin{proof}
If $\pi_1 \cap \pi_2 = \emptyset$, then the claim is immediate, so assume there is a node $v \in \pi_1 \cap \pi_2$. Suppose $v \in L_i$. 
    By Lemma \ref{lem:node_split_ss}, $\pi_1$ and $\pi_2$ split at $v$ with probability at least $\frac{|j-k|}{q}$. Moreover, conditioning on $v \in \pi_1 \cap \pi_2$, the event that $\pi_1$ and $\pi_2$ split at $v$ given that $v \in \pi_1$ and $v \in \pi_2$ depends only on our choice of $\lambda_i$ and is independent of $\lambda_j$ for $j \neq i$. 

Since $\pi_1$ and $\pi_2$ are unique  paths in $G$, it follows that $\pi_1 \cap \pi_2$ is a contiguous subpath of $\pi_1$ and $\pi_2$. The number of nodes in the intersection $|\pi_1 \cap \pi_2|$ is $1$ more than the number of consecutive nodes at which $\pi_1, \pi_2$ intersect but do not split.
So by the above discussion, we have
\begin{align*}
    \Pr\left[|\pi_1 \cap \pi_2| > \frac{8q \log n}{ |j - k| }\right] & \leq \left(1 -\frac{|j - k|}{q}\right)^{\frac{8q \log n}{|j - k| }}\\
    & \leq e^{-\frac{|j - k|}{q} \cdot \frac{8q \log n}{|j - k|}}\\
    & \leq e^{-8 \log n} \\
    & = n^{-8}. 
\end{align*}
\end{proof}

Since $|\Pi| = p \leq n^2$, we can argue by a union bound that Proposition \ref{prop:path_lengths_ss} holds and Lemma \ref{lem:path_overlap} holds for all $\pi_1, \pi_2 \in \Pi$ simultaneously with probability at least $1 - 1/2 - n^{-4} > 0$. From now on, we will assume that this property holds for our constructed graph $G$. 

Once we specify our construction parameters $\ell$ and $q$, the following lemma will immediately imply the third property of $G$ as stated in Lemma \ref{lem:ss_construction_lemma}.

\begin{lemma}
    Let $x, y \in V$ be nodes in $G$ such that the unweighted distance from $x$ to $y$ in $G$ is $z$, where $0 < z < \infty$. Let $\Pi' \subseteq \Pi$ be the following set of critical paths:
    $$
    \Pi' := \{  \pi \in \Pi \mid x, y \in \pi \}.
    $$
    Then $|\Pi'| \leq \max\left\{ \frac{16q \log n}{z}, 1 \right\}$.
    \label{lem:ss_no_good_hops}
\end{lemma}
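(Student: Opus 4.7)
The plan is to mirror the argument used in Lemma \ref{lem:no_good_hops} for exact hopsets, adapted to the shortcut set setting where path overlap is now controlled by Lemma \ref{lem:path_overlap_ss}. The key observation is that paths in $\Pi'$ must have pairwise distinct direction vectors (otherwise they would be disjoint by Proposition \ref{prop:dist_paths_ss}), and since direction vectors are indexed by integers in $[1,q]$, the spread of indices over $\Pi'$ is large whenever $|\Pi'|$ is large. Combined with the fact that any two paths in $\Pi'$ must agree on a subpath containing both $x$ and $y$ (since they are unique $s \leadsto t$ paths in $G$ by Lemma \ref{lem:ss_unique_paths}), this forces their intersection to have at least $z+1$ nodes, which we can then play off against the upper bound from Lemma \ref{lem:path_overlap_ss}.

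Concretely, I would enumerate $\Pi' = \{\pi_1,\ldots,\pi_k\}$, let $\vec{d}_{i_1},\ldots,\vec{d}_{i_k} \in D$ be their associated direction vectors, and use Proposition \ref{prop:dist_paths_ss} to conclude that $i_1,\ldots,i_k$ are distinct. Next, let $a = \min_j i_j$ and $b = \max_j i_j$, so $b - a \geq k - 1$. Consider the two corresponding paths $\pi_a, \pi_b$. Because each is the unique path between its endpoints and both contain $x$ and $y$, their $x \leadsto y$ subpaths coincide and have exactly $z+1$ nodes, giving $|\pi_a \cap \pi_b| \geq z+1 > z$.

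Applying Lemma \ref{lem:path_overlap_ss} to the pair $\pi_a, \pi_b$ (which holds since we have already conditioned on its conclusion holding simultaneously for all pairs in $\Pi \times \Pi$ via the union bound immediately following Lemma \ref{lem:path_overlap_ss}), we get
$$z \leq |\pi_a \cap \pi_b| \leq \frac{8q\log n}{b-a} \leq \frac{8q\log n}{k-1}.$$
Rearranging yields $k \leq \frac{8q\log n}{z} + 1$. The final step is a small case split: if $\frac{8q \log n}{z} \geq 1$, then $k \leq \frac{16 q \log n}{z}$; otherwise the inequality forces $k \leq 1$ since $k \in \mathbb{Z}_{>0}$. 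Either way, $k \leq \max\{\frac{16q\log n}{z}, 1\}$, as desired.

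There is essentially no obstacle beyond the careful bookkeeping: the only subtle point is making sure we have selected the right pair of paths to invoke Lemma \ref{lem:path_overlap_ss} on, and choosing $(a,b)$ to be the \emph{extremal} index pair rather than an arbitrary pair is what gives the tight bound $b - a \geq k - 1$. This is the same trick used in the hopset proof, and carries over directly since the overlap bound in Lemma \ref{lem:path_overlap_ss} has the same $1/|j-k|$ shape as the $1/|d_1-d_2|$ bound in Lemma \ref{lem:path_overlap}.
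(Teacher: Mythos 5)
Your proposal is correct and follows essentially the same argument as the paper: pairwise-distinct direction-vector indices via Proposition \ref{prop:dist_paths_ss}, the extremal index pair giving $b-a \geq k-1$, the overlap bound of Lemma \ref{lem:path_overlap_ss} played against $|\pi_a \cap \pi_b| \geq z+1$ from path uniqueness, and the same final case split. No meaningful differences from the paper's proof.
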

\begin{proof}
    
Let $\Pi' = \{\pi_1, \dots, \pi_k\}$ and let $\Vec{d}_{\sigma_i} \in D$ be the  direction vector associated with $\pi_i$ for $i \in [1, k]$. By Proposition \ref{prop:dist_dir}, $\sigma_i \neq \sigma_j$ for $i \neq j$.  
Let $a = \min_{i \in [1, k]} \sigma_i$ and let $b = \max_{i \in [1, k]} \sigma_i$. Then $b-a \geq k-1$, so by Lemma \ref{lem:path_overlap_ss},
$$
|\pi_a \cap \pi_b| \leq \frac{8 q\log n}{ |b - a| } \leq \frac{8q \log n}{k-1}.
$$
Additionally, since $x, y \in \pi_a \cap \pi_b$ and $\pi_a$ and $\pi_b$ are unique  paths, it follows that $\pi_a[x, y] = \pi_b[x, y]$. Moreover, since the unweighted distance from $x$ to $y$ in $G$ is $z$, it follows that $|\pi_a \cap \pi_b| \geq |\pi_a[x, y]| = |\pi_b[x, y]| = z + 1$. Then taken together we have
$$
z \leq |\pi_a \cap \pi_b| \leq  \frac{8q \log n}{k-1}.
$$
Rearranging, we get that
$$k \leq \frac{8q \log n}{z} + 1 \leq   \max\left\{\frac{16q \log n}{z}, 1 \right\}.$$
(We obtain the final inequality from the following observation: if $\frac{8q\log n}{z} \geq 1$, then this implies $k \leq \frac{16q \log n}{z}$. Otherwise, if  $\frac{8q\log n}{z} < 1$, then  this implies that $k \leq 1$ since $k \in \mathbb{Z}$.)  
\end{proof}

\subsection{Finishing the proof of Lemma \ref{lem:ss_construction_lemma}}

Let $$r = \frac{n^{1/8}}{2^6 \log^{3/8}n}.$$ Now recall that $q = \Theta(r^2)$, and let $\alpha_1 > 0$ be a constant such that $\alpha_1r^2 \leq q \leq 2\alpha_1 r^2$. Then
$$
\ell = \frac{n^{1/3}}{r^{2/3}} \geq n^{1/4}\log^{1/4}n,
\qquad
\hat{\ell} =  \frac{q}{2^7r^2} \cdot \ell \geq \frac{\alpha_1 }{2^7} \cdot \ell,
\qquad
\text{and}
\qquad 
q \leq 2\alpha_1r^2 \leq \frac{\hat{\ell}}{16 \log n}.
$$
We now quickly verify that graph $G$ and associated critical paths $\Pi$ satisfy the properties of Lemma \ref{lem:ss_construction_lemma}:
\begin{itemize}
    \item By construction, $G$ has  $\hat{\ell} = \Theta\left(  n^{1/4}\log^{1/4}n \right)$ layers, and each path in $\Pi$ travels from the first layer to the $\hat{\ell}$th layer.  
    \item Each path $\pi \in \Pi$ is a unique  path between its endpoints in $G$ by Lemma \ref{lem:ss_unique_paths}. 
    \item Since $16q \log n \leq \ell$ and $\hopdist_G(u, v) \leq \ell$, Lemma \ref{lem:ss_no_good_hops} immediately implies that  for all $u, v \in V$, there are at most 
    $$ \max\left\{\frac{16q \log n}{\hopdist_G(u, v)}, 1 \right\}  \leq \frac{\ell}{\hopdist_G(u, v)} $$
    paths in $\Pi$ that contain both $u$ and $v$. 
    \item For all critical pairs $(s, t) \in P$, the unweighted distance from $s$ to $t$ in $G$ is $\ell - 1 \gg 16q \log n$. Then by Lemma \ref{lem:ss_no_good_hops}, each of the $|\Pi|$ paths constructed in Section \ref{subsec:ss_critical_paths} have distinct endpoints and thus are distinct. Then
    $$
    |\Pi| =|S||D|  \geq   \frac{n^{2/3}r^{2/3}q}{2^5} \geq \frac{\alpha_1 n^{2/3}r^{8/3}}{2^{5}}  \geq \frac{\alpha_1 n}{ 2^{21} \log n}.
    $$
    Recall that we initially let $p = n / (\alpha_0  \log n)$, for some unspecified constant $\alpha_0 > 0$. Choose $\alpha_0$ so that $p \leq |\Pi|$. 
\end{itemize}

We have shown that our directed graph $G$ satisfies the properties of Lemma \ref{lem:ss_construction_lemma} when $p = n / (\alpha_0 \log n)$. All that remains is to extend our construction to the regime $p \in [1, n / (\alpha_0 \log n)]$.



    

    


\paragraph{Extending the construction to $p \in [1, n / (\alpha_0 \log n)]$.}
We can extend our construction to the regime of $p \in [1, n / (\alpha_0 \log n)]$ using the same modification to $G$ that we performed on our exact hopset construction and that was previously used in the prior work of \cite{KP22a}. 
We will sketch the modification here. The proof of correctness follows from an argument identical to the proof of Lemma \ref{lem:small_p} in Appendix \ref{app:small_p}. 

We use $G(n, p)$ denote an instance of our originally constructed graph $G$ with input parameters $n$ and $p = n / (\alpha_0 \log n)$.  Let $n$ be a sufficiently large integer and let $p \in [1, n]$. Let $G_1 := G(p_1, p_1 / (\alpha_0 \log p_1))$ where $p = \Theta(p_1 / \log p_1)$ and $p_1$ divides $n$.  Now for each node $v$ in $G_1$, replace $v$ with a directed path $\pi_v = (u_1^v, \dots, u_{k}^v)$ with $ k = n/p_1$ nodes. For all $v \in V$, assign weight $0$ to all edges in $\pi_v$. For each edge $(v_1, v_2)$ originally in $G_1$, add edge $(u_{k}^{v_1}, u_1^{v_2})$ to the graph. Let $G_2$ be the resulting graph, and let $\Pi_2$ be the updated set of critical paths. This completes the modification.

\bibliographystyle{plain}
\bibliography{refs}

\appendix

\section{Proof of Lemma \ref{lem:small_p}}
\label{app:small_p}
Let $n$ be a sufficiently large integer, and let $p \in [1, n]$ such that $p$ divides $n$. Let $p_1 := 2^{30}p$.  Recall that to obtain $G_2$, we first construct graph $G_1 := G(p_1, p)$, where $G(p_1, p)$ denotes our initial construction of the graph $G_1$ in Lemma \ref{lem:construction_lemma} on $p_1$ nodes with an associated set $\Pi_1$ of $|\Pi_1| = p$ paths. We let $V_1$ denote the nodes and $E_1$ denote the edges of $G_1$.  Let $P_1$ be the set of critical pairs associated with $G_1$,  and let $\Pi_1$ be the corresponding canonical paths. 
Let $\ell$ and $q$ be the construction parameters used to construct $G(p_1, p_1)$. Then
$$
2^{6}p \leq |P_1| \leq 2^{10}p, \text{ \qquad } \ell = \frac{p_1}{2^{10}p^{1/2} \log^{1/2}p_1}, \text{ \qquad and \qquad } q = \frac{p_1}{2^{20}p^{1/2} \log^{3/2}p_1}. 
$$
Note that for all $\pi \in \Pi_1$, we have that $|\pi| = \ell$. 
We then modified $G_1$ by replacing each node $v \in V_1$  with a path $\pi_v = (u_1^v, \dots, u_k^v)$ with $k = n/p_1$ nodes. If an edge $(v_1, v_2) \in E_1$ was originally in $G_1$,  we replaced it with an edge $(u_k^{v_1}, u_1^{v_2})$. This gave us our final $n$-node  graph $G_2 = (V_2, E_2)$.

Let $P_2 := \{(u_1^s, u_{k}^t) \mid (s, t) \in P_1\}$, and for all $(u_1^s, u_{k}^t) \in P_2$, let $\pi_{u_1^s, u_{k}^t}$ be the path obtained by taking $\pi_{s, t} \in \Pi_1$ and replacing each node $v \in \pi_{s, t}$ with the path $\pi_v$. Let $\Pi_2 := \{\pi_{s, t} \mid (s, t) \in P_2\}$. 
Then it is clear from our construction of $G_2$ that for all $(s, t) \in P_2$, the path  $\pi_{s, t} \in \Pi_2$ is a unique shortest $s\leadsto t$-path in $G_2$.  Additionally, for all $(s, t) \in P_2$, the number of nodes in path $\pi_{s, t} \in \Pi_2$ is at least $|\pi_{s, t}| \geq \ell_2$, where $$\ell_2 := \ell \cdot \frac{n}{p_1} = \frac{n}{2^{10}p^{1/2}\log^{1/2}p_1} = \Theta\left( \frac{n}{p^{1/2} \log^{1/2} p} \right).$$

\noindent
We now quickly verify that graph $G_2$ and associated critical paths $\Pi_2$ satisfy the properties of Lemma \ref{lem:construction_lemma}:
\begin{itemize}
    \item By construction, $G_2$ has $\ell_2 = \Theta\left(\frac{n}{p^{1/2} \log^{1/2} n}\right)$ layers, and each path in $\Pi$ travels from the first layer to the last layer. 
    \item Each path $\pi \in \Pi$ is a unique shortest path between its endpoints in $G_2$. This follows from  Lemma \ref{lem:unique_paths} and the observation that our path replacement step cannot increase the number of paths between pairs of nodes in $G_2$. 
    \item What remains is to show that for any two nodes $u, v \in V_2$, there are at most $\frac{\ell_2}{\hopdist_{G_2}(u, v)}$ paths in $\Pi_2$ that contain both $u$ and $v$. We prove this in two cases:
    \begin{itemize}
        \item \textbf{Case 1:} $u, v \in \pi_{w}$ for some $w \in V_1$. Note that $\hopdist_G(u, v) \leq n/p_1$. By Proposition \ref{prop:dist_dir}, the number of paths $\pi \in \Pi_1$ such that $w \in \pi$ is at most the number of direction vectors 
        \begin{align*}
            |D| & \leq \frac{p_1 q}{4\ell^2} \leq 
        \frac{p^{1/2}}{4 \log^{1/2} p_1}
        \leq
        \frac{p^{1/2}}{4 \log^{1/2} p_1} \cdot 
        \frac{n/p_1}{\hopdist_{G_2}(u, v)} \\
        &
        \leq
        \frac{n}{2^{30} p^{1/2} \log^{1/2}p} \cdot \frac{1}{\hopdist_{G_2}(u, v)} 
        \leq
        \frac{\ell_2}{\hopdist_{G_2}(u, v)}.
        \end{align*}
        Then by our construction of $G_2$, the number of paths $\pi \in \Pi_2$ such that $u, v \in \pi$ is at most $|D| \leq \frac{\ell_2}{\hopdist_{G_2}(u, v)}$. 
        
    \item \textbf{Case 2:} $u \in \pi_{w_1}$ and $v \in \pi_{w_2}$ for distinct $w_1, w_2 \in V_1$. By Lemma \ref{lem:no_good_hops}, the number of paths $\pi \in \Pi_1$ such that $w_1, w_2 \in \pi$ is at most
    \begin{align*}
    \max\left\{\frac{16 q \log n}{\hopdist_{G_1}(w_1, w_2)}, 1 \right\} & \leq \max \left\{
    \frac{\ell}{\hopdist_{G_1}(w_1, w_2)}, 1 
    \right\}
    \leq 
    \max \left\{
    \frac{\ell_2}{\hopdist_{G_2}(u, v)}, 1 
    \right\} \\
    & \leq \frac{\ell_2}{\hopdist_{G_2}(u, v)}.
    \end{align*}
    (The final inequality follows from the fact that $\hopdist_{G_2}(u, v) \leq \ell_2$.)
    \end{itemize}
\end{itemize}

\section{  Extending our shortcut set lower bound     }
\label{app:path_subsample}

To extend our shortcut set lower bound so that it holds in the regime of $p > n/(\alpha \log n)$, we will prove a more general statement about the behavior of the extremal function of shortcut sets. We write $\texttt{S}(n, p)$ for the smallest integer $D$ such that every $n$-node graph $G$ has a shortcut set $H$ of size $|H| \leq p$ such that $G \cup H$ has diameter at most $D$. 



\begin{lemma}
For all positive integers $n$ and $x \in [1, n]$,$$
\texttt{S}\left(n, \hspace{1mm} p/x \right) \leq  x \cdot \texttt{S}(n, \hspace{1mm} p)   \cdot 16\log n.
$$
\label{lem:path_subsample}
\end{lemma}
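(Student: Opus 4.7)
The plan is to prove this sparsification statement via random subsampling of an optimal shortcut set. Given an arbitrary $n$-node graph $G$, let $D := \texttt{S}(n, p)$ and let $H^{\star}$ be a shortcut set of $G$ with $|H^{\star}| \leq p$ achieving diameter at most $D$. I would construct $H \subseteq H^{\star}$ by including each edge of $H^{\star}$ independently with probability $q := 1/(2x)$; a standard Chernoff bound then gives $|H| \leq p/x$ with high probability, and the regime where $p/x$ is too small for Chernoff to apply can be handled by an ad-hoc argument.

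For the diameter bound, fix a pair $(s,t)$ that is reachable in $G$ and consider the shortest $s \leadsto t$ path $\pi$ in $G \cup H^{\star}$, which has at most $D$ edges. Let $e_1, \ldots, e_k$ be the $H^{\star}$-edges of $\pi$ in order, with $k \leq D$. I would partition this sequence into blocks of $8x \log n$ consecutive hopset edges. The probability that no edge in a given block is sampled into $H$ is at most $(1-q)^{8 x \log n} \leq n^{-4}$, so a union bound over all blocks and all $\binom{n}{2}$ pairs ensures that, with high probability, every block on every shortest path (in $G \cup H^{\star}$) contains at least one surviving hopset edge.

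The main obstacle is then to translate this ``one surviving edge per block'' property into an explicit $s \leadsto t$ path of length at most $16 x D \log n$ in $G \cup H$. The naive replacement of each lost hopset edge $(u,v)$ by its underlying $G$-path fails, because shortcut sets preserve only reachability and $\dist_G(u,v)$ may be as large as $n$. The natural strategy is instead to bridge between consecutive surviving hopset edges by following the corresponding subpath of $\pi$ itself: since each block spans at most $8x \log n$ hopset edges and the total number of blocks along $\pi$ is at most $D / (8x \log n)$, the amortized budget for detours is roughly $x^2 \log^2 n$ edges per block. Making this bookkeeping rigorous — for example by a clean induction on block length, or by a direct combinatorial argument that exploits the subpath structure of $\pi$ in $G \cup H^{\star}$ rather than in $G$ alone — is the delicate technical step. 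Once the per-block bridging lemma is in hand, combining it with the size bound and the preceding union bound yields, by a standard probabilistic existence argument, a shortcut set $H$ of size at most $p/x$ such that $G \cup H$ has diameter at most $16 x D \log n$, which is the statement of the lemma.
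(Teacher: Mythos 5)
There is a genuine gap, and it is exactly at the step you flag as ``delicate'': the per-block bridging lemma does not exist as stated, and the proposed bridge is not available. Between two surviving edges of $H$, the subpath of $\pi$ consists of $G$-edges \emph{together with the discarded $H^{\star}$-edges}, and the discarded edges are simply absent from $G \cup H$; so ``following the corresponding subpath of $\pi$ itself'' is not a legal route, and replacing a discarded edge $(u,v)$ by some $u \leadsto v$ path in $G \cup H$ is precisely the step you already observed has no length bound (a shortcut edge certifies only reachability, and nothing prevents every $u \leadsto v$ path in $G\cup H$ from having $\Omega(n)$ hops). Consequently the amortized ``$x^{2}\log^{2}n$ edges per block'' bookkeeping has no foundation: the cost of a single missing edge is unbounded. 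More fundamentally, starting from a witness $H^{\star}$ and subsampling its edges is a shaky plan, since there is no monotonicity statement guaranteeing that a random $1/(2x)$-fraction of a good shortcut set remains good up to an $O(x\log n)$ factor; an optimal $H^{\star}$ can be a rigid (e.g.\ hierarchical) structure in which a discarded edge has no short substitute in $G\cup H$.

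The paper avoids this entirely by shrinking the \emph{instance} rather than sparsifying a witness. It samples each node of $G$ with probability $1/(2x)$, builds an auxiliary graph $G_1$ on the $\le n/x$ sampled nodes with an edge $(u,v)$ whenever $\dist_G(u,v)\le 8x\log n$, and applies the extremal function to this smaller instance, using $\texttt{S}(n/x,\,p/x)\le \texttt{S}(n,\,p)$, to obtain $H_1$ with $|H_1|\le p/x$ whose edges lie in the transitive closure of $G$. It then argues that $H_1$ is already a shortcut set for $G$: with high probability every $G$-shortest path of length at least $4x\log n$ contains a sampled node, so any reachable pair is routed by a prefix and suffix of at most $4x\log n$ hops in $G$ plus at most $\texttt{S}(n,p)$ hops of $G_1\cup H_1$, each of which expands to at most $8x\log n$ hops in $G$, giving the $16\,x\,\texttt{S}(n,p)\log n$ bound. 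If you want to repair your argument, the fix is to adopt this change of viewpoint: apply the extremal bound to a subsampled-vertex auxiliary graph, not to a subsampled shortcut set.
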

Lemma \ref{lem:path_subsample} essentially states that if we decrease the number of shortcuts allowed in our shortcut set, the extremal function controlling the worst-case size of shortcut sets won't increase by too much. We will use this lemma in the opposite direction to argue that our lower bound of $\texttt{S}(n, n/(\alpha \log n)) = \Omega\left(\frac{n^{1/4}}{\log^{1/4}n}\right)$ (where $\alpha > 0$ is a sufficiently large constant) that we obtained from our shortcut set construction in Lemma \ref{lem:ss_construction_lemma} implies lower bounds for shortcut sets with greater than $n/(\alpha \log n)$ shortcuts.

Let $p \in \left[\frac{n}{\alpha \log n}, n^{5/4}\right]$, and let $x = \frac{p}{n/(\alpha \log n)}$. Then by applying our lower bound from Lemma \ref{lem:ss_construction_lemma} to Lemma \ref{lem:path_subsample} we find that
$$
\Omega\left(\frac{n^{1/4}}{\log^{1/4}n}\right) \leq \texttt{S}\left(n, \frac{n}{\alpha \log n}\right) =
\texttt{S}\left(n, \frac{p}{x}\right) \leq x \cdot \texttt{S}(n, p) \cdot 16 \log n.
$$
Rearranging, we find that
$$
\texttt{S}(n, p) =  \Omega\left( \frac{n^{5/4}}{  p \log^{9/4} n } \right),
$$
as claimed in Theorem \ref{thm:ss}. 
\\

\noindent
We now prove Lemma \ref{lem:path_subsample}, which will follow from a simple path subsampling argument.

\begin{proof}[Proof of Lemma \ref{lem:path_subsample}]
Let $n$ be a positive integer, $p \in [1, n^2]$, and $x \in [1, n]$. Let $G = (V, E)$ be a graph on $n$ nodes. We subsample nodes of $G$ to construct a smaller graph $G_1 = (V_1, E_1)$ as follows.
\begin{itemize}
    \item Independently sample each node $V$ into set $V_1$ with probability $\frac{1}{2x}$. Then $\mathbb{E}[|V_1|] = \frac{n}{2x}$. 
    \item For all pairs of nodes $u, v \in V$ such that $\dist_G(u, v) \leq  8x \log n$, add directed edge $(u, v)$ to $E_1$. This completes the construction of $G_1 = (V_1, E_1)$.
\end{itemize}
By Markov's inequality, $|V_1| \leq n/x$ with probability at least $\frac{1}{2}$.
Assume for now that this does indeed hold, and we have $|V_1| \leq n/x$.  Then using  the fact that $\texttt{S}(c_1n, c_1p) \leq \texttt{S}(c_2n, c_2p)$ if $c_1 \leq c_2$, we find that there exists a shortcut set $H_1$ of size $|H_1| \leq p/x$ such that the diameter of $G_1 \cup H_1$ is at most 
$$
\texttt{diameter}(G_1 \cup H_1) \leq \texttt{S}(|V_1|, |H_1|) \leq 
\texttt{S}(n/x, p/x) \leq  \texttt{S}(n, p). $$

Now we claim that this implies that $\texttt{diameter}(G \cup H_1) \leq x \cdot \texttt{S}(n, p) \cdot 16 \log n$. For every pair of nodes $u, v \in V$ such that $v$ is reachable from $u$ in $G$, fix a  shortest $u \leadsto v$-path $\pi_{u, v}$ in $G$.  Then the following statement holds with high probability: for all pairs of nodes $u, v \in V$ such that $\dist_G(u, v) \geq  4x \log n$, path $\pi_{u, v}$ contains a node that was sampled into $V_2$, i.e.  $\pi_{u, v} \cap V_2 \neq \emptyset$. From now on, we will assume this property holds for our sampled set $V_2$.

Consider a pair of nodes $u, v \in V$ such that $4x \log n \leq \dist_G(u, v)$ and $v$ is reachable from $u$ in $G$, and let $\pi$ be an $u \leadsto v$-path in $G$. Let $u'$ be be the node in $\pi \cap V_1$ closest to $u$, and let $v'$ be the node in $\pi \cap V_1$ closest to $v$. By our construction of $G_1$ and the above discussion, there must be a $u' \leadsto v'$-path in $G_1$. Moreover, $\dist_{G_1 \cup H_1}(u', v') \leq \texttt{diameter}(G_1 \cup H_1) \leq \texttt{S}(n, p)$. Then since  $\dist_G(x, y) \leq 8x \log n$ for all  edges $(x, y) \in E_1$ in $G_1$, it follows that $$\dist_{G \cup H_1}(u', v') \leq \dist_{G_1 \cup H_1}(u', v') \cdot 8x \log n \leq \texttt{S}(n, p) \cdot 8x \log n.$$
Then putting it all together,
\begin{align*}
\dist_{G \cup H_1}(u, v) & \leq \dist_{G \cup H_1}(u, u') + \dist_{G \cup H_1}(u', v') + \dist_{G \cup H_1}(v', v) \\
& \leq 4x \log n + \dist_{G \cup H_1}(u', v') + 4 x \log n \\
& \leq  \dist_{G_1 \cup H_1}(u', v') \cdot 8x \log n + 8 x \log n \\
& \leq \texttt{S}(n, p) \cdot 8x \log n + 8x \log n \\
& \leq x \cdot \texttt{S}(n, p) \cdot 16 \log n.
\end{align*}

Finally, to conclude the analysis, we made two assumptions: (1) that $|V_1| \le n/x$, which occurs with probability $1/2$, and (2) that for all $u, v$ that are sufficiently far apart, we sampled a node on a $u \leadsto v$ path, which happens with high probability, i.e., $1 - 1/\text{poly}(n)$.
By an intersection bound, there is positive probability that both events happen at the same time.  So a graph $G_1$ exists as described, completing the proof.
\end{proof}

\end{document}